\titleformat*{\section}{\bf\Large\center}
\theoremstyle{definition}
\newtheorem{assumption}{Assumption}
\newtheorem*{theorem*}{Theorem}
\newtheorem{theorem}{Theorem}
\newtheorem*{rmk*}{remark}
\newtheorem{proposition}{Proposition}
\newtheorem{lemma}{Lemma}
\newtheorem{example}{Example}
\newtheorem{condition}{Condition}
\newtheorem{corollary}{Corollary}
\newtheorem*{corollary*}{Corollary}
\apptocmd{\sloppy}{\hbadness 10000\relax}{}{} 
\newcites{sec}{References}
\def \E  {\mathbb{E}}
\def \P{\mathbb{P}} 
\def \V {\mathbb{V}}
\def \bfE {\mathbb{E}}
\newcommand{\indep}{\perp \!\!\! \perp}
\begin{document}

\singlespacing

\title{\bf  
Quantifying Individual Risk for Binary Outcomes} 

\author[1]{Peng Wu}
\author[2]{Peng Ding}
\author[1]{Zhi Geng}
\author[3]{Yue Liu\thanks{Corresponding author: liuyue\_stats@ruc.edu.cn. This article is to appear in JRSSB.}}
\affil[1]{\small School of Mathematics and Statistics, Beijing Technology and Business University, 100048, China}
\affil[2]{\small Department of Statistics, University of Californi, Berkeley, CA 94720, USA}  
\affil[3]{\small Center for Applied Statistics and School of Statistics, Renmin University of China, 100872,  China}  



\date{}

\maketitle

\begin{abstract}
Understanding treatment effect heterogeneity is crucial for reliable decision-making in treatment evaluation and selection. The conditional average treatment effect (CATE) is widely used to capture treatment effect heterogeneity induced by observed covariates and to design individualized treatment policies. However, it is an average metric within subpopulations, which prevents it from revealing individual risk, potentially leading to misleading results. This article fills this gap by examining individual risk for binary outcomes, specifically focusing on the fraction negatively affected (FNA), a metric that quantifies the percentage of individuals experiencing worse outcomes under treatment compared with control.  Even under the strong ignorability assumption, FNA is still unidentifiable, and the existing Fr\'{e}chet--Hoeffding bounds are often too wide and attainable only under extreme data-generating processes. By invoking mild conditions on the value range of the Pearson correlation coefficient between potential outcomes, we obtain improved bounds compared with the Fr\'{e}chet--Hoeffding bounds. We show that paradoxically,  even with a positive CATE, the lower bound on FNA can be positive, i.e., in the best-case scenario, many individuals will be harmed if they receive treatment. Additionally, we establish a nonparametric sensitivity analysis framework for FNA using the Pearson correlation coefficient as the sensitivity parameter. Furthermore, we propose nonparametric estimators for the refined FNA bounds and prove their consistency and asymptotic normality. We use simulation to evaluate the performance of the proposed estimators and apply the method to 
a canonical observational study.
\end{abstract}

\medskip 
\noindent 
{\bf Keywords}: 
Causal Inference; Fraction Negatively Affected; Partial Identification; Sensitivity Analysis


\newpage

\onehalfspacing

\section{Introduction}  \label{sec1}
Understanding treatment effect heterogeneity is essential for reliable decision-making in treatment evaluation and selection~\citep{murphy2003optimal, Chakraborty-Moodie2013, Kosorok+Laber:2019, chernozhukov-etal2022}.
 Generally, heterogeneity in individual treatment effects consists of two components: (i) systematic heterogeneity explained by observed covariates, and (ii) idiosyncratic heterogeneity not explained by observed covariates~\citep{Heckman-etal1997, Djebbari-Smith2008, Ding-etal2019}. 
Existing work utilized the conditional average treatment effect (CATE) to capture treatment effect heterogeneity and design tailored treatment policies~\citep[see e.g.,][]{Imai-Strauss2011, Kent-etal2018, 2018Who, 2021Policy}.

However, as an average metric in subpopulations defined by covariate values, CATE reflects only the systematic heterogeneity. It fails to capture all individual heterogeneity,   
  which is critical for reliable decision-making~\citep{Bolger-etal2019, Lei-Candes2021, Jin-etal2023, Wu-Mao2025}. Consequently, CATE is inadequate for revealing individual  risk associated with treatment in subpopulations and, in some cases, can be misleading~\citep{2022nathan, Eli-etal2022, Mueller-Pearl2023, MuellerPearl2023-CausalInference}. For instance, consider a study with two subpopulations. 
 In one subpopulation, a drug benefits 80\% but harms the remaining 20\%. For the other subpopulation, the drug benefits 60\%  while the rest are unaffected, experiencing neither benefit nor harm. Although the CATE is 0.6 in both subpopulations, recommendations for administering the drug to the first subpopulation should be reconsidered due to the potential harm. 

In this article, we explore individual risk arising from the full  individual heterogeneity.  We investigate the fraction negatively affected (FNA) by treatment for binary outcomes~\citep{2022nathan}. Also known as the treatment harm rate~\citep{Zhang-etal2013}, 
FNA quantifies the percentage of individuals experiencing worse outcomes when receiving treatment compared with control. 
It serves as a metric for assessing individual risk over the population. 
Existing work discussed the identifiability of FNA under assumptions for potential outcomes, like monotonicity~\citep{Huang-etal2012}, conditional independence~\citep{Shen-etal2013}, and latent conditional independence with parametric model restrictions~\citep{Yin-etal2018}.  However, these assumptions are difficult to meet in practice.  Given that FNA involves the joint distribution of  the potential outcomes whereas we observe only their marginals,  FNA is generally unidentifiable, even in randomized controlled trials (RCTs). 

Rather than aiming to identify FNA directly, several studies focus on establishing its bounds. \cite{Gadbury2004} derived bounds on FNA in RCTs without covariates.    
 In observational studies with unconfoundedness, \cite{Zhang-etal2013} improved the bounds by incorporating covariates and applying the Fr\'{e}chet--Hoeffding  inequality within each stratum defined by covariates. \cite{Yin2018-second} employed a secondary outcome to achieve tighter bounds. 
\cite{2022nathan} discussed the bounds of FNA induced by a treatment policy. 


We demonstrate that while the 
 Fr\'{e}chet--Hoeffding bounds on FNA are sharp, they are attainable only under extreme data-generating processes.
When the data-generating process is not extreme, the upper bound (the worst-case) becomes overly conservative, and the lower bound (the best-case) becomes overly optimistic.  
 Additionally, we observe that the lower bound always remains zero when  CATE exceeds zero. As the treatment is typically assigned only when CATE is positive, the lower bound provides no information on FNA. Consequently, improving the bounds is essential for ensuring more informative and trustworthy decision-making.

   We make three main contributions. First, we obtain improved bounds compared with previous studies by introducing a mild assumption on the range of the Pearson correlation coefficient (PCC) between potential outcomes. This assumption is easy to integrate with domain knowledge, as experienced practitioners can provide rough ranges for the PCC.  
In addition, we find that paradoxically, even with a positive CATE, the lower bound on FNA can be positive when the PCC is small, i.e., in the best-case scenario, many individuals will be harmed by treatment.  
The improved bounds encompass several previous (partial) identifiability results in certain degenerate situations, including studies by \cite{Shen-etal2013}, \cite{Zhang-etal2013}, and \cite{2022nathan}. 
Moreover, the improved lower and upper bounds delineate the best and worst cases for FNA, carrying practical implications for decision-making, policy evaluation, and treatment policy learning.
Second, we present a nonparametric sensitivity analysis framework, utilizing the PCC as the sensitivity parameter. This framework naturally arises from bounding FNA under assumptions on the PCC. 
Third, we propose nonparametric estimators for the bounds on FNA  by employing the  semiparametric theory~\citep{Bickel-etal1993}, and establish their consistency and asymptotic normality.
We validate our findings through extensive simulations and real-world applications.  
The remainder of this paper is organized as follows. Section \ref{sec2} describes the problem of interest and the Fr\'{e}chet--Hoeffding bounds on FNA. Section \ref{sec3} introduces a novel sensitivity analysis method for FNA given covariates, presents the improved bounds, and  illustrates its practical usage. 
 Section \ref{sec5} extends the proposed method to the marginal FNA, 
 proposes estimators for the bounds on FNA, and shows their large sample properties. Section \ref{sec7} assesses the finite-sample performance of the proposed method through a simulation study. Section \ref{sec8} demonstrates our methods with an empirical example. Section \ref{sec9} concludes with a discussion. 
The Supplementary Material contains additional results and technical details.
We provide the replication materials at \url{https://github.com/pengwu1224/Quantifying-Individual-Risk-for-Binary-Outcome}.


\section{Setup} \label{sec2}

\subsection{Notation} \label{sec2-1}
We adopt the potential outcomes framework to define causal effects. Let 
$\{ (X_i, A_i, Y_i^1, Y_i^0): i=1, ..., n\}$ represent an independent and identically distributed sample from a superpopulation $\P$. For each unit $i$,  $X_i \in \mathcal{X}\subset \mathbb{R}^p$ is a vector of pre-treatment covariates, $A_i \in \mathcal{A} = \{0,  1\}$ is a binary treatment where $A_i=1$ indicates receiving treatment and $A_i=0$  indicates receiving control, and $Y_i^a \in \mathcal{Y} = \{0, 1\}$ is the potential outcome if the unit were to receive the treatment $a$ for $a = 0, 1$. 
Since each unit is assigned to either the treatment or control, the observed outcome equals $Y_i = (1-A_i)Y_i^0 + A_i Y_i^1$.  
Let $Y_i=1$ represent a favorable outcome (e.g., survival) and $Y_i=0$ an unfavorable outcome (e.g., death). 

Define the individual treatment effect (ITE) as $\tau_i = Y_i^1-Y_i^0$, with $\tau_i>0$ indicating that receiving treatment is beneficial for unit $i$, and vice versa.
Define the conditional average treatment effect (CATE)  as $ \tau(x) = \bfE[\tau_i \mid X_i =x] = \bfE[ Y_i^1 - Y_i^0 \mid X_i =x],$  
	which represents 
the difference in the conditional mean outcomes between receiving treatment and control. The average treatment effect (ATE) is $\tau = \bfE[Y_i^1 -Y_i^0]$.   
Throughout, we drop the subscript $i$ for a generic unit, and maintain the commonly-used strong ignorability assumption below. 
\begin{assumption} \label{assumption-1} 
(a) Unconfoundedness: $(Y^0 ,Y^1)\indep A \mid X $;   
(b) Overlap: $\epsilon <  e(x) := \P(A=1 \mid X=x)<1- \epsilon$, where $0< \epsilon < 1/2$ is a constant. 
\end{assumption}


 Assumption \ref{assumption-1}(a) requires that $X$ includes all confounders affecting both the outcome and treatment.   Assumption \ref{assumption-1}(b) requires that units with any given $X$ have a positive probability of receiving treatment. 
 Under Assumption \ref{assumption-1},
 $$\tau(x) =  \mu_1(x) - \mu_0(x), \text{ where } \mu_a(x) = \bfE[Y \mid X = x, A = a], \quad a = 0, 1.$$  

Generally, $\tau(x)$ is the most fine-grained identifiable average treatment effect. 
 However, $\tau(x)$ is a subpopulation-level metric, specifically for the subpopulation characterized by $X=x$, rather than at the individual level~\citep{pearl2016}. It overlooks inherent individual variability within the subpopulation, which may be critical for individualized decision-making~\citep{Lei-Candes2021}. To illustrate, we provide the following example to illustrate its limitation.

\begin{example} \label{ex-tmp} Consider a case where covariate $X$, treatment $A$, and outcome $Y$ are all binary variables. 
The treatment \( A \) indicates whether a drug is administered (\( A=1 \)) or not (\( A=0 \)), and \( Y=0 \) and \( Y=1 \) denote death and survival, respectively.  
Suppose that for subgroup with $X=0$, the drug benefits 80\% (\(Y^1 - Y^0 = 1\)) but harms the remaining 20\% (\(Y^1 - Y^0 = -1\)). For subgroup with $X=1$, the drug benefits 60\% (\(Y^1 - Y^0 = 1\)), while the rest are unaffected (\(Y^1 - Y^0 = 0\)). Although the CATEs are the same in both subpopulations (\(\tau(0) = \tau(1) = 0.6\)), recommendations for administering the drug to subgroup with $X=0$ should be carefully considered due to the potential harm, despite the positive CATE.  
\end{example}

Example \ref{ex-tmp} shows that $\tau(x)$ cannot fully capture individual risk. To fill this gap, we introduce several estimands to evaluate individual risk.

\subsection{Causal Estimands}
We present causal parameters to quantify individual risk. Similar to \cite{2022nathan, Kallus2023}, if the ITE $\tau_i < 0$, we say the unit $i$ is negatively affected by treatment. Then the average fraction negatively affected (FNA) is defined as $\text{FNA} = \P(Y^0 = 1, Y^1 = 0)$, 
which measures the average individual risk over the whole population.
 Likewise, we define FNA given $X=x$ as 
  $$\text{FNA}(x) = \P(Y^0 = 1, Y^1 = 0\mid X=x).    $$  
 Similarly, one can define the fraction positively affected (FPA) as 
		$\text{FPA} = \P( Y^0 = 0, Y^1 = 1)$, the FPA given $X=x$ as $\text{FPA}(x) = \P( Y^0 = 0, Y^1 = 1 \mid X=x)$.  
We can verify that $\tau(x) = \text{FPA}(x) - \text{FNA}(x)$,  suggesting  that $\text{FPA}(x)$ and $\text{FNA}(x)$ offer a more informative characterization of individual risk than $\tau(x)$. 
		
It also is of interest to explore the average individual risk induced by a specific treatment policy. Let $d: \mathcal{X}\to \mathcal{A}$ be a treatment policy that assigns a treatment to each unit based on its covariates.  
Define FNA induced by the policy $d$ as   
$\text{FNA}(d) = \bfE[ \P(Y^0 = 1, Y^1 = 0 \mid X)d(X)] = \bfE[ \text{FNA}(X) \cdot d(X) ]$,  
 and FPA induced by the policy $d$ as $\text{FPA}(d) = \bfE [\text{FPA}(X) \cdot d(X)]$.    
In this article, we focus on $\text{FNA}$ and $\text{FNA}(x)$. Other causal estimands can be addressed similarly. 

\subsection{Fr\'{e}chet--Hoeffding Bounds on FNA}  \label{sec2-3}
Identifying $\text{FNA}$ generally requires stringent assumptions as it involves the joint distribution of the potential outcomes. Notably, even randomization does not guarantee its identifiability,  which poses challenges in assessing individual risk. 
Typically, when an estimand is not identifiable, the focus shifts toward determining its bounds. Previous studies have obtained the sharp Fr\'{e}chet--Hoeffding bounds on $\mathrm{FNA}$~\citep{Zhang-etal2013, 2022nathan}, which are summarized in Lemma \ref{lem-1} below. 

\begin{lemma}[Fr\'{e}chet--Hoeffding Bounds] \label{lem-1} Assume Assumption \ref{assumption-1}.

(a) $\mathrm{FNA}(x) \in [ L_{\mathrm{FNA}}(x),  U_{\mathrm{FNA}}(x)],$
where   $L_{\mathrm{FNA}}(x)   
	   					= \max \big \{  \mu_0(x) - \mu_1(x),  0 \big  \},$ and 
	 $U_{\mathrm{FNA}}(x)  
	 				= \min \big  \{ \mu_0(x),  1 - \mu_1(x) \big  \}$. 
These bounds on $\mathrm{FNA}(x)$ are sharp. 

(b) $\mathrm{FNA} \in [ L_{\mathrm{FNA}},  U_{\mathrm{FNA}}],$
where $ L_{\mathrm{FNA}}  = \bfE \big [  L_{\mathrm{FNA}}(X)  \big ]$ and $ U_{\mathrm{FNA}}   =  \bfE \big [ U_{\mathrm{FNA}}(X) \big ].$ 
These bounds on $\mathrm{FNA}$ are sharp. 
\end{lemma}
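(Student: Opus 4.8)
The plan is to reduce the statement to the classical Fr\'{e}chet--Hoeffding bounds for the joint law of two binary random variables, applied conditionally on $X=x$. First I would identify the conditional marginals. Under Assumption \ref{assumption-1}, unconfoundedness together with consistency gives $\mu_a(x)=\bfE[Y^a\mid X=x]=\bfE[Y\mid X=x,A=a]$ for $a=0,1$, with overlap ensuring each quantity is well defined. Hence the conditional marginal probabilities $\P(Y^1=1\mid X=x)=\mu_1(x)$ and $\P(Y^0=1\mid X=x)=\mu_0(x)$ are pinned down by the observed distribution. What the data leave entirely unrestricted is the conditional coupling of $(Y^0,Y^1)$ given $X=x$, because only one potential outcome is observed per unit.

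Next I would parametrize the conditional joint law by a single free cell, $p_{11}(x):=\P(Y^0=1,Y^1=1\mid X=x)$, so that $\mathrm{FNA}(x)=\P(Y^0=1,Y^1=0\mid X=x)=\mu_0(x)-p_{11}(x)$. Imposing nonnegativity of all four cell probabilities together with the two marginal constraints forces $\max\{0,\mu_0(x)+\mu_1(x)-1\}\le p_{11}(x)\le\min\{\mu_0(x),\mu_1(x)\}$. Extremizing $p_{11}(x)$ over this interval then yields $\mathrm{FNA}(x)\in[\max\{\mu_0(x)-\mu_1(x),0\},\ \min\{\mu_0(x),1-\mu_1(x)\}]$, which are exactly $L_{\mathrm{FNA}}(x)$ and $U_{\mathrm{FNA}}(x)$. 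Sharpness in part (a) follows by explicit construction: setting $p_{11}(x)$ at either endpoint produces a valid conditional law with the prescribed marginals that attains the corresponding bound.

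For part (b), I would invoke the law of total probability to write $\mathrm{FNA}=\bfE[\mathrm{FNA}(X)]$ and integrate the pointwise inequalities from part (a), obtaining $L_{\mathrm{FNA}}=\bfE[L_{\mathrm{FNA}}(X)]\le\mathrm{FNA}\le\bfE[U_{\mathrm{FNA}}(X)]=U_{\mathrm{FNA}}$. For sharpness I would note that, since the observed data constrain only the within-stratum marginals, the extremal coupling $p_{11}(x)$ may be chosen independently in each stratum, and these choices assemble into a single joint law over $(X,Y^0,Y^1)$ that reproduces $\P$. Feeding the lower-extremal (respectively upper-extremal) coupling into every stratum realizes $\mathrm{FNA}=L_{\mathrm{FNA}}$ (respectively $U_{\mathrm{FNA}}$), so the aggregate bounds are attained.

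The main obstacle is the sharpness claim in part (b): one must verify that the conditional couplings attaining the bound for each $x$ are mutually compatible and yield a bona fide superpopulation law inducing the correct observed-data distribution. This reduces to observing that the coupling between potential outcomes is a degree of freedom orthogonal to everything the data identify, so no cross-stratum obstruction arises and the construction goes through.
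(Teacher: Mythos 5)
Your proposal is correct and follows essentially the same route as the paper: the paper applies the Fr\'{e}chet--Hoeffding inequality directly to the cell $\P(Y^0=1,Y^1=0\mid X=x)$ and characterizes when equality holds, while you derive the same inequalities by parametrizing the conditional joint law through $p_{11}(x)$ and imposing nonnegativity of the four cells --- a repackaging of the identical argument, with part (b) obtained by integrating over $X$ in both cases. Your explicit construction of attaining couplings (and the remark that they can be chosen independently across strata) is a slightly more complete treatment of sharpness than the paper's equality-condition characterization, but it is not a different method.
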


While the bounds in Lemma \ref{lem-1} are sharp, they are attainable only under extreme data-generating processes. 
The upper bound is attained if and only if either $\P(Y^0=1, Y^1 = 1 \mid X=x) = 0$ or $\P(Y^0=0, Y^1 = 0 \mid  X=x) = 0$, while the lower bound is attained if and only if  $\P(Y^0=0, Y^1 = 1 \mid  X=x) = 0$ or $\P(Y^0=1, Y^1 = 0 \mid  X=x) = 0$. 
  When these conditions do not hold, the bounds in Lemma \ref{lem-1} provide loose information about $\text{FNA}(x)$. 
 In addition, randomization cannot improve the bounds in Lemma \ref{lem-1} as it does not offer any information about the joint distribution of the potential outcomes.   
Given the importance of $\text{FNA}(x)$, it is valuable to obtain its improved bounds. 
As indicated in Lemma \ref{lem-1}(b), the bounds on $\text{FNA}$ can always be derived by taking expectations of the bounds on $\text{FNA}(X)$ over $X$. 
In what follows, we focus on $\text{FNA}(x)$ in Section \ref{sec3} and  FNA to Section \ref{sec5}, respectively.   

\section{A Novel Sensitivity Analysis Method for $\textup{FNA}(x)$}  \label{sec3}
We propose a novel sensitivity analysis for $\text{FNA}(x)$, setting the Pearson correlation coefficient (PCC) $\rho(x) = \text{Corr}(Y^0, Y^1 \mid X=x)$ as the sensitivity parameter. 

\subsection{The Role of the PCC in Inferring $\textup{FNA}(x)$} 
\label{sec3-1}

 Before delving into the specific sensitivity analysis method, we elucidate why we choose $\rho(x)$  as the sensitivity parameter. Three key advantages arise from selecting $\rho(x)$ as the sensitivity parameter. 

First, $\rho(x)$, under Assumption \ref{assumption-1}, captures all information about the joint distribution  $\P(Y^0, Y^1 \mid X=x)$ for binary potential outcomes. Therefore, the range of $\rho(x)$ fully determines the range of $\text{FNA}(x)$, as shown in Proposition \ref{prop-1} below.   

\begin{proposition}
    \label{prop-1}  Under Assumption \ref{assumption-1},  the following statements are equivalent: 

(a)  the joint distribution $\P(Y^0, Y^1 \mid X=x)$ is identifiable;

(b) $\rho(x) = \textup{Corr}(Y^0, Y^1 \mid X=x)$ is identifiable; 

(c) $\textup{FNA}(x)$ is identifiable. 


\end{proposition}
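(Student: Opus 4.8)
The plan is to exploit the fact that, for binary potential outcomes, the conditional joint law is a two-by-two table with only one free parameter once its two margins are fixed, and that Assumption~\ref{assumption-1} pins down exactly those margins. Write $p_{ij}(x) = \P(Y^0 = i, Y^1 = j \mid X=x)$ for $i,j \in \{0,1\}$, so that $\text{FNA}(x) = p_{10}(x)$. Under unconfoundedness the margins $\mu_0(x) = \P(Y^0=1\mid X=x)$ and $\mu_1(x) = \P(Y^1=1\mid X=x)$ are identified from the observed data, by the same argument that identifies the CATE. Since $p_{10}(x) + p_{11}(x) = \mu_0(x)$, $p_{01}(x) + p_{11}(x) = \mu_1(x)$, and the four cells sum to one, I would first record the affine relations
\begin{equation*}
p_{11}(x) = \mu_0(x) - \text{FNA}(x), \quad p_{01}(x) = \mu_1(x) - \mu_0(x) + \text{FNA}(x), \quad p_{00}(x) = 1 - \mu_1(x) - \text{FNA}(x),
\end{equation*}
which show that, with the margins held fixed, every cell is an affine function of $\text{FNA}(x)$.

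This display immediately delivers (a) $\Leftrightarrow$ (c): if the joint law is identified then so is its cell $p_{10}(x) = \text{FNA}(x)$, and conversely identifying $\text{FNA}(x)$ together with the already-identified margins recovers all four cells through the relations above. The correspondence between the joint law and $\text{FNA}(x)$ is thus a bijection given the margins, so the two identifiability statements stand or fall together.

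For (b) $\Leftrightarrow$ (c) I would compute the correlation explicitly. Since $\bfE[Y^0 Y^1 \mid X=x] = p_{11}(x)$, the conditional covariance is
\begin{equation*}
\text{Cov}(Y^0, Y^1 \mid X=x) = p_{11}(x) - \mu_0(x)\mu_1(x) = \mu_0(x)\big(1-\mu_1(x)\big) - \text{FNA}(x),
\end{equation*}
while $\text{Var}(Y^a \mid X=x) = \mu_a(x)\big(1-\mu_a(x)\big)$, whence
\begin{equation*}
\rho(x) = \frac{\mu_0(x)\big(1-\mu_1(x)\big) - \text{FNA}(x)}{\sqrt{\mu_0(x)\big(1-\mu_0(x)\big)\,\mu_1(x)\big(1-\mu_1(x)\big)}}
\end{equation*}
is an affine function of $\text{FNA}(x)$ with slope $-\big[\mu_0(x)(1-\mu_0(x))\mu_1(x)(1-\mu_1(x))\big]^{-1/2}$, finite and nonzero whenever $\mu_0(x),\mu_1(x) \in (0,1)$. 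In that regime the map $\text{FNA}(x) \mapsto \rho(x)$ is a bijection, so each of $\rho(x)$ and $\text{FNA}(x)$ determines the other given the identified margins, yielding (b) $\Leftrightarrow$ (c).

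The one delicate point, which I expect to require the most care, is the boundary behaviour when a margin degenerates, i.e.\ $\mu_0(x) \in \{0,1\}$ or $\mu_1(x) \in \{0,1\}$, where the denominator above vanishes and $\rho(x)$ is not well defined. Here I would observe that the degeneracy collapses the single free parameter: for instance $\mu_0(x)=0$ forces $Y^0=0$ almost surely given $X=x$, so $\text{FNA}(x)=0$ and the entire table is fixed by the margins, and symmetrically for the remaining boundary values. Thus in every degenerate case statements (a) and (c) hold automatically from the margins alone, and the equivalence is preserved once $\rho(x)$ is treated under the usual degenerate convention. The substantive content of the proposition is the generic one-parameter affine calculation above; this boundary bookkeeping is its only subtlety.
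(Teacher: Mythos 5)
Your proof is correct and follows essentially the same route as the paper's: both exploit the three linear constraints that the identified margins $\mu_0(x),\mu_1(x)$ impose on the four cells of the $2\times 2$ table, so that $\mathrm{FNA}(x)=\pi_{10}(x)$ is the single remaining free parameter (giving (a)$\Leftrightarrow$(c)), and both observe that $\rho(x)$ is an affine function of $\mathrm{FNA}(x)$ with identifiable, nonzero slope (giving (b)$\Leftrightarrow$(c)). Your explicit treatment of the degenerate margins $\mu_a(x)\in\{0,1\}$, where the correlation is undefined, is a small extra care the paper omits but does not change the substance of the argument.
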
 

Proposition \ref{prop-1} indicates that 
 knowing $(\mu_0(x), \mu_1(x), \rho(x))$ is equivalent to knowing $\P(Y^0, Y^1 \mid X = x)$ under Assumption \ref{assumption-1}. 
The scalar  $\rho(x)$ is easy to interpret and offers a concise way to characterize the joint distribution.  
The equivalence between statements (a), (b) and (c) in Proposition \ref{prop-1} is intuitive because under Assumption \ref{assumption-1}, the joint distribution $\P(Y^0, Y^1 \mid X=x)$  involves four unknown parameters ($\pi_{jk}(x) = \P(Y^0 =j, Y^1 = k \mid X=x)$ for $j, k= 0, 1$) satisfying three equations, 
	$
	   \pi_{10}(x) + \pi_{11}(x)  = \mu_0(x),~  \pi_{01}(x) + \pi_{11}(x)  = \mu_1(x),~
	\sum_{j=0}^1 \sum_{k=0}^1  \pi_{jk}(x) = 1, 
	 $
and (b) or (c) contributes to another equation, thereby ensuring the identifiability of the joint distribution of the potential outcomes.

Second,  the $\rho(x)$ and $\text{FNA}(x)$ satisfy a linear relationship under Assumption \ref{assumption-1}. 
 Specifically,  by definition,  
   \[ \rho(x) = \frac{ \bfE[ Y^0 Y^1 \mid X =x ] - \bfE[ Y^0 \mid X=x ] \cdot  \bfE[ Y^1 \mid X=x ] } {\sqrt{ \V(Y^0 \mid X=x) \cdot  \V(Y^1 \mid X=x)}},     \]
 Under Assumption \ref{assumption-1}, the numerator equals $\P(Y^0=1, Y^1=1 \mid X) - \mu_0(x) \mu_1(x)  = \P( Y^0 = 1 \mid X=x) -  \P( Y^0 = 1, Y^1= 0 \mid X=x) - \mu_0(x) \mu_1(x) = \mu_0(x)(1-\mu_1(x)) - \text{FNA}(x)$ and is linear in $\text{FNA}(x)$,  and  
the denominator equals $\sqrt{  \mu_0(x)(1-\mu_0(x) ) \mu_1(x)(1-\mu_1(x) )}$, an identifiable quantity.  Therefore, $\rho(x)$  provides a straightforward description of $\text{FNA}(x)$.   

      

Last but not the least, adept and skilled practitioners are often capable of offering rough ranges for $\rho(x)$ or, at the very least, determining its sign in real-world applications, as illustrated in Section \ref{sec3-2} below with concrete examples. Thus, it is useful to investigate the sensitivity of $\text{FNA}(x)$ with respect to $\rho(x)$.  

There are other ways to delineate the joint distribution $\P(Y^0, Y^1 \mid X=x)$, such as the 
 risk difference, risk ratio and odds ratio between $Y^1$ and $Y^0$ given $X=x$. See Supplementary Material S1.1 for details.  
 Compared with these alternative measures, using $\rho(x)$ as the sensitivity parameter leads to a simpler form of bounds on $\mathrm{FNA}(x)$. 
\subsection{Improved Sharp Bounds on $\textup{FNA}(x)$} \label{sec3-2}
In this subsection, we demonstrate that the bounds in Lemma \ref{lem-1} could be tightened by invoking a mild assumption on $\rho(x)$. 
Before presenting the improved bounds, we first note that, given known marginal distributions of the potential outcomes,  the range values for $\rho(x)$ are not $[-1, 1]$ under Assumption \ref{assumption-1}.
\begin{proposition} \label{prop-range-rho}
     Under Assumption \ref{assumption-1}, we have that $\rho(x) \in [L_{\rho}(x), U_{\rho}(x)]$, where  
     		\begin{align*}
	  L_{\rho}(x) 
	  ={}& - \frac{ \min\{(1-\mu_0(x))(1-\mu_1(x)), \mu_0(x)\mu_1(x)\} }{\sqrt{ \mu_0(x)(1-\mu_0(x) ) \mu_1(x)(1-\mu_1(x) ) }} \leq 0, \\
	 U_{\rho}(x) ={}&  \frac{ \min\{\mu_0(x)(1-\mu_1(x)), \mu_1(x)(1-\mu_0(x)) \} }{\sqrt{ \mu_0(x)(1-\mu_0(x) ) \mu_1(x)(1-\mu_1(x) ) }} \geq 0.  
	\end{align*} 
These bounds are sharp. 	
\end{proposition}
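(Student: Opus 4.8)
The plan is to reduce the problem to finding the feasible range of a single free parameter in the joint law of $(Y^0, Y^1)$ and then exploit that $\rho(x)$ is an affine function of that parameter. Under Assumption \ref{assumption-1} the marginals $\mu_0(x)$ and $\mu_1(x)$ are identifiable, so the denominator $\sqrt{\mu_0(x)(1-\mu_0(x))\mu_1(x)(1-\mu_1(x))}$ is a fixed identifiable quantity, positive under the implicit nondegeneracy $0<\mu_a(x)<1$ (otherwise $\rho(x)$ is undefined). For binary outcomes $Y^0 Y^1 = 1$ exactly when $Y^0 = Y^1 = 1$, so $\bfE[Y^0 Y^1 \mid X=x] = \pi_{11}(x)$, and the numerator of $\rho(x)$ is the covariance $\pi_{11}(x) - \mu_0(x)\mu_1(x)$. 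Hence $\rho(x)$ is a strictly increasing affine function of the single unknown $\pi_{11}(x)$, and the entire task reduces to determining the set of admissible values of $\pi_{11}(x)$.

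Next I would characterize that admissible set. Writing the remaining cells in terms of $\pi_{11}(x)$ via the three marginal constraints recalled after Proposition \ref{prop-1}, namely $\pi_{10}(x)=\mu_0(x)-\pi_{11}(x)$, $\pi_{01}(x)=\mu_1(x)-\pi_{11}(x)$, and $\pi_{00}(x)=1-\mu_0(x)-\mu_1(x)+\pi_{11}(x)$, the only remaining restriction is nonnegativity of all four cells. These four inequalities are exactly
\[ \max\{0,\mu_0(x)+\mu_1(x)-1\}\le \pi_{11}(x)\le \min\{\mu_0(x),\mu_1(x)\}, \]
the classical Fr\'{e}chet--Hoeffding bounds for a bivariate Bernoulli cell. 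Crucially, under Assumption \ref{assumption-1} the observed data constrain the joint law of $(Y^0,Y^1)$ only through the two marginals, so every value of $\pi_{11}(x)$ in this closed interval is realized by a genuine data-generating process compatible with the observed distribution; this simultaneously establishes validity and attainability (sharpness) of the resulting range.

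Finally, since $\rho(x)$ increases in $\pi_{11}(x)$, its extreme values come from substituting the two endpoints. Taking $\pi_{11}(x)=\min\{\mu_0(x),\mu_1(x)\}$ gives covariance $\min\{\mu_0(x),\mu_1(x)\}-\mu_0(x)\mu_1(x)=\min\{\mu_0(x)(1-\mu_1(x)),\mu_1(x)(1-\mu_0(x))\}$, and dividing by the fixed denominator yields $U_{\rho}(x)$; taking $\pi_{11}(x)=\max\{0,\mu_0(x)+\mu_1(x)-1\}$ gives covariance $-\min\{\mu_0(x)\mu_1(x),(1-\mu_0(x))(1-\mu_1(x))\}$, producing $L_{\rho}(x)$, which is manifestly $\le 0$ while $U_{\rho}(x)\ge 0$. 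The only nonroutine steps are the last algebraic simplifications, which amount to pushing the subtraction of the constant $\mu_0(x)\mu_1(x)$ through the $\min$ and $\max$ operators to recover the stated symmetric closed forms, and the attainability claim, i.e., confirming that no constraint beyond the marginals is imposed on the coupling; both are straightforward once the affine reduction in the first paragraph is in place, so I expect the main conceptual point to be the reduction itself rather than any delicate estimation.
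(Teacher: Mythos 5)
Your proposal is correct and follows essentially the same route as the paper: express $\rho(x)$ as an affine increasing function of $\pi_{11}(x)=\P(Y^0=1,Y^1=1\mid X=x)$, bound $\pi_{11}(x)$ by the Fr\'{e}chet--Hoeffding interval $[\max\{0,\mu_0(x)+\mu_1(x)-1\},\min\{\mu_0(x),\mu_1(x)\}]$, and substitute the endpoints to get $L_\rho(x)$ and $U_\rho(x)$. The only differences are cosmetic --- you derive the Fr\'{e}chet--Hoeffding bounds from cell nonnegativity rather than citing them, and you add a sharpness observation the proposition does not require.
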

Proposition \ref{prop-range-rho} gives the range of $\rho(x)$ under Assumption \ref{assumption-1}.
Next, we introduce Assumptions \ref{assumption-2}--\ref{assumption-3}  and show the sharp bounds on $\text{FNA}(x)$ under these assumptions.  

\begin{assumption}[Expert Knowledge] \label{assumption-2} $\rho_l(x) \leq  \rho(x)  \leq \rho_u(x)$, where $\rho_l(x)$ and $\rho_u(x)$ are pre-specified lower and upper bounds on $\rho(x)$. 
\end{assumption} 

When $\rho_l(x) = -1$ and $\rho_u(x) = 1$, Assumption \ref{assumption-2} holds trivially. 
 In Assumption \ref{assumption-2}, experienced practitioners may provide rough values of $\rho_l(x)$ and $\rho_u(x)$  at a given $x$. 
  However, in practice, we may not be sure about 
  the specific values of $\rho_l(x)$ and $\rho_u(x)$ due to lack of knowledge from experts. In such cases, we may instead consider the sign of $\rho(x)$, such as assuming $\rho(x) \geq 0$,  i.e., $Y^0$ and $Y^1$ are positively correlated given $X = x$.  
  The positive correlation assumption is generally mild in practice. 
   For example, in medical studies,  let $Y$ represent the disease status, a patient's health status (unmeasured factors)  affects both $Y^0$ and $Y^1$~\citep{efron1991compliance} and leads to a positive correlation between them.  
      Proposition \ref{prop-range-rho} implies that Assumption \ref{assumption-2}
      is uninformative when $\rho_l(x) < L_{\rho}(x)$ and $\rho_u(x) > U_{\rho}(x)$. To
       examine the impact of specifying an informative range for $\rho(x)$, we introduce Assumption \ref{assumption-3} below.  


\begin{assumption}[Information Gain Threshold] \label{assumption-3} $L_{\rho}(x)  \leq \rho_l(x)$ and $\rho_u(x) \leq U_{\rho}(x)$.
\end{assumption}  
Assumption \ref{assumption-3} provides informative thresholds for $(\rho_l(x), \rho_u(x))$ that
ensure Assumption \ref{assumption-2} meaningfully improves the bounds on $\mathrm{FNA}(x)$ over the Fr\'{e}chet--Hoeffding bounds. 
To simplify the presentation, let    
$$m(x) = \mu_0(x)(1-\mu_0(x) ) \mu_1(x)(1-\mu_1(x) ).$$   

\begin{theorem}[Proposed Sharp Bounds] \label{thm-1}  Assume Assumptions \ref{assumption-1} and \ref{assumption-2}. 

(a) $\mathrm{FNA}(x) \in [ L_{\mathrm{FNA}}(x),  U_{\mathrm{FNA}}(x)],$ 
    where 
		\begin{align*}
	   L_{\mathrm{FNA}}(x)  
					={}& \max\Big\{  \mu_0(x)(1 - \mu_1(x))  -  \rho_u(x) \sqrt{m(x)   },  \mu_0(x) - \mu_1(x),  0  \Big\}, \\ 
	 U_{\mathrm{FNA}}(x) 
					={}& \min \Big  \{ \max \big\{  \mu_0(x) (1 - \mu_1(x)) -    \rho_l(x) \sqrt{ m(x) },  0  \big\}, \mu_0(x),  1 - \mu_1(x)  \Big  \}.   
	\end{align*} 
These bounds on $\mathrm{FNA}(x)$ are sharp. 

(b) Further assume Assumption \ref{assumption-3}. The bounds in Theorem \ref{thm-1}(a) simplify to 
			\begin{equation}  \label{eq1} \begin{split}
	   L_{\mathrm{FNA}}(x)  
					={}&  \mu_0(x) (1 - \mu_1(x))  -  \rho_u(x) \sqrt{ m(x) }, \\ 
	 U_{\mathrm{FNA}}(x) 
					={}& \mu_0(x) (1 - \mu_1(x)) -    \rho_l(x) \sqrt{m(x) }.  
\end{split}	\end{equation} 
\end{theorem}

Theorem \ref{thm-1}(a) presents sharp bounds on $\mathrm{FNA}(x)$ under Assumptions \ref{assumption-1}--\ref{assumption-2}, which combine the bounds in \eqref{eq1}
  with the  Fr\'{e}chet--Hoeffding bounds. Therefore, the bounds in Theorem \ref{thm-1}(a) are at least as tight as the  Fr\'{e}chet--Hoeffding bounds in Lemma \ref{lem-1}.   
  When $(\rho_l(x), \rho_u(x))$ in Assumption \ref{assumption-2} are set as $\rho_l(x) =L_{\rho}(x)$ and $\rho_u(x) = U_{\rho}(x)$,  
 the bounds in Theorem \ref{thm-1}(a) reduce to the  Fr\'{e}chet--Hoeffding bounds. 
 Moreover, because $L_{\mathrm{FNA}}(x)$ (or $U_{\mathrm{FNA}}(x)$) is a monotonic function of $\rho_u(x)$ (or $\rho_l(x)$), 
 we could rewrite the bounds in Theorem \ref{thm-1}(a) as
  			\begin{equation} \label{eq2}  \begin{split}
	   L_{\mathrm{FNA}}(x)  
					={}&  \max \big  \{ \mu_0(x) (1 - \mu_1(x))  - \tilde  \rho_u(x) \sqrt{ m(x)  }, 0 \big \}, \\ 
	 U_{\mathrm{FNA}}(x) 
					={}&  \max \big  \{  \mu_0(x) (1 - \mu_1(x)) -   \tilde \rho_l(x) \sqrt{ m(x) }, 0 \big \}, 
\end{split}\end{equation}
where $\tilde{\rho}_l(x) = \max\{\rho_l(x), L_{\rho}(x)\}$ and $\tilde{\rho}_u(x) = \min\{\rho_u(x), U_{\rho}(x)\}$.

Theorem \ref{thm-1}(b) provides sharp bounds on $\mathrm{FNA}(x)$ under Assumptions \ref{assumption-1}--\ref{assumption-3},
where the lower and upper bounds are attained 
 at $\rho(x) = \rho_u(x)$ and $\rho(x) = \rho_l(x)$, respectively.  
By Proposition \ref{prop-1}, when $\rho_l(x) = \rho_u(x)$, the interval $[ L_{\mathrm{FNA}}(x),  U_{\mathrm{FNA}}(x)]$ collapses to a single point, indicating the identifiability of $\text{FNA}(x)$. This includes the identifiability results of \cite{Shen-etal2013} and \cite{Zhang-etal2013} under conditional independence of potential outcomes given covariates (i.e., $\rho_l(x) = \rho_u(x) = \rho(x) = 0$).      
 In particular, if $\rho_l(x) = 0$,  the upper bound simplifies to $U_{\text{FNA}}(x) = \mu_0(x)(1 - \mu_1(x))$. 

Theorem \ref{thm-1}(b) indicates that the width of the bounds, $U_{\text{FNA}}(x) -L_{\text{FNA}}(x)$, depends on $\rho_u(x) - \rho_l(x)$ and $\sqrt{ m(x)}$. The former is the range of $\rho(x)$. The latter equals $\sqrt{\V(Y^0 \mid X=x) \cdot \V(Y^1 \mid X=x)}$,  the product of the standard deviations of the two potential outcomes given $X=x$. This is intuitive as the variance $\V(Y^a \mid X=x)$ measures the variation in $Y^a$ that remains unexplained by $X$. It also implies that collecting more covariates correlated with outcome is beneficial for improving the bounds of FNA. 
\subsection{Sensitivity Analysis for $\text{FNA}(x)$}   \label{sec3-range}

Investigating the bounds under Assumptions \ref{assumption-1}--\ref{assumption-2} (or Assumptions \ref{assumption-1}--\ref{assumption-3}) establishes a sensitivity analysis framework for $\text{FNA}(x)$, with $\rho(x)$ being the sensitivity parameter. Different ranges of $\rho(x)$ correspond to different ranges of $\text{FNA}(x)$. See Supplementary Material S1.2 for an example for illustration.   



When the range of $\rho_l(x)$ and $\rho_u(x)$ is unrestricted a priori, one might want to report the possible value range of $\rho_u(x)$ that makes the lower bound in Theorem \ref{thm-1}(a) informative, i.e., the range of $\rho(x)$ that ensures $L_{\text{FNA}}(x)$ lies within the interval $(0, 1)$. 
We examine the threshold under Assumption \ref{assumption-2} and the case where $\tau(x) > 0$, 
since $L_{\text{FNA}}(x)$ is always positive when $\tau(x) < 0$.
Formally, we want to determine  $$\rho_{u}^*(x) := \min_{\rho_u(x)} \{ \rho_u(x) : L_{\mathrm{FNA}}(x) = 0 \},$$  
where $L_{\mathrm{FNA}}(x)$ denote the lower bound on $\mathrm{FNA}(x)$ in Theorem \ref{thm-1}(a). 

\begin{corollary}  \label{coro5} 
If $\tau(x) > 0$, then $$\rho_{u}^*(x) = \sqrt{ \frac{\mu_0(x) (1 - \mu_1(x) ) }{  \{(1-\mu_0(x))  \mu_1(x)\}}}, $$ 
which equals $U_{\rho}(x)$ in Proposition \ref{prop-range-rho} above.  
\end{corollary} 

Corollary \ref{coro5} shows that when $\tau(x) > 0$, $L_{\text{FNA}}(x) > 0$ if and only if $\rho(x) < \rho_u^*(x) = U_{\rho}(x)$.
Moreover, as $\tau(x)$ increases (via increasing $\mu_1(x)$ or decreasing $\mu_0(x)$), $\rho_u^*(x)$ decreases. In the extreme case where $\tau(x) = 1$, we have $\rho_u^*(x) = 0$. 
This suggests that when $\tau(x)$ is sufficiently large, a small value of $\rho_u(x)$ in Assumption \ref{assumption-2} suffices to ensure that $L_{\textup{FNA}}(x) = 0$.
Conversely, even when $\tau(x) > 0$, if it is not sufficiently large, the best-case scenario of assigning treatment to the subpopulation with $X = x$, namely $L_{\textup{FNA}}(x)$, may still result in harm. Hence, a positive $\tau(x)$ does not guarantee a small $\textup{FNA}(x)$.  
Consider a toy example with $\mu_1(x) = 1/2$ and $\mu_0(x) = 1/4$. Then $\tau(x) = 1/4 > 0$ and 
  $L_{\text{FNA}}(x) =   \max \{  (1-\sqrt{3}\rho_u(x))/8,  0\}.$
  Thus, $L_{\text{FNA}}(x) > 0$ provided that $\rho(x) \leq \rho_u^*(x) = 1/\sqrt{3} \approx 0.577$.  

Moreover, we can show that $\rho_{u}^*(x)$ can be expressed as a function of the odds ratio of $A$ on $Y$ conditional on $X=x$:  
 \begin{equation*} 
   \rho_{u}^*(x) =  \sqrt{1/\text{OR}_{AY}(x)},  
 \end{equation*}  
where   $$\text{OR}_{AY}(x) =\frac{\P(Y=1 \mid A=1, X=x) \P(Y =0 \mid A = 0, X=x) }{\P(Y =0 \mid A = 1, X=x) \P(Y = 1 \mid A = 0, X=x)}.  $$ 
 That is, the threshold $\rho_{u}^*(x)$ depends only on $\text{OR}_{AY}(x)$,  another important causal measure for binary outcomes under Assumption \ref{assumption-1}. The reformulation of $\rho_{u}^*(x)$ has two implications: 
 \begin{itemize}
 \item If $\text{OR}_{AY}(x)$ is only slightly larger than 1, then $L_{\text{FNA}}(x)$ is almost always larger than 0. For example, if $\text{OR}_{AY}(x) = 1.01$, then the threshold $\rho_{u}^*(x)$ is $0.995$, implying  that $L_{\text{FNA}}(x) > 0$ provided that $\rho_u(x) < 0.995$. 
 \item  If $\text{OR}_{AY}(x)$ is large, then it is hard to claim that $L_{\text{FNA}}(x)$ is larger than 0. For example, if $\text{OR}_{AY}(x) = 10000$, then $\rho_{u}^*(x)$ is $0.01$, i.e.,  $L_{\text{FNA}}(x) = 0$  if $\rho_u(x) \geq 0.01$. 
\end{itemize}

\subsection{Comparison and Illustration}  \label{sec4}  
In this section, we further compare the proposed bounds in Theorem \ref{thm-1} with the Fr\'{e}chet--Hoeffding bounds in Lemma \ref{lem-1},  and demonstrate the practical usage of the proposed sensitivity analysis method.
In the following, we examine the lower and upper bounds of $\text{FNA}(x)$ separately, corresponding to best- and worst-case scenarios. 

First, we compare the lower bound in Theorem \ref{thm-1}(b) with that in Lemma \ref{lem-1}(a).  In Lemma \ref{lem-1}(a), $L_{\text{FNA}}(x)$  is greater than 0 if and only if $\tau(x) < 0$. When $\tau(x) > 0$, the lower bound in  Lemma \ref{lem-1}(a) is always zero (no-harm) and provides no information. In contrast, the lower bound in  Theorem \ref{thm-1}(b) remains   informative even when $\tau(x) > 0$.  
The findings carry important practical implications for decision-making and policy evaluation. Clearly, in high-stakes application scenarios, opting for treatment should be avoided if its best-case scenario could result in substantial harm ($L_{\text{FNA}}(x) > 0$). 
 We give some illustrative examples below.

\begin{example}[{\bf Drug Authorization}] 
    Drug approval is typically granted based on the results of RCTs that confirm a significant positive ATE. 
    However, as revealed by Theorem \ref{thm-1} and Corollary \ref{coro5}, even with a significant positive ATE,  the drug may still pose harm to a considerable number of individuals. To address this concern, one possible approach is to estimate the lower bound on $\textup{FNA}$ and consider granting approval if the lower bound is sufficiently small or if the drug's effect significantly exceeds the lower bound~\citep{Bordley2009}. This strategy helps ensure that the best-case scenario is adequately safe or that the drug's benefits outweigh its potential harms, thereby enhancing the overall safety of approved medications.
  

As a real-world example in drug authorization, the Recombinant Human Activated Protein C Worldwide Evaluation of Severe Sepsis (PROWESS) clinical trial~\citep{Bernard-etal2001} randomized 1,690 patients with severe sepsis to receive the biologic drug Xigris. The intervention, an intravenously-administered recombinant human activated protein C,  exhibited a significant reduction in 28-day mortality by 6 percentage points compared with the control group with a 31\% mortality rate. Due to this substantial effect, the trial suspended enrollment, leading to the FDA's expedited approval of the drug for severe sepsis patients in 2001. 
However, the FDA approval of the drug was controversial due to an alternative analysis revealing a 1.5 percentage point increase in serious bleeding associated with the intervention~\citep{Siegel2002}. This increased risk of serious bleeding raised concerns about the potential harm to participants. Subsequently, the manufacturer voluntarily withdrew the drug from the market worldwide in 2011 based on further evidence~\citep{Food2011}.
\end{example}

\begin{example}[{\bf Policy Evaluation}]  
  In medical decision-making, treatment is typically assigned when $\tau(x) > c(x)$~\citep{2022nathan, Caron-etal2022, Eli-etal2022}, where $c(x)$ represents the cost of administering treatment. We denote the treatment policy as $d(x) = \mathbb{I}(\tau(x) > c(x))$.   
  As implied by Theorem \ref{thm-1} and Corollary \ref{coro5}, the best-case scenario for the treatment policy may result in harm. 
 One can use $L_{\textup{FNA}}(d) = \bfE[ L_{\textup{FNA}}(X) \cdot d(X) ]$ as a  criterion to further evaluate the policy. 
\end{example}

Next, we turn to explore the upper bound on $\text{FNA}(x)$.  
For the case of $\rho_l(x) \geq 0$ (positive correlation), 
the upper bound in Theorem \ref{thm-1}(b)  satisfies 
		\begin{align*}   
		U_{\mathrm{FNA}}(x)  ={}& \mu_0(x) \{ 1 - \mu_1(x) \} -    \rho_l(x)  \sqrt{  \mu_0(x)(1-\mu_0(x) ) \mu_1(x)(1-\mu_1(x) )}   \\
		\leq{}&   \mu_0(x) (1 - \mu_1(x)) \leq   \min\{ \mu_0(x), 1 - \mu_1(x)\},        
		 \end{align*}  
where the second inequality becomes an equality exclusively in four degenerate cases: $\mu_0(x)=0$ or 1, or $\mu_1(x) = 0$ or 1.   
Thereforce, positive correlation assumption significantly reduces the upper bound on $\text{FNA}(x)$. We provide further discussion on the upper bound in Supplementary Material S1.3.
 The upper bound on $\text{FNA}(x)$, representing the worst-case of FNA, is helpful in devising a treatment policy  with a controllable harm rate.   


\begin{example}[{\bf Policy Learning}]
Policy learning, a critical aspect of precision medicine, seeks to tailor treatment choices to individual patients' covariates. 
 Developing treatment policies with lower harm rate is important in high-stakes application scenarios~\citep{2022nathan, Eli-etal2022}.  
  This objective can be formulated as a constrained optimization problem: maximizing the average benefit while controlling the harm rate (or \textup{FNA}).  
   Specifically, for a given policy $d$, the harm rate and the expected reward induced by the policy are defined as  $\textup{FNA}(d) =  \bfE[ \textup{FNA}(X) \cdot d(X) ]$ and
       $\textup{R}(d) = \bfE[ \tau(X) \cdot d(X) ]$. Then, we could estimate the policy $d$  by solving the following optimization problem: 
             $\max_{d}   \textup{R}(d)~
s.t. ~\textup{FNA}(d) \leq \lambda,$     
where $\lambda$ is a pre-specified threshold.  However, \textup{FNA} is usually unidentifiable. 
In such cases, we could adopt different strategies to estimate a policy that balances the expected reward and harm rate~\citep{Arrow_Hurwicz_1977, Cui2021Individualized}: 
    \begin{itemize}
    \item \emph{Pessimistic strategy}. 
    Impose a constraint to ensure that the upper bound (worst-case) of \textup{harm rate} induced by a policy remains below $\lambda$, that is, 
           \begin{align*}  
             \max_{d}   \textup{R}(d) \quad 
s.t. \quad U_{\textup{FNA}}(d) =  \bfE[ U_{\textup{FNA}}(X) \cdot d(X) ]\leq \lambda,      
\end{align*} 
	\item \emph{Optimistic strategy}.  Impose  a constraint to ensure that the lower bound (best-case) of \textup{harm rate} induced by a policy remains below $\lambda$, that is,            
	       \begin{align*}  
             \max_{d}   \textup{R}(d) \quad 
s.t. \quad L_{\textup{FNA}}(d) =  \bfE[ L_{\textup{FNA}}(X) \cdot d(X) ]\leq \lambda,      
\end{align*} 
 
  	\item \emph{Mixed strategy}. Simultaneously take into account both the best- and worst-case harm rates, that is,  
   \begin{align*}  
             \max_{d}   \textup{R}(d) \quad 
s.t. \quad \alpha L_{\textup{FNA}}(d) + (1-\alpha) U_{\textup{FNA}}(d) \leq \lambda, 
\end{align*} 
where $\alpha$ is the ``coefficient of optimism", and $(1-\alpha)$  is the ``coefficient of pessimism". The value of $\alpha$ is pre-specified based on the problem at hand.	
    \end{itemize}    


\end{example} 

\section{Sensitivity Analysis for FNA} \label{sec5}

\subsection{Sensitivity Analysis}     
 In Section \ref{sec3}, we focus on $\text{FNA}(x)$, treating $\rho(x)$ as the sensitivity parameter.  When FNA is of interest, we can directly obtain its sharp bounds by taking the expectation of $\text{FNA}(X)$ over $X$. Specifically, under Assumptions \ref{assumption-1} and \ref{assumption-2}, the sharp lower and upper bounds on  FNA are given by
  \begin{align}\label{eq3} \begin{split}
 L_{\mathrm{FNA}}  ={}& \bfE \big [    \max \big  \{ \mu_0(X) (1 - \mu_1(X))  - \tilde  \rho_u(X) \sqrt{ m(X)  }, 0 \big \} \big ]  \\
 U_{\mathrm{FNA}}  ={}& \bfE  \big [   \max \big  \{  \mu_0(X) (1 - \mu_1(X)) -   \tilde \rho_l(X) \sqrt{ m(X) }, 0 \big \}  \big ], 
 \end{split}
 \end{align} 
where $\tilde{\rho}_l(x) = \max\{\rho_l(x), L_{\rho}(x)\}$ and $\tilde{\rho}_u(x) = \min\{\rho_u(x), U_{\rho}(x)\}$. 
However, applying the above bounds for $\mathrm{FNA}$ faces two practical challenges: (a) it requires specifying the range of $\rho(x)$ for each $x$; and (b) the forms of $L_{\mathrm{FNA}}$ and $U_{\mathrm{FNA}}$ are complex, posing significant difficulties for estimation and inference due to  the presence of multiple discontinuous $\min$ and $\max$ operators. 

Instead of using a function-valued sensitivity parameter $\rho(x)$, 
a simplified and more practical approach is to adopt a scalar sensitivity parameter $\rho$. 
Motivated by the structure of \eqref{eq3}, we propose the following form for a given interval $[\rho_l, \rho_u]$:
 \begin{equation}  \label{eq4}
  \beta_{\rho} 
	=  \bfE\big [ \max\big\{  \mu_0(X) ( 1 - \mu_1(X) )  -  \rho  \sqrt{  m(X)  }, 0 \big \}  \big ], \text{ for } \rho \in [\rho_l, \rho_u]      
	\end{equation}
to obtain  bounds on FNA and conduct sensitivity analysis accordingly. 
When $\tilde\rho_l(x) = \rho_l$ and $\tilde \rho_u(x) = \rho_u$ are constant in $x$,
 the bounds in \eqref{eq4} coincide with those in \eqref{eq3}.
 When $\tilde\rho_l(x)$ and $\tilde\rho_u(x)$ are not constant, we can interpret $\rho_l$ and $\rho_u$  in \eqref{eq4} as ``averaged" 
versions of $\tilde{\rho}_l(x)$ and $\tilde{\rho}_u(x)$ across individuals.  
 
 For \eqref{eq4}, an inappropriate specification of $\rho_l$ and $\rho_u$ may result in a bound on $\beta_{\rho}$  that is even worse than the Fr\'{e}chet--Hoeffding bounds on FNA. 
 However, this is not a major concern in real-world applications, as we can always compute the thresholds $\rho_l^*$ and $\rho_u^*$ such that $\beta_{\rho_l^*}$ and $\beta_{\rho_u^*}$ correspond to the Fr\'{e}chet--Hoeffding upper and lower bounds on FNA, respectively. These thresholds can serve as the range for $\rho$ in the absence of expert knowledge, ensuring that $\beta_{\rho}$ remains within the Fr\'{e}chet--Hoeffding bounds. 
  We focus on \eqref{eq4} in this paper and leave the general sensitivity analysis based on \eqref{eq3} to future research.

Next, we propose an estimator of $\beta_{\rho}$  based on the semiparametric efficiency theory~\citep{Bickel-etal1993, Tsiatis-2006}, and establish its large-sample properties.


\subsection{Estimation of the Bounds on FNA}  \label{sec6}
We express $\beta_{\rho}$ as 
	\begin{align*}  
	\beta_{\rho} ={}&  \bfE[ g(\bm{\eta}, \rho) \cdot \mathbb{I}\{g(\bm{\eta}, \rho) \geq 0 \}   ],  
	\end{align*}
where $g(\bm{\eta}, \rho) =  \mu_0(X) ( 1 - \mu_1(X) )   -  \rho  \sqrt{  \mu_0(X)(1-\mu_0(X) ) \mu_1(X)(1-\mu_1(X) )},$ 
with $\bm{\eta} = \bm{\eta}(X) :=  (e(X), \mu_0(X), \mu_1(X))$ being the nuisance parameters.   	 

When $\rho \leq 0$, $g(\bm{\eta}, \rho) \geq 0$, and the expression for $\beta_{\rho}$ simplifies to $\beta_{\rho} = \bfE[ g(\bm{\eta}, \rho)].$ In this case,  we can construct an efficient estimator for $\beta_{\rho}$ based on the semiparametric efficiency theory. 
However, when $\rho > 0$, $g(\bm{\eta}, \rho)$ may be negative, and the discontinuity of $\beta_{\rho}$ poses a major challenge for estimation and inference.   
Specifically, for $\rho > 0$,  $\beta_{\rho}$ does not possess an efficient influence function (EIF), as it is not pathwise differentiable due to its discontinuity~\citep{Bonvini-Kennedy2022, Kennedy2023-Semi}. In this scenario, if we directly employ plug-in estimators, the estimator for $g(\bm{\eta}, \rho)$  
may converge at most at the rate $1/\sqrt{n}$. Even if it converges at this rate, 
it will introduce significant estimation errors of order $1/\sqrt{n}$, which must be appropriately addressed and accounted for in inference. 
Unlike $\beta_{\rho}$ for $\rho > 0$, which may not be pathwise differentiable, certain terms in their expressions, $\beta_0 :=  \bfE[ \mu_0(X)(1-\mu_1(X) )]$ and 
$\gamma :=   \bfE[ \sqrt{  \mu_0(X)(1-\mu_0(X) ) \mu_1(X)(1-\mu_1(X) )}]$
are pathwise differentiable and have EIFs. Based on them, we can construct estimator  of $\beta_{\rho}$ for the case of $\rho > 0$. To proceed, we first derive the EIFs of $\beta_0$ and $\gamma$. 


\begin{theorem} \label{thm-3} Assume Assumption \ref{assumption-1}. 

(a)  The EIF of $\beta_0$ is $ \phi_\beta(Y,A,X; \bm{\eta})  - \beta_0$, where 
\begin{align*}   \phi_\beta(Y,A,X; \bm{\eta})  ={}&   \frac{(1-A)(Y - \mu_0(X))}{1-e(X)} (1 - \mu_1(X)) \\
 -{}& \frac{A (Y- \mu_1(X))}{e(X)} \mu_0(X) +  \mu_0(X)(1-\mu_1(X)). 
\end{align*}

(b) The EIF of $\gamma$  is $  \phi_\gamma(Y,A,X; \bm{\eta})  - \gamma$, where  
	\begin{align*}   \phi_\gamma(Y,A,X; \bm{\eta})  ={}&  \frac{ 1 - 2\mu_1(X) }{2} \sqrt{ \frac{ \mu_0(X) (1-\mu_0(X)) }{\mu_1(X)(1-\mu_1(X))} }   \frac{A(Y - \mu_1(X))}{e(X)} \\
  + {}&     \frac{ 1 - 2\mu_0(X) }{2} \sqrt{ \frac{ \mu_1(X) (1-\mu_1(X)) }{\mu_0(X)(1-\mu_0(X))} }   \frac{ (1-A)(Y - \mu_0(X))}{1 - e(X)} \\
  +{}&  
	 \sqrt{  \mu_0(X)(1-\mu_0(X) ) \mu_1(X)(1-\mu_1(X) )}.  
	\end{align*}

(c) In addition, the EIFs for $\beta_0$ and $\gamma$ remain the same no matter whether the propensity score $e(X)$ is known or not. 
\end{theorem}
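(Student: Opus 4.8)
The plan is to derive both efficient influence functions by the standard pathwise-derivative argument in the nonparametric model for $(Y,A,X)$, and then to obtain part (c) as an immediate consequence of the structure of the resulting functions. First I would set up the tangent space. Factorizing the observed-data density as $p(y,a,x) = p(x)\,p(a\mid x)\,p(y\mid a,x)$, the nonparametric tangent space decomposes orthogonally as $\mathcal{T} = \mathcal{T}_X \oplus \mathcal{T}_A \oplus \mathcal{T}_Y$, where $\mathcal{T}_X$ consists of mean-zero functions of $X$, $\mathcal{T}_A$ of functions $b(A,X)$ with $\bfE[b(A,X)\mid X]=0$, and $\mathcal{T}_Y$ of functions with $\bfE[\cdot \mid A,X]=0$. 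Along a regular parametric submodel with score $S = S_X + S_A + S_Y$, the goal is to produce $\phi$ with $\frac{d}{dt}\beta_0(P_t)|_{t=0} = \bfE[(\phi-\beta_0)S]$, and likewise for $\gamma$.

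For $\beta_0 = \bfE[\mu_0(X)(1-\mu_1(X))]$ I would differentiate under the integral. The derivative splits into a marginal part $\bfE[\{\mu_0(X)(1-\mu_1(X)) - \beta_0\}S_X]$ and two conditional-mean parts involving $\dot\mu_a(x) = \bfE[(Y-\mu_a(x))S_Y \mid X=x, A=a]$ with coefficients $(1-\mu_1(x))$ for $a=0$ and $-\mu_0(x)$ for $a=1$. The one manipulation needed is the inverse-propensity identity $\bfE[h(X)\dot\mu_a(X)] = \bfE[h(X)\,\mathbb{I}(A=a)\,(Y-\mu_a(X))\,S_Y / \P(A=a\mid X)]$, valid because $S_Y$ is conditionally mean-zero and orthogonal to $S_X$ and $S_A$. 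Reading off the Riesz representer against $S$ then yields exactly $\phi_\beta$, proving (a). The derivation of (b) is identical in structure, with the smooth map $f(\mu_0,\mu_1) = \sqrt{\mu_0(1-\mu_0)\mu_1(1-\mu_1)}$ replacing $\mu_0(1-\mu_1)$; the chain rule produces $\partial f/\partial\mu_0 = \tfrac{1}{2}(1-2\mu_0)\sqrt{\mu_1(1-\mu_1)/\{\mu_0(1-\mu_0)\}}$ and its symmetric counterpart $\partial f/\partial\mu_1$, which are precisely the weights multiplying the residuals in $\phi_\gamma$. The only regularity caveat, which I would state explicitly, is that differentiability of the square root requires $0 < \mu_a(x) < 1$, i.e. non-degenerate outcomes.

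For (c) I would argue structurally rather than recompute. Both $\beta_0$ and $\gamma$ depend on $P$ only through $p(x)$ and the conditional means $\mu_a(x)$ — neither involves $e(x)$. Consequently each candidate influence function lies in $\mathcal{T}_X \oplus \mathcal{T}_Y$: the residual terms $\mathbb{I}(A=a)(Y-\mu_a(X))/\P(A=a\mid X)$ have conditional mean zero given $(A,X)$ and thus sit in $\mathcal{T}_Y$, while the centered plug-in term is a mean-zero function of $X$ and sits in $\mathcal{T}_X$; neither has a component in $\mathcal{T}_A$. When $e(X)$ is known, the tangent space shrinks to $\mathcal{T}_X \oplus \mathcal{T}_Y$ by dropping $\mathcal{T}_A$, and the EIF becomes the projection of the unrestricted gradient onto this smaller space. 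Since the gradient is already orthogonal to $\mathcal{T}_A$, that projection returns the same function, so the EIF is unchanged, giving (c).

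The main obstacle I anticipate is the bookkeeping in the inverse-propensity conversion for $\gamma$: keeping the two partial derivatives, the two residuals, and their propensity weights correctly paired while verifying the conditional-mean-zero properties. The conceptual core, by contrast, is the orthogonality observation underlying (c), which must be stated cleanly — that a gradient orthogonal to $\mathcal{T}_A$ is its own projection onto any subspace containing $\mathcal{T}_X \oplus \mathcal{T}_Y$ — for the argument to be rigorous.
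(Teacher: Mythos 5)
Your derivation is correct and reaches the same EIFs by the same underlying mechanism (pathwise differentiation of the functional, then identification of the gradient in the tangent space), but the packaging differs from the paper's in two respects worth noting. First, you factorize the observed-data likelihood as $p(x)\,p(a\mid x)\,p(y\mid a,x)$ and work with the orthogonal decomposition $\mathcal{T}_X\oplus\mathcal{T}_A\oplus\mathcal{T}_Y$, whereas the paper parametrizes through the full-data density $f(y^0,y^1\mid x)\,e(x)^a(1-e(x))^{1-a}f(x)$ and writes the score as $a\,s_1(y\mid x)+(1-a)\,s_0(y\mid x)+\alpha(x)(a-e(x))+s(x)$; under Assumption 1 these induce the same observed-data tangent space, so the difference is cosmetic, though your version makes the orthogonality structure more transparent. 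Second, and more substantively, for part (c) the paper re-runs the verification under the restricted submodel (dropping the propensity-score score component) and checks that the same function is a gradient lying in the smaller tangent space, while you argue once that the gradient has no $\mathcal{T}_A$ component, hence already lies in $\mathcal{T}_X\oplus\mathcal{T}_Y$ and equals its own projection onto the known-propensity tangent space; this is cleaner and generalizes immediately to any functional depending only on $p(x)$ and $\mu_a(x)$. Your explicit caveat that $0<\mu_a(x)<1$ is needed for differentiability of the square root in $\gamma$ is a point the paper leaves implicit and is worth keeping. One small thing to state when writing this up: in the fully nonparametric observed-data model the tangent space is all of mean-zero $L_2$, so the gradient is unique and the Riesz representer you ``read off'' is automatically the \emph{efficient} influence function; the paper instead verifies membership in $\mathcal{T}$ directly by exhibiting the candidate in the form $A\,s_1+(1-A)\,s_0+s(X)$.
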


Theorems \ref{thm-3}(a) and \ref{thm-3}(b) give the EIFs of $\beta_0$ and $\gamma$, respectively. Theorem \ref{thm-3}(c) states that the propensity score $e(X)$ is ancillary to the estimation of $\beta_0$ and $\gamma$, which is similar to the result on the estimation of ATE~\citep{Hahn1998}. In addition,  Theorem \ref{thm-3} implies that $\phi_\beta(Y,A,X; \bm{\eta})  -  \rho \cdot  \phi_\gamma(Y,A,X; \bm{\eta}) - \beta_{\rho}$ is the EIF of $\beta_{\rho}$ for $\rho\leq 0$.   
Next, we construct the estimators of $\beta_\rho$ (for both $\rho \leq 0$ and $\rho > 0$) based on Theorem \ref{thm-3}. Let  
 	\[  \varphi(Y, A, X; \bm{\eta}, \rho) =   \mathbb{I}\{g(\bm{\eta},\rho) \geq 0 \} \cdot \{ \phi_\beta(Y,A,X; \bm{\eta})  -  \rho \cdot  \phi_\gamma(Y,A,X; \bm{\eta}) \}. \]
Because $ \mathbb{I}\{ g(\bm{\eta}, \rho) \geq 0 \} $ is a function of $X$ and $\bfE[ \phi_\beta(Y,A,X; \bm{\eta})  -  \rho \cdot \phi_\gamma(Y,A,X; \bm{\eta}) \mid X] = g(\bm{\eta}, \rho)$, we can verify
 $\beta_{\rho} = \bfE[ \varphi(Y, A, X; \bm{\eta}, \rho)].$      
In line with \cite{Chernozhukov-etal-2018}, we adopt the cross-fitting technique to estimate the nuisance parameters $\bm{\eta}$; see Supplementary Material S1.4 for  details. Denote $\boldsymbol{\hat{\eta}} := (\hat e(X), \hat \mu_0(X), \hat \mu_1(X))$ as the cross-fitted estimators of $\bm{\eta}$, and define
\begin{equation} \label{est-b0}
\hat \beta_{\rho} = \frac{1}{n} \sum_{i=1}^n \varphi(Y_i, A_i, X_i; \boldsymbol{\hat{\eta}}, \rho)
\end{equation}
as an estimator of $\beta_{\rho}$.



  
  \subsection{Asymptotic Properties} \label{sec4-2}
We establish the large-sample properties for the proposed estimator $\hat \beta_{\rho}$  in (\ref{est-b0}). Let $||\cdot||_2$ represent  the $L_2$-norm, i.e., $|| \hat f(X) - f(X) ||_2 :=  \sqrt{\int (\hat f(x) - f(x))^2 d \mathbb{P}(x)}$ for generic functions $\hat f$ and $f$. 

\begin{condition} \label{con-1} Suppose that  
(a) $|| \hat e(X) - e(X)  ||_2 \cdot  || \hat \mu_a(X) - \mu_a(X) ||_2   = o_{\P}(n^{-1/2})$ for $a = 0, 1$;  
(b) $   || \hat \mu_1(X) - \mu_1(X) ||_2^2  +  || \hat \mu_0(X) - \mu_0(X) ||_2^2   = o_{\P}(n^{-1/2})$. 
\end{condition}

Condition \ref{con-1} is standard in machine-learning-aided causal inference~\citep{Belloni-etal-2017, Chernozhukov-etal-2018, Kennedy-2020}. Condition \ref{con-1} imposes weak restrictions on the estimation for the nuisance parameters and holds even when employing flexible machine learning methods. It only necessitates that the product of the $L_2$ errors in estimating $e(x)$ and $\mu_a(x)$ is of order $n^{-1/2}$, indicating that each regression function can be estimated at the slower rate of $n^{-1/4}$. 

Under Condition \ref{con-1}, we can obtain the asymptotic properties of $\hat \beta_\rho$ for $\rho \leq 0$. 

\begin{theorem} \label{thm-4} Under Assumption \ref{assumption-1} and Condition \ref{con-1},  
for $\rho \leq 0$, 
$\sqrt{n}(\hat \beta_\rho - \beta_\rho)     \xrightarrow{d} N(0,  \sigma_{\rho}^2),$ 
where $\sigma_{\rho}^2 = \mathbb{V}[ \varphi(Y, A, X; \bm{\eta}, \rho) ]$ is the efficiency bound of $\beta_\rho$. 
In addition,  a consistent estimator of $\sigma_{\rho}^2$ is 
$\hat \sigma_{\rho}^2 =   n^{-1} \sum_{i=1}^n [ \varphi(Y_i, A_i, X_i; \boldsymbol{\hat \eta}, \rho) - \hat \beta_\rho ]^2.$
  \end{theorem}

Theorem \ref{thm-4} shows that the estimator $\hat \beta_\rho$ for $\rho \leq 0$ is consistent, asymptotically normal, and locally efficient. 
However, for $\rho > 0$, the large-sample properties of $\hat \beta_{\rho}$ require additional Conditions \ref{con-2}--\ref{con-3} below. 

\begin{condition}[Margin condition] \label{con-2}
The random variable $g(\boldsymbol{\eta}, \rho)$ has absolutely continuous cumulative distribution function and there exist constants $C_0$ and $\alpha \geq 0$ such that  for all $t > 0$, $\P\big (  | g(\boldsymbol{ \eta}, \rho) |  \leq t  \big ) \leq C_0  t^{\alpha}.$ 
\end{condition}  
%
In Condition \ref{con-2}, the case where $\alpha = 0$ is trivial (no assumption) and included for notational convenience.  
If the density of $g(\boldsymbol{\eta}, \rho)$ is bounded by a constant $\delta$, then for any $t > 0$,
$\P\big( |g(\boldsymbol{\eta}, \rho)| \leq t \big) \leq 2 \delta t$, and 
Condition \ref{con-2} holds for $\alpha = 1$. 
Essentially, the parameter $\alpha$ determines how many cases are allowed to be close to the boundary, with a larger $\alpha$ meaning fewer cases are close~\citep{2018Who}.  
Condition \ref{con-2} characterizes the behavior of $g(\boldsymbol{\eta}, \rho)$ in the vicinity of the level $g(\boldsymbol{\eta}, \rho) = 0$, which is crucial for addressing the discontinuity of $\beta_p$ and accelerating the convergence rate of $\hat \beta_{\rho}$ for $\rho > 0$. The margin condition has been extensively studied and has proven useful in analyzing classification problem~\citep{Audibert-etal-2007, Laan-Luedtke2014} and other scenarios involving the estimation of nonsmooth functionals~\citep{Luedtke-etal-2016, Kennedy-2019}. 


\begin{condition} \label{con-3}  
$|| g(\boldsymbol{\hat \eta}, \rho) -  g(\boldsymbol{\eta}, \rho)  ||_\infty^{1 + \alpha}  = o_{\P}(n^{-1/2})$ for the $\alpha$ in Condition \ref{con-2}, where 
$|g(\boldsymbol{\hat \eta}, \rho) -  g(\boldsymbol{\eta}, \rho)  ||_\infty = \sup_{x\in \mathcal{X}} |g(\boldsymbol{\hat \eta}(x), \rho) -  g(\boldsymbol{\eta}(x), \rho)  |$. 
\end{condition}  


If the density of $g(\boldsymbol{\eta}, \rho)$ is bounded, achieving a convergence rate of $n^{-1/4}$ in the $L_{\infty}$-norm is adequate to fulfill Condition \ref{con-3}. This is because, in such a case, the margin condition holds for $\alpha = 1$.

\begin{theorem} \label{thm-5} Under Assumption \ref{assumption-1} and Conditions \ref{con-1}--\ref{con-3},  the estimator $\hat \beta_\rho$ for $\rho > 0$ satisfies that 
$\sqrt{n}(\hat \beta_{\rho} - \beta_{\rho})    \xrightarrow{d} N(0,  \sigma_{\rho}^2)$,  
where $\sigma_{\rho}^2 = \mathbb{V}[ \varphi(Y, A, X; \boldsymbol{\eta}, \rho) ]$.  In addition,  a consistent estimator of $\sigma_{\rho}^2$ is  $\hat \sigma_{\rho}^2 =   n^{-1} \sum_{i=1}^n [ \varphi(Y_i, A_i, X_i; \boldsymbol{\hat \eta}, \rho) - \hat \beta_{\rho} ]^2.$ 
  \end{theorem}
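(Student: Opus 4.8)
The plan is to show that replacing the estimated level-set indicator by the population one costs only $o_{\P}(n^{-1/2})$, after which $\hat\beta_\rho$ behaves like a standard cross-fitted one-step estimator built from $\varphi(\cdot;\boldsymbol{\eta},\rho)$. Writing $O=(Y,A,X)$, $\psi(O;\boldsymbol{\eta})=\phi_\beta(O;\boldsymbol{\eta})-\rho\,\phi_\gamma(O;\boldsymbol{\eta})$, and $I(\boldsymbol{\eta})=\mathbb{I}\{g(\boldsymbol{\eta},\rho)\ge 0\}$ so that $\varphi(O;\boldsymbol{\eta},\rho)=I(\boldsymbol{\eta})\,\psi(O;\boldsymbol{\eta})$, and abbreviating $\Delta\varphi:=\varphi(\cdot;\boldsymbol{\hat\eta},\rho)-\varphi(\cdot;\boldsymbol{\eta},\rho)$, I would decompose
\[\sqrt n(\hat\beta_\rho-\beta_\rho)=\sqrt n\,(\mathbb P_n-\P)[\varphi(\cdot;\boldsymbol{\eta},\rho)]+\sqrt n\,(\mathbb P_n-\P)[\Delta\varphi]+\sqrt n\,\P[\Delta\varphi],\]
where $\mathbb P_n$ and $\P$ denote the empirical and population measures and I suppress the cross-fitting fold index. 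Since $\beta_\rho=\bfE[\varphi(O;\boldsymbol{\eta},\rho)]$, the first term is an average of i.i.d.\ terms with mean $\beta_\rho$ and variance $\sigma_\rho^2$, so it converges to $N(0,\sigma_\rho^2)$ by the Lindeberg CLT; the argument then reduces to showing the two remainders are $o_{\P}(1)$.

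For the bias term I would split $\Delta\varphi=I(\boldsymbol{\hat\eta})\{\psi(\cdot;\boldsymbol{\hat\eta})-\psi(\cdot;\boldsymbol{\eta})\}+\{I(\boldsymbol{\hat\eta})-I(\boldsymbol{\eta})\}\psi(\cdot;\boldsymbol{\eta})$. Taking $\P$-expectations and conditioning on $X$, the first piece equals $\bfE[I(\boldsymbol{\hat\eta})\,\bfE\{\psi(O;\boldsymbol{\hat\eta})-\psi(O;\boldsymbol{\eta})\mid X\}]$; because $\phi_\beta$ and $\phi_\gamma$ are the efficient influence functions of $\beta_0$ and $\gamma$ (Theorem~\ref{thm-3}), this conditional bias is second order, with the $\phi_\beta$ contribution reducing to cross products of the propensity and outcome errors controlled by Condition~\ref{con-1}, and the $\phi_\gamma$ contribution additionally incurring a Taylor remainder from the nonlinearity of the square root that is controlled by the squared-error rate of Condition~\ref{con-3}(a); hence this piece is $o_{\P}(n^{-1/2})$. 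The second piece is where the discontinuity enters: using $\bfE[\psi(O;\boldsymbol{\eta})\mid X]=g(\boldsymbol{\eta},\rho)$ and the fact that $I(\boldsymbol{\hat\eta})\ne I(\boldsymbol{\eta})$ forces $|g(\boldsymbol{\eta},\rho)|\le\|g(\boldsymbol{\hat\eta},\rho)-g(\boldsymbol{\eta},\rho)\|_\infty$, I would bound
\[\big|\,\P[\{I(\boldsymbol{\hat\eta})-I(\boldsymbol{\eta})\}\psi(\cdot;\boldsymbol{\eta})]\,\big|\le\|g(\boldsymbol{\hat\eta},\rho)-g(\boldsymbol{\eta},\rho)\|_\infty\cdot\P\big(|g(\boldsymbol{\eta},\rho)|\le\|g(\boldsymbol{\hat\eta},\rho)-g(\boldsymbol{\eta},\rho)\|_\infty\big).\]
The margin Condition~\ref{con-2} controls the probability by $C_0\|g(\boldsymbol{\hat\eta},\rho)-g(\boldsymbol{\eta},\rho)\|_\infty^{\alpha}$, so this piece is $O_{\P}(\|g(\boldsymbol{\hat\eta},\rho)-g(\boldsymbol{\eta},\rho)\|_\infty^{1+\alpha})=o_{\P}(n^{-1/2})$ by Condition~\ref{con-3}(b).

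For the empirical-process term I would exploit cross-fitting: conditioning on the fold used to fit $\boldsymbol{\hat\eta}$, $\Delta\varphi$ is a fixed function, so $\sqrt n\,(\mathbb P_n-\P)[\Delta\varphi]$ has conditional mean zero and conditional variance $O(\|\Delta\varphi\|_2^2)$, and it suffices to show $\|\Delta\varphi\|_2=o_{\P}(1)$. Using the same split, $\|I(\boldsymbol{\hat\eta})\{\psi(\cdot;\boldsymbol{\hat\eta})-\psi(\cdot;\boldsymbol{\eta})\}\|_2\le\|\psi(\cdot;\boldsymbol{\hat\eta})-\psi(\cdot;\boldsymbol{\eta})\|_2\to 0$ by consistency of $\boldsymbol{\hat\eta}$ and overlap, while $\|\{I(\boldsymbol{\hat\eta})-I(\boldsymbol{\eta})\}\psi(\cdot;\boldsymbol{\eta})\|_2^2\le\sup_x\bfE[\psi^2\mid X=x]\cdot\P(I(\boldsymbol{\hat\eta})\ne I(\boldsymbol{\eta}))$, and the latter probability vanishes by the same margin argument. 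Conditional Chebyshev then gives this term is $o_{\P}(1)$, so $\sqrt n(\hat\beta_\rho-\beta_\rho)=\sqrt n(\mathbb P_n-\P)[\varphi(\cdot;\boldsymbol{\eta},\rho)]+o_{\P}(1)\xrightarrow{d}N(0,\sigma_\rho^2)$. Consistency of $\hat\sigma_\rho^2$ follows from an analogous decomposition of $\mathbb P_n[\varphi(\cdot;\boldsymbol{\hat\eta},\rho)^2]$, the LLN, $\hat\beta_\rho\to_{\P}\beta_\rho$, and the continuous mapping theorem. The main obstacle is exactly the non-pathwise-differentiability induced by $\mathbb{I}\{g(\boldsymbol{\eta},\rho)\ge 0\}$: the crux is the interplay between the margin Condition~\ref{con-2} and the $L_\infty$ rate Condition~\ref{con-3}(b), which jointly reduce the indicator-mismatch contribution to exact order $\|g(\boldsymbol{\hat\eta},\rho)-g(\boldsymbol{\eta},\rho)\|_\infty^{1+\alpha}$ and push it below $n^{-1/2}$, while the secondary technical point is verifying that the nuisance-error products in the smooth piece remain second order even after multiplication by the estimated indicator.
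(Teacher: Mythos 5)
Your proposal is correct and follows essentially the same route as the paper's proof: the same three-term decomposition into a CLT term, a cross-fitted empirical-process term, and a bias term, with the bias split into a smooth piece (controlled by the second-order conditional bias of $\phi_\beta$ and $\phi_\gamma$ via Conditions~\ref{con-1} and~\ref{con-3}(a)) and an indicator-mismatch piece reduced to $\|g(\boldsymbol{\hat\eta},\rho)-g(\boldsymbol{\eta},\rho)\|_\infty^{1+\alpha}$ via the observation that disagreeing indicators force $|g|\le|\hat g-g|$ together with the margin condition. Your treatment of the empirical-process term is in fact slightly more explicit than the paper's, which defers to the argument for $H_{2n}$ even though $\varphi$ is not Lipschitz in $\boldsymbol{\eta}$ because of the indicator; your margin-based bound on $\P(I(\boldsymbol{\hat\eta})\ne I(\boldsymbol{\eta}))$ closes that small gap.
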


Theorem \ref{thm-5}  indicates that the proposed estimator $\hat \beta_{\rho}$ for $\rho > 0$ is $\sqrt{n}$-consistent and asymptotically normal, despite 
 the discontinuity of $\beta_{\rho}$. 
Although the asymptotic variance and variance estimator in Theorem \ref{thm-5} are the same as those based on the EIFs in Theorem \ref{thm-4}, the proof does not follow from the standard theory due to the non-smoothness of $\beta_\rho$ for $\rho > 0$.



Using a proof similar to that of Theorems \ref{thm-3}, the proposed estimation method can be readily extended to estimate $\text{FNA}(d)$, which represents FNA induced by a given policy $d$. Furthermore, by employing a similar proof as for Theorems \ref{thm-4} and \ref{thm-5}, we can derive the associated asymptotic properties for this extension. These results are omitted to avoid redundancy.

\section{Simulation} \label{sec7}
    


We use simulation to evaluate the finite-sample performance of the proposed estimator $\hat \beta_{\rho}$.  
Throughout the simulation, the covariates $X = (X_1, X_2, ..., X_p)^{\intercal} \sim N(0, I_p)$ with $I_p$ being an $p\times p$ identity matrix, 
the treatment $A$ follows a logistic regression  given by
    $\P(A=1\mid X) = \text{expit}((X_1 - X_2)/2),$
where $\text{expit}(x) = \exp(x)/\{1 + \exp(x) \}$. 
 The unmeasured variable $U \sim N(0, 1)$ 
 influences the potential outcomes $(Y^0,Y^1)$, thereby inducing correlation between them. The sample size $n$ is set at 500, 1000, and 2000, respectively.   
We consider three data-generating processes for $(Y^0, Y^1)$, 
\begin{enumerate} 
 \item[] {\bf (C1)} $\P(Y^0=1\mid X, U) = \text{expit}( (X_1 + X_2)/2 + 3U)$, and $\P(Y^1=1\mid X, U) = \text{expit}( (X_1 + X_2)/2 + 1 + 3U/2)$. 
 
   \item[] {\bf (C2)} $\P(Y^0=1\mid X, U) = \text{expit}( (X_1 + X_2)/2 + 2U)$, and $\P(Y^1=1\mid X, U) = \text{expit}( (X_1 + X_2)/2 + 1 + U)$. 

    \item[] {\bf (C3)} $\P(Y^0=1\mid X, U) = \text{expit}( (X_1 + X_2)/2 + U)$, and $\P(Y^1=1\mid X, U) = \text{expit}( (X_1 + X_2)/2 + 1 + U/2)$. 
\end{enumerate}

The only difference among cases (C1)--(C3) is the signal strength of $U$ relative to $X$. Thus, these three cases help to evaluate the impact of the signal-to-noise ratio on the proposed estimator $\hat \beta_{\rho}$. The true FNA values for cases (C1)--(C3) are 7.75\%, 9.34\%, and 11.27\%, computed by Monte Carlo. 
We also use Monte Carlo to compute the true value of $\beta_{\rho}$.  

We replicate each simulation 1,000 times and calculate the Bias, SD, ESE, and CP95 as evaluation metrics, where Bias and SD are the Monte Carlo bias and standard deviation of the points estimates, ESE and CP95 are the averages of estimated asymptotic standard error and coverage proportions of the 95\%  Wald-type confidence intervals based on Theorems \ref{thm-4} and \ref{thm-5}. 
For implementation, we employ logistic regression to estimate both the propensity score $e(X)$ and the outcome regression functions $\mu_a(X)$ for $a = 0, 1$, with two-fold cross-fitting~\citep{Chernozhukov-etal-2018}.   
Despite the mis-specification of $\mu_a(X)$ due to the $U$, we observe that logistic regression performs well (see Table \ref{tab1}).  

\begin{table*}[!t]
 \centering 
\caption{\centering Simulation results of $\hat \beta_{\rho}$ with $\rho = 0.0, 0.1, 0.2, 0.3, 0.4$ for cases (C1)-(C3).  \label{tab1}}
\scriptsize
\begin{tabular}{rrr | rrrr | rrrr | rrrr}
  \hline
      & &  &    \multicolumn{4}{c|}{$n =500$}    & \multicolumn{4}{c|}{$n =1000$} &  \multicolumn{4}{c}{$n =2000$}  \\
Case & $\rho$ & $\beta_{\rho}$ & Bias & SD & ESE & CP95 & Bias & SD & ESE & CP95 & Bias & SD & ESE & CP95 \\ 
  \hline
   & 0.0 & 0.160 & -0.000 & 0.020 & 0.021 & 0.954 & -0.000 & 0.014 & 0.014 & 0.949 & 0.000 & 0.010 & 0.010 & 0.942  \\ 
 & 0.1 &  0.137 & -0.001 & 0.020 & 0.020 & 0.948 & -0.000 & 0.014 & 0.014 & 0.946 & 0.000 & 0.010 & 0.010 & 0.948  \\ 
(C1) & 0.2 & 0.115 & -0.001 & 0.020 & 0.020 & 0.937 & -0.001 & 0.013 & 0.013 & 0.949 & -0.000 & 0.009 & 0.009 & 0.944 \\ 
 & 0.3 & 0.092 & -0.003 & 0.019 & 0.019 & 0.930 & -0.001 & 0.013 & 0.013 & 0.939 & -0.000 & 0.010 & 0.009 & 0.936  \\ 
 & 0.4 &  0.069 & -0.003 & 0.020 & 0.019 & 0.920 & -0.002 & 0.013 & 0.013 & 0.932 & -0.000 & 0.009 & 0.009 & 0.951 \\ 
   \hline  
    & 0.0 & 0.145 & -0.001 & 0.020 & 0.020 & 0.945 & 0.000 & 0.014 & 0.014 & 0.952 & 0.000 & 0.010 & 0.009 & 0.943 \\ 
   & 0.1 & 0.123 & -0.001 & 0.020 & 0.019 & 0.938 & 0.000 & 0.013 & 0.013 & 0.952 & -0.000 & 0.009 & 0.009 & 0.946 \\ 
(C2) & 0.2 & 0.102 & -0.001 & 0.019 & 0.019 & 0.945 & -0.001 & 0.013 & 0.013 & 0.956 & -0.000 & 0.009 & 0.009 & 0.959 \\ 
   & 0.3 & 0.080 & -0.002 & 0.019 & 0.018 & 0.930 & -0.001 & 0.013 & 0.012 & 0.947 & -0.000 & 0.008 & 0.009 & 0.950 \\ 
   & 0.4 & 0.058 & -0.005 & 0.019 & 0.017 & 0.906 & -0.001 & 0.012 & 0.012 & 0.923 & -0.001 & 0.008 & 0.008 & 0.945 \\ 
   \hline
    & 0.0 & 0.131 & -0.001 & 0.019 & 0.019 & 0.927 & -0.000 & 0.013 & 0.013 & 0.944 & 0.000 & 0.009 & 0.009 & 0.940  \\ 
   & 0.1 & 0.110 & -0.001 & 0.019 & 0.018 & 0.927 & -0.001 & 0.013 & 0.012 & 0.939 & -0.001 & 0.009 & 0.009 & 0.941 \\ 
  (C3) & 0.2 &  0.089 & -0.002 & 0.018 & 0.017 & 0.918 & -0.000 & 0.012 & 0.012 & 0.946 & -0.001 & 0.008 & 0.008 & 0.937  \\ 
   & 0.3 &  0.068 & -0.003 & 0.018 & 0.017 & 0.917 & -0.001 & 0.012 & 0.011 & 0.931 & -0.001 & 0.008 & 0.008 & 0.949  \\ 
   & 0.4 &  0.047 & -0.004 & 0.018 & 0.016 & 0.907 & -0.002 & 0.012 & 0.011 & 0.910 & -0.001 & 0.008 & 0.008 & 0.940  \\ 
   \hline 
\end{tabular}
\begin{flushleft}
Note: Bias and SD are the Monte Carlo bias and standard deviation over the 1,000 simulations of the points estimates of $\hat \beta_{\rho}$, ESE and CP95 are the averages of estimated asymptotic standard error and coverage proportions of the 95\%  Wald-type confidence intervals based on Theorems \ref{thm-4} and \ref{thm-5}.
\end{flushleft}
\end{table*}

Table \ref{tab1} presents the numerical results of estimator $\hat \beta_{\rho}$ for five representative points of 
$\rho = 0.0, 0.1, 0.2, 0.3, 0.4$, respectively. 
From Table \ref{tab1}, we have the following observations. First, Bias is small in all cases, indicating the consistency of the proposed estimator. Second,  as the sample size increases, ESE becomes closer to SD, and CP95 approaches its nominal value of 0.95,  demonstrating the validity of the asymptotic normality and variance estimation. Third, simulation results remain similar among cases (C1)--(C3), suggesting the stability of the proposed estimator in terms of signal-to-noise ratios of the potential outcomes.  


\begin{figure*}[!t]%
\centering
\includegraphics[scale = 0.75]{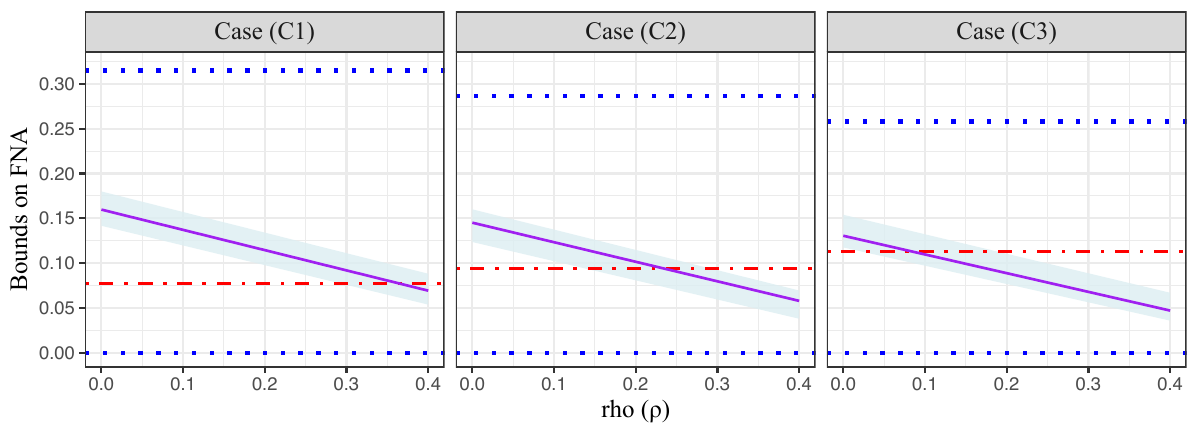}
\caption{Bounds on FNA for various values of $\rho$ in cases (C1)--(C3), based on a simulation with a sample size of 2,000. The dotdash red line represents the true value of FNA, the solid purple line is the true value of $\beta_{\rho}$, the dotted blue lines denote the Fr\'{e}chet--Hoeffding lower and upper bounds in Lemma \ref{lem-1}, and the shaded areas depict the 95\% Wald-type confidence intervals for $\hat \beta_{\rho}$.} \label{fig2} 
\end{figure*}


To showcase the effectiveness of our proposed sensitivity analysis method, we compare it with the Fr\'{e}chet--Hoeffding bounds in Lemma \ref{lem-1}. Fig. \ref{fig2} illustrates the bounds on FNA for different $\rho$ values in cases (C1)--(C3) using a simulation with a sample size of 2,000. The solid purple line represents the true $\beta_{\rho}$ value, the dotted blue lines indicate the Fr\'{e}chet--Hoeffding lower and upper bounds from Lemma \ref{lem-1}, and the shaded areas represent the 95\% confidence intervals for $\hat \beta_{\rho}$.  
 The findings drawn from Fig. \ref{fig2} are as follows: (a) 
 the shaded areas completely encompass the solid purple line, signifying that $\hat \beta_{\rho}$ accurately estimates $\beta_{\rho}$; (b) as expected, as $\rho$ increases, the true value of $\beta_{\rho}$ decreases, with a suitable $\rho$ where $\beta_{\rho}$ aligns with the true FNA value; (c) 
 notably, our proposed bounds significantly outperform the wide Fr\'{e}chet--Hoeffding bounds for $\rho \in [0, 0.4]$,  
 indicating a considerable improvement in FNA bounds due to the positive correlation assumption.

The dimension of the covariates in cases (C1)--(C3) is fixed at 2. However, in real-world data analysis, the dimension of the covariates is often higher, as seen in our application in Section \ref{sec8}, where the dimension of the covariates reaches 72. To further explore the impact of covariate dimension on the proposed estimator $\hat \beta_{\rho}$, we introduce three additional simulation cases, and the numerical results are similar to those in cases (C1)--(C3) and are presented in  Supplementary Material S1.5.



\section{Application to Right Heart Catheterization} \label{sec8}
Since the 1970s, right heart catheterization (RHC) has been a routine diagnostic and interventional procedure in hospitals, involving the insertion of a catheter into the pulmonary artery~\citep{Kubiak-etal2019}. Many critical care physicians believe that RHC helps improve patient outcomes. However, the advantages of RHC have not been proven through RCTs, as it would be unethical for physicians to engage in or encourage patients to participate in such a trial~\citep{Tan2006}. 
In this context, an influential yet controversial observational study by \cite{Connors-etal1996} raises concerns that RHC may not provide benefits to patients and could      
 cause harm to many  patients. Subsequently, several studies applied different statistical methods to assess the impact of RHC on survival by estimating the ATE~\citep{Hirano-Imbens2001, Shah-etal2005, Tan2006, Li-etal2008, Crump-etal2009, Vansteelandt-etal2012}.
In this section, we take it a step further and explore what percentage of patients are harmed by RHC using the same data from  \cite{Connors-etal1996}.  

\subsection{Data Description and Setup}
The study of \cite{Connors-etal1996} includes $n = 5,735$ critically ill patients admitted to the intensive care units (ICUs) of five medical centers between 1989 and 1994. For each patient, we observe a treatment status $A$ with $A=1$ indicating the use of RHC within the first 24 hours of  ICU care and $A = 0$ otherwise, an outcome $Y$ indicating whether survival $(Y=1)$ or not $(Y=0)$ up to 30 days, and 72 relevant covariates selected by medical experts, 
encompassing sociodemographic characteristics, primary disease category, secondary disease category, admission diagnosis categories, and comorbidities illness categories. \cite{Connors-etal1996} and \cite{Hirano-Imbens2001} offer a comprehensive description and summary statistics for these variables. 
Out of the 5,735 patients, 2,184 (38\%) received RHC and 3,551 (62\%) did not. In addition, among those who received RHC, about 38\%  died, compared with 31\% among those who did not receive RHC. Consequently, the death rate was around 7\% higher in patients receiving RHC than in those not receiving RHC. Nonetheless, this observation does not imply a causal link between the RHC and the increased likelihood of death, as the decision to recommend the procedure is not random. 
Doctors may be  more inclined to recommend RHC for patients they perceive to be in worse condition. 


Similar to the majority of previous studies, our method also relies on Assumption \ref{assumption-1}.   
To implement the proposed method, we need to estimate the propensity score $e(X)$ and the outcome regression functions $\mu_a(X)$ for $a = 0, 1$. We employ logistic regression to estimate them, using  the same variable selection method of \citet{Hirano-Imbens2001}. Both the propensity score and outcome regression functions are implemented with two-fold cross-fitting. Denote $\hat e(X)$, $\hat \mu_a(X)$ for $a = 0, 1$ as the estimated propensity score and outcome regression functions, respectively. Then,  the ATE of RHC on survival, using the doubly robust estimator~\citep{Bang-Robins-2005},  
 is $-0.055$ with an estimated standard error of 0.013. In other words, receiving RHC increases mortality by 5.5\%, aligning with the findings of most previous analyses.  
Additionally, with $\hat e(X)$ and $\hat \mu_a(X)$, we can estimate the Fr\'{e}chet--Hoeffding bounds on $\mathrm{FNA}$ in Lemma \ref{lem-1}, the bounds of $\rho(x)$ in Proposition \ref{prop-range-rho}, and the proposed estimator of $ \beta_{\rho}$. 

 \begin{figure}[t]%
\centering
\includegraphics[scale = 0.7]{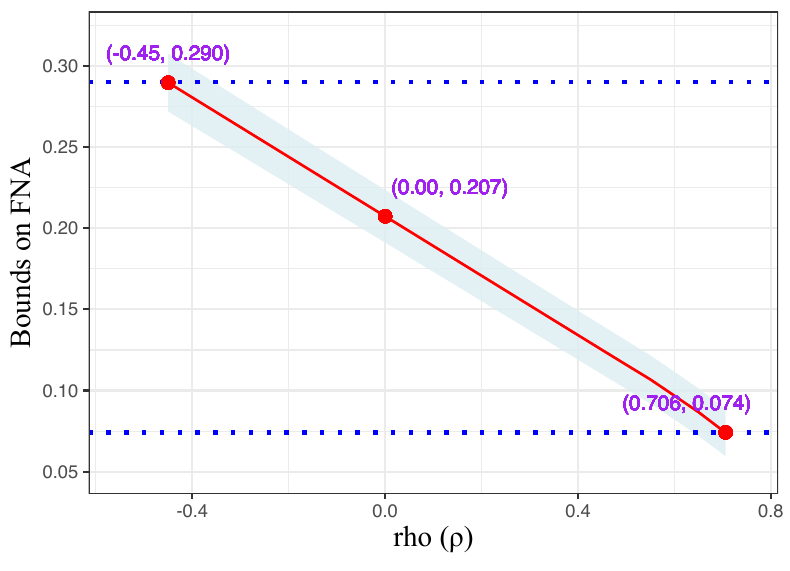}
\caption{The proposed estimator $\hat \beta_{\rho}$ for $\rho \in [-0.450, 0.706]$. The solid red line is the value of $\hat \beta_{\rho}$, the shaded areas depict the 95\% Wald-type confidence intervals for $\hat \beta_{\rho}$.  The dotted blue lines correspond to the estimated Fr\'{e}chet--Hoeffding lower and upper bounds in Lemma \ref{lem-1}. 
 Three points are highlighted in red, with their corresponding $\rho$-values (x-axis) and $\hat{\beta}_{\rho}$ values (y-axis) reported.} \label{app-fig2}
\end{figure}

\subsection{Results}


We present the proposed estimator $\hat \beta_{\rho}$ across varying values of $\rho$   
  in Fig. \ref{app-fig2}. This serves as a sensitivity analysis for FNA   with $\rho$ as the sensitivity parameter.
 From Fig. \ref{app-fig2}, we have the following observations: 
  (a) The estimator $\hat \beta_{\rho}$ is approximately linear in $\rho$ over the range $\rho \in [-0.450, 0.706]$, with a slope of about $-0.187$. This modest slope suggests that a one-unit increase in $\rho$ reduces FNA by only 18.7\%.  
 (b) The estimated Fr\'{e}chet--Hoeffding bounds on FNA fall within the interval [0.074, 0.290], with the endpoints corresponding to $\hat\beta_{\rho}$ at $\rho = -0.450$ and $\rho = 0.706$, respectively.  
Thus, if we identify a narrower range for $\rho$ than $[-0.450, 0.706]$, we can obtain tighter bounds than the  Fr\'{e}chet--Hoeffding bounds. (c) Considering that positive correlation is plausible in this study, since patients with good health status generally tend to have better outcomes, it is suitable to set $\rho_l = 0$.   
In this case, the proposed method yields an upper bound on FNA of 0.207,  significantly smaller than the Fr\'{e}chet--Hoeffding upper bound.    

\section{Discussion}  \label{sec9}
 In the main text, we focus on the bounds for $\text{FNA}(x)$, which is one of the four principal scores~\citep{Ding-Lu2017} defined by 
     $$\pi_{jk}(x) = \P(Y^0 =j, Y^1 = k \mid X=x) \text{ for } j, k= 0, 1. $$  
We derive sharp bounds for all $\pi_{jk}(x)$, with details in Supplementary Material S1.6. 
An important extension is to consider continuous outcomes. 
In such a case,  $\text{FNA}(x)$ is defined as $\text{FNA}(x) =   \P(Y^1 - Y^0 < 0 \mid X=x).$ 
 However, unlike the binary case,  $\rho(x)$ does not uniquely determine the joint distribution of general continuous potential outcomes, 
 making it difficult to establish a direct connection between $\rho(x)$ and $\mathrm{FNA}(x)$ without imposing additional assumptions.  
  To proceed, we can use copula to parametrize the joint distribution of the potential outcomes~\citep{Lu-etal2023, Chernozhukov-etal2023, Zhang-Yang2025}. 
  In the special case of bivariate normal potential outcomes:    
 \begin{equation*}  
 (Y^0, Y^1) \mid X=x\sim  \mathcal{N} \left(  \begin{pmatrix} \mu_0(x) \\   \mu_1(x)  \end{pmatrix}, 
      \begin{pmatrix} 
               \sigma_{0}^2(x) &   \rho(x) \sigma_{0}(x) \sigma_{1}(x)   \\ 
        \rho(x) \sigma_{0}(x) \sigma_{1}(x)  & \sigma_{1}^2(x) 
    \end{pmatrix} 
\right),   \end{equation*}
under Assumption \ref{assumption-1}, $(\mu_0(x), \mu_1(x), \sigma_0(x), \sigma_1(x))$ are identifiable, and the identifiability of joint distribution depends only on the PCC $\rho(x)$.  
 Under this normality assumption, given $\rho(x)$, we can obtain the closed-form of the identifiability formula for $\mathrm{FNA}(x)$, 
       $\mathrm{FNA}(x)  
          = \Phi \left ( - \tau(x) / \sigma_{\tau}(x)\right ),$  
  where $\sigma_{\tau}(x) = \sqrt{\sigma_1^2(x) + \sigma_0^2(x)  - 2 \rho(x) \sigma_0(x) \sigma_1(x)}$,  and $\Phi(\cdot)$ is the cumulative distribution function of a standard normal distribution. 
  Therefore, sharp bounds on $\mathrm{FNA}(x)$ can be readily obtained for a given range of $\rho(x)$.  
  For general continuous outcomes, let $\mathcal{C}$ denote the class of copulas that share a common range of $\rho(x)$:   
	$\mathcal{C} =  \{ C(\cdot, \cdot) \mid C(\cdot, \cdot) \text{ is a copula such that }  \rho(x) \in [\rho_l(x), \rho_u(x)] \}.$
In this case, it is challenging to derive the sharp bounds on $\text{FNA}(x)$, namely
$\inf_{C \in \mathcal{C}} \text{FNA}(x)$ and $\sup_{C \in \mathcal{C}} \text{FNA}(x)$. 
 Due to the technical complexity, we leave it to future work.

 When revising the manuscript, we noted that related to deriving sharp bounds on $\text{FNA}(x)$, \citet{Bodik-etal2025} focus on continuous outcomes and construct narrower prediction intervals for the individual treatment effect $Y^1 - Y^0$ by introducing a mild specification of $\rho(x)$ between the potential outcomes.
  We anticipate more future research on related topics.

\section{Competing interests}
No competing interest is declared.



\section{Acknowledgments}
The authors thank the anonymous reviewers for their valuable suggestions. 
Peng Wu was partially supported by the National Science Foundation of China (No. 12301370). Yue Liu was partially supported by the National Science Foundation of China (No. 12201629). Peng Ding was partially supported by the U.S. National Science Foundation (No. 1945136, No. 2514234). Peng Wu and Zhi Geng were partially supported by the BTBU Digital Business Platform Project by BMEC, the Beijing Key Laboratory of Applied Statistics and Digital Regulation, and the Academy for Interdisciplinary Studies at BTBU.


\bibliographystyle{plainnat}
\bibliography{reference}



%

 \newpage

  \begin{center}
\bf \Large 
Supplementary Material  for ``Quantifying  \\ Individual Risk for Binary Outcomes"
\end{center}

\setcounter{equation}{0}
\setcounter{section}{0}
\setcounter{figure}{0}
\setcounter{example}{0}
\setcounter{proposition}{0}
\setcounter{corollary}{0}
\setcounter{theorem}{0}
\setcounter{table}{0}
\setcounter{condition}{0}
\setcounter{lemma}{0}
\setcounter{remark}{0}

\renewcommand {\theproposition} {S\arabic{proposition}}
\renewcommand {\theexample} {S\arabic{example}}
\renewcommand {\thefigure} {S\arabic{figure}}
\renewcommand {\thetable} {S\arabic{table}}
\renewcommand {\theequation} {S\arabic{equation}}
\renewcommand {\thelemma} {S\arabic{lemma}}
\renewcommand {\thesection} {S\arabic{section}}
\renewcommand {\thetheorem} {S\arabic{theorem}}
\renewcommand {\thecorollary} {S\arabic{corollary}}
\renewcommand {\thecondition} {S\arabic{condition}}
\renewcommand {\thepage} {S\arabic{page}}
\renewcommand {\theremark} {S\arabic{remark}}

\setcounter{page}{1}

  \setcounter{equation}{0}
\renewcommand {\theequation} {S\arabic{equation}}
  \setcounter{lemma}{0}
\renewcommand {\thelemma} {S\arabic{lemma}}
   \setcounter{definition}{0}
\renewcommand {\thedefinition} {S\arabic{definition}}
   \setcounter{example}{0}
\renewcommand {\theexample} {S\arabic{example}}
   \setcounter{proposition}{0}
\renewcommand {\theproposition} {S\arabic{proposition}}
   \setcounter{corollary}{0}
\renewcommand {\thecorollary} {S\arabic{corollary}}

 \bigskip 
 

 \section{Additional Results}

  \subsection{Alternative Measures}
  
 The following discussion corresponds to the remark at the end of Section \ref{sec3-1}.
 
  There are other ways to delineate the joint distribution $\P(Y^0, Y^1 \mid X=x)$. For example, we can define the risk difference (RD), risk ratio (RR) and odds ratio (OR) between $Y^1$ and $Y^0$ given $X=x$ as 
 \begin{align*}
 \text{RD}(x) ={}& \P(Y^1=1 \mid  Y^0=1, X=x) - \P(Y^1=1 \mid  Y^0=0, X=x), \\
 \text{RR}(x) ={}&\dfrac{\P(Y^1=1 \mid  Y^0=1, X=x)}{\P(Y^1=1 \mid  Y^0=0, X=x)}, \\
 \text{OR}(x) ={}& \frac{\P(Y^1=1 \mid  Y^0=1, X=x) \P(Y^1 =0 \mid  Y^0 = 0, X=x)}{\P(Y^1 =0 \mid  Y^0 = 1, X=x) \P(Y^1 = 1 \mid  Y^0 = 0, X=x)}.
 \end{align*}  

The following Proposition \ref{prop-2} establishes the connection between $\rho(x)$ and these three measures. 

\begin{proposition} \label{prop-2}  Assume Assumption \ref{assumption-1}.

(a) The following five statements are equivalent: $Y^0 \indep Y^1 \mid X=x$,  $\rho(x)=0$, $\textup{RD}(x)=0$, $\textup{RR}(x)=1$, and $\textup{OR}(x)=1$. 

(b) If $\P(Y^0=a, Y^1=b \mid X=x) > 0$ for all $a,b=0,1$, then the following four statement are  equivalent:  $\rho(x) > 0$, $\textup{RD}(x)>0$, $\textup{RR}(x)>1$, and $\textup{OR}(x)>1$. 

\end{proposition}

Compared with the three alternative measures in Proposition \ref{prop-2}, using $\rho(x)$ as the sensitivity parameter leads to a simpler form of bounds on $\mathrm{FNA}(x)$.

\begin{proof}[Proof of Proposition \ref{prop-2}.]      

  By the definitions of $\text{RD}(x)$, $\text{RR}(x)$, and $\text{OR}(x)$, we can verify  the equivalence of $\text{RD}(x)=0$, $\text{RR}(x)=1$, and $\text{OR}(x)=1$, as well as the equivalence of $\text{RD}(x)>0$, $\text{RR}(x)>1$, and $\text{OR}(x)>1$.  
This establishes the relationship among  $\text{RD}(x)$, $\text{RR}(x)$, and $\text{OR}(x)$. 

To prove Proposition \ref{prop-2}, it therefore suffices to establish the relationship between $\rho(x)$ and $\text{RD}(x)$. We claim that 
\begin{equation} \label{eq-s1} \begin{split}
 \rho(x)  \sqrt{ \V(Y^0 \mid X=x) \cdot  \V(Y^1\mid X=x)   }  = \text{RD}(x) \cdot \P(Y^0=1\mid X=x)\cdot \P(Y^0=0\mid X=x).
\end{split}
\end{equation}  
If  (\ref{eq-s1}) holds, then $\rho(x) = 0$ implies $\text{RD}(x) = 0$, establishing Proposition \ref{prop-2}(a). Moreover, if $\rho(x) > 0$, then $\text{RD}(x) > 0$, which establishes Proposition \ref{prop-2}(b). 

Next, we show that  (\ref{eq-s1}) holds. We have 
\begin{align*}
\text{RD}&(x) 
={} \P(Y^1=1 \mid Y^0=1, X=x)-   \P(Y^1 = 1  \mid Y^0=0, X=x) \\
={}&  \frac{ \P(Y^1=1, Y^0=1  \mid X=x)  }{ \P(Y^0=1  \mid X=x) } - \frac{\P(Y^1 = 1,  Y^0=0  \mid X=x)}{\P(Y^0=0  \mid X=x)} \\  
=&  \frac{ \P(Y^1=1, Y^0=1  \mid X=x)\P(Y^0=0  \mid X=x)}{\P(Y^0=1  \mid X=x)\P(Y^0=0  \mid X=x) }  - \frac{  \P(Y^1 = 1,  Y^0=0  \mid X=x) \P(Y^0=1 \mid X=x) }{\P(Y^0=1 \mid X=x)\P(Y^0=0  \mid X=x) } \\
=&  \frac{ \P(Y^1=1, Y^0=1  \mid X=x)\P(Y^0=0  \mid X=x) -  \P(Y^1 = 1,  Y^0=0  \mid X=x) \P(Y^0=1 \mid X=x) }{\P(Y^0=1  \mid X=x)\P(Y^0=0  \mid X=x) }     
\end{align*}
and therefore, 
\begin{align*}
&\rho(x) \cdot \sqrt{ \V(Y^0  \mid X=x) \cdot  \V(Y^1 \mid X=x) } \\
={}& \bfE(Y^1Y^0  \mid  X=x)-\bfE(Y^1  \mid  X=x) \cdot \bfE(Y^0 \mid X=x) \\
={}& \P(Y^1=1, Y^0=1  \mid X=x)  - \P(Y^1=1  \mid X=x) \cdot \P(Y^0=1  \mid X=x)   \\
={}& \P(Y^1=1, Y^0=1  \mid X=x)  \\ 
{}& -  \P(Y^0=1  \mid X=x)  \times \{ \P(Y^1=1, Y^0 = 0  \mid X=x) + \P(Y^1=1, Y^0 = 1  \mid X=x) \}  \\
={}& \P(Y^1=1, Y^0=1  \mid X=x)\cdot \P(Y^0=0  \mid X=x)  - \P(Y^1=1, Y^0 = 0  \mid X=x) \cdot \P(Y^0=1  \mid X=x) \\
={}& \text{RD}(x) \cdot  \P(Y^0=1  \mid X=x)\P(Y^0=0  \mid X=x).  
\end{align*}
This completes the proof. 

\end{proof}

 \subsection{An Example for Sensitivity Analysis}
 The following example corresponds to the remark at the beginning of Section \ref{sec3-range}.
 
 \begin{example} Suppose that the data-generating process for the potential outcomes $(Y^0, Y^1)$ is  
 	\begin{align*} 
		\P(Y^0=1  \mid U) ={}& \frac{\exp(U)}{1+\exp(U)}, 	\\
	\P(Y^1=1 \mid U) ={}& \frac{\exp(1+U)}{1+\exp(1+U)},  
	\end{align*}
 where $U$ is an unmeasured variable taking value in $\{0, 2\}$ uniformly. Based on Monte Carlo, we obtain the true FNA value is 8.85\%, the PCC between $Y^0$ and $Y^1$ is 0.128, $\mu_1 = 0.842$, and $\mu_0 = 0.690$. For $\rho \in [0, 1]$ (positive correlation),  the bounds in Theorem \ref{thm-1} are $$\max\{ \mu_0(1-\mu_1)- \rho \sqrt{\mu_0(1-\mu_0)\mu_1(1-\mu_1)}, 0 \} = \max\{0.109-0.169\rho, 0\}, \quad \rho \in [0, 1].$$ The upper and lower bounds in Lemma \ref{lem-1} are 0.158 and 0, respectively. Figure \ref{fig1} displays the results.  As shown in Figure \ref{fig1},  the upper bound of the proposed method (at $\rho = 0$) is significantly smaller than that in Lemma \ref{lem-1}, and the lower bound is larger than that in Lemma \ref{lem-1} when $\rho$ is not sufficiently large.  

 \smallskip 
\begin{figure}[H]%
\centering
\includegraphics[scale = 0.80]{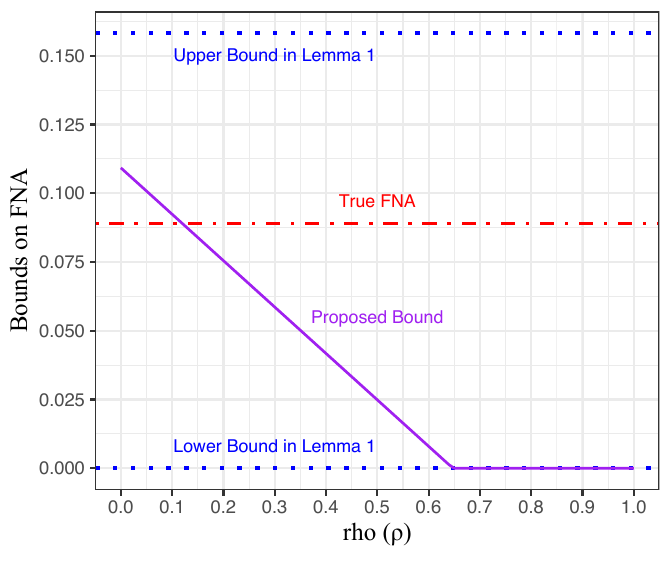}
\caption{\centering Bounds on FNA for various values of $\rho$.}
 \label{fig1}
\end{figure}    
 \end{example}

 \subsection{Further Analysis of the Upper Bounds on FNA}
 The following discussion corresponds to the remark in Section \ref{sec4} on the upper bounds on FNA in Lemma \ref{lem-1} and Theorem \ref{thm-1}.  

\begin{proposition}[Upper Bounds]  \label{prop-3}  
\, 

(a)  The upper bound $U_{\text{FNA}}(x)$ in Lemma \ref{lem-1}(a) satisfies  
	\[  U_{\text{FNA}}(x) =   \min\{  \mu_0(x), 1 - \mu_1(x) \}  \leq \frac{1-\tau(x)}{2}.  \]

(b) When $\rho_l(x) \geq 0$,  the upper bound $U_{\text{FNA}}(x) $ in Theorem \ref{thm-1}  satisfies that   
 \[ U_{\text{FNA}}(x) \leq  \mu_0(x) \{ 1 - \mu_1(x) \}   \leq (1-\tau(x))^2/4. \] 



	
\end{proposition}

Proposition  \ref{prop-3} suggests that as $\tau(x)$ increases from 0 to 1, the upper bound gradually decreases from 1/2 by \ref{prop-3}(a) (or 1/4 by \ref{prop-3}(b))   to 0.   



\begin{proof}[Proof of Proposition \ref{prop-3}.]  
  Observe that $1 - \tau(x) = \mu_0(x) + (1 - \mu_1(x))$, 
and for ease of presentation, let 
\[    a = \mu_0(x), \quad b = 1 - \mu_1(x).     \]
The inequalty in  Proposition \ref{prop-3}(a) follows from 
\[       \min\{a, b\} \leq \frac{a+b}{2} \]  
for positive constants $a$ and $b$.   

 
The inequality in  Proposition \ref{prop-3}(b) follows from 
 $$U_{\text{FNA}}(x) \leq  \mu_0(x) \{ 1 - \mu_1(x) \}  = a b  \leq  \left( \frac{ a +  b }{2} \right )^2 =   (1-\tau(x))^2/4.$$ 
 This completes the proof. 
 
\end{proof}

\subsection{Estimation Based on Cross-Fitting} 
The following discussion corresponds to the remark in Section \ref{sec6}.  
 The proposed estimation procedure of 
   \[ \beta_{\rho} 
	=  \bfE\Big [ \max\Big\{  \mu_0(X) ( 1 - \mu_1(X) )  -  \rho  \sqrt{  \mu_0(X)(1-\mu_0(X) ) \mu_1(X)(1-\mu_1(X) )    }, 0 \Big \}  \Big ]     \]  
  is summarized in Algorithm 1. 


\noindent\rule[0.25\baselineskip]{\textwidth}{1pt}
{\bf Algorithm 1.}    We partition the data into $K$ distinct groups, each containing $n/K$ observations (for simplicity, assuming equal-sized groups, but not mandatory), and define $I_{1}$ through $I_{K}$ as the corresponding index sets. Additionally, let $I_{k}^{C} = \{1, ...,  n\}\setminus I_k$ denote the complement of $I_{k}$ for $k = 1, \ldots, K$.

\noindent 
{\bf Step 1 (nuisance parameter training with cross-fitting).}  

 \medskip
  {\bf for } $k = 1$ {\bf to} $K$ {\bf do}
 		
(1) Construct estimates $\boldsymbol{\hat \eta}_{-k} = (\hat e(x), \hat \mu_1(x), \hat \mu_0(x))$   
using the subsample indexed by $I_{k}^{C}$. 

(2) Obtain the predicted values of $\boldsymbol{\eta}(X_i)$ for $i \in I_k$, denoted by  $\boldsymbol{\hat \eta}_{-k}(X_i)$.

{\bf end}	
%

 All the predicted values of $\boldsymbol{\hat \eta}_{-k}(X_i)$  for $i \in \{1,2,..., n\}$ consist of the final estimates of $\boldsymbol{\eta}(X_i)$, denoted by $\boldsymbol{\hat \eta}(X_i).$

\medskip  \noindent 
{\bf Step 2 (estimation of $\beta_{\rho}$).} The proposed estimator of $\beta_\rho$ is  
\begin{equation*}  
\hat \beta_{\rho} =  \frac{1}{n}  \sum_{i=1}^n   \varphi(Y_i, A_i,X_i; \boldsymbol{\hat  \eta}, \rho). 
\end{equation*}  
\noindent\rule[0.25\baselineskip]{\textwidth}{1pt}

\bigskip 
In Algorithm 1, the full sample is divided into $K$ subsets. For each subset indexed by $I_k$, the nuisance parameters are trained in the corresponding complementary subset $I_k^C$ while predictions are made within $I_k$. 
The predicted values of nuisance parameters from all subsets collectively form the final estimates.  Subsequently, the proposed estimator of $\beta_{\rho}$ is obtained as a sample average of  $\varphi(Y, A,X; \boldsymbol{\hat  \eta}, \rho).$
The cross-fitting approach, as outlined in Algorithm 1, has been widely adopted recently~\citep{wager2018estimation, Athey-Tibsirani-Wager-2019, Kennedy-2020}. In Step 1 of Algorithm 1, we allow for the use of flexible machine learning techniques to estimate the nuisance parameters.

  \subsection{Additional Results for Simulation}  \label{app-F}
The following discussion corresponds to the remark in Section \ref{sec7}.   

To further explore the impact of the covariate dimension on the proposed estimator $\hat \beta_{\rho}$, we introduce three additional simulation cases:   
 \begin{enumerate} 
 \item[] {\bf (C4)} $\P(Y^1=1\mid X, U) = \text{expit}( X^{\intercal}\alpha + 1 + U)$ and $\P(Y^0=1\mid X, U) = \text{expit}( X^{\intercal}\alpha + U)$, where $\alpha = (1,1/2, ..., 1/2^{p-1})^{\intercal}$,  $p = 20$ is the dimension of $X$.  
 
   \item[] {\bf (C5)} The data-generation mechanism is the same as in case (C4), except that $p = 50$. 

    \item[] {\bf (C6)}  The data-generation mechanism is the same as in case (C4), except that $p = 100$.  
\end{enumerate}
In cases (C4)--(C6), we set the covariate dimension at  $p = 20, 50, 100$, respectively. All covariates are relevant with nonzero coefficients on potential outcomes, but only a few are truly important with large coefficients. This represents a common approximate sparsity scenario. Instead of using logistic regression directly, we adopt the logistic regression with $L_1$ penalty to mitigate overfitting, 
and the corresponding tuning parameters are determined by using 5-fold cross-validation.


\begin{table*}[!ht]
\centering 
\caption{\centering Simulation results of $\hat \beta_{\rho}$ with $\rho = 0.0, 0.1, 0.2, 0.3, 0.4$ for cases (C4)--(C6).  \label{tab-s2}}
\scriptsize
\begin{tabular}{rrr | rrrr | rrrr | rrrr}
\hline
& &  &    \multicolumn{4}{c|}{$n =500$}    & \multicolumn{4}{c|}{$n =1000$} &  \multicolumn{4}{c}{$n =2000$}  \\
Case & $\rho$ & $\beta_{\rho}$ & Bias & SD & ESE & CP95 & Bias & SD & ESE & CP95 & Bias & SD & ESE & CP95 \\ 
\hline
& 0.0 & 0.128 & -0.000 & 0.023 & 0.022 & 0.944 & 0.000 & 0.015 & 0.015 & 0.950 & -0.001 & 0.010 & 0.010 & 0.956 \\ 
\multirow{2}{*}{(C4)} & 0.1 & 0.108 & -0.004 & 0.023 & 0.022 & 0.938 & -0.002 & 0.015 & 0.015 & 0.936 & -0.003 & 0.010 & 0.010 & 0.930 \\ 
& 0.2 & 0.089 & -0.004 & 0.023 & 0.022 & 0.932 & -0.003 & 0.015 & 0.015 & 0.936 & -0.005 & 0.009 & 0.010 & 0.938 \\ 
& 0.3 & 0.069 & -0.006 & 0.023 & 0.022 & 0.920 & -0.004 & 0.015 & 0.015 & 0.936 & -0.004 & 0.010 & 0.010 & 0.930 \\ 
\hline  
& 0.0 & 0.128 & -0.003 & 0.029 & 0.024 & 0.938 & 0.002 & 0.015 & 0.015 & 0.944 & -0.000 & 0.011 & 0.010 & 0.956 \\ 
\multirow{2}{*}{(C5)}  & 0.1 & 0.108 & -0.005 & 0.029 & 0.025 & 0.952 & -0.000 & 0.016 & 0.015 & 0.934 & -0.002 & 0.010 & 0.010 & 0.944 \\ 
& 0.2 & 0.089 & -0.007 & 0.029 & 0.024 & 0.928 & -0.002 & 0.015 & 0.015 & 0.954 & -0.003 & 0.010 & 0.010 & 0.946 \\ 
& 0.3 & 0.069 & -0.008 & 0.029 & 0.025 & 0.932 & -0.004 & 0.015 & 0.015 & 0.938 & -0.004 & 0.010 & 0.010 & 0.942 \\ 
\hline
& 0.0 & 0.128 & -0.027 & 0.093 & 0.041 & 0.890 & 0.002 & 0.016 & 0.015 & 0.944 & 0.001 & 0.011 & 0.010 & 0.952 \\ 
\multirow{2}{*}{(C6)} & 0.1 & 0.108 & -0.032 & 0.108 & 0.042 & 0.914 & 0.001 & 0.015 & 0.015 & 0.954 & -0.001 & 0.010 & 0.010 & 0.954 \\ 
& 0.2 & 0.089 & -0.025 & 0.093 & 0.037 & 0.912 & -0.000 & 0.015 & 0.015 & 0.952 & -0.002 & 0.010 & 0.010 & 0.948 \\ 
& 0.3 & 0.069 & -0.044 & 0.125 & 0.043 & 0.894 & -0.004 & 0.015 & 0.015 & 0.952 & -0.003 & 0.010 & 0.010 & 0.940 \\ 
\hline 
\end{tabular}
\begin{flushleft}
Note: Bias and SD are the Monte Carlo bias and standard deviation over the 1,000 simulations of the points estimates of $\hat \beta_{\rho}$, ESE and CP95 are the averages of estimated asymptotic standard error and coverage proportions of the 95\%  Wald-type confidence intervals based on Theorems \ref{thm-4} and \ref{thm-5}.
\end{flushleft}
\end{table*}

\begin{figure*}[!t]%
\centering
\includegraphics[scale = 0.8]{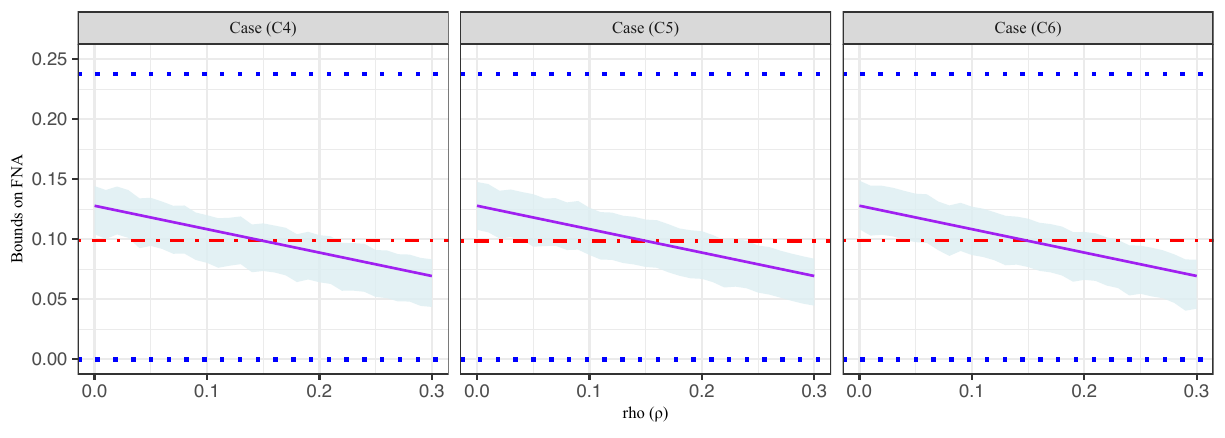}
\caption{Bounds on FNA for various values of $\rho$ in cases (C4)-(C6), based on a simulation with a sample size of 2,000. The dotdash red line represents the true value of FNA, the solid purple line is the true value of $\beta_{\rho}$, the dotted blue lines denote the Fr\'{e}chet--Hoeffding lower and upper bounds in Lemma \ref{lem-1}, and the shaded areas depict the 95\% confidence intervals for $\hat \beta_{\rho}$.} \label{fig-s1}
\end{figure*}


The simulation results for cases (C4)--(C6) are similar to those in cases (C1)--(C3) of the manuscript,  and are presented in Table \ref{tab-s2} and Figure \ref{fig-s1}, which indicates that our method performs well even when the dimension of covariates is moderately high.


\subsection{Extension to Other Estimands}

The following discussion corresponds to the remark in Section \ref{sec9}.  In the main text, we focus on analyzing the bounds on $\text{FNA}(x)$, which is one of the four principal scores~\citep{Ding-Lu2017}, defined as 
     \[ \pi_{jk}(x) = \P(Y^0 =j, Y^1 = k \mid X=x) \text{ for } j, k= 0, 1. \]
Recall $m(x) = \mu_0(x)(1-\mu_0(x) ) \mu_1(x)(1-\mu_1(x) )$. The following Proposition \ref{prop-10} (The proofs are provided at the end of the next section)  shows the sharp bounds on the remaining three principal scores under Assumptions 1--2 and Assumptions 1--3, respectively. 

\begin{proposition} \label{prop-10}  We have that

(a)   Under Assumptions 1--2, 
\begin{itemize}
\item  $\pi_{11}(x) \in [ L_{\pi_{11}}(x),  U_{\pi_{11}}(x)],$
where 
		\begin{align*}
	   L_{\pi_{11}}(x)
					={}& \max \Big \{  \mu_0(x) \mu_1(x) + \rho_l(x) \sqrt{ m(x)}, \mu_0(x) + \mu_1(x) - 1, 0 \Big \}, \\ 
	 U_{\pi_{11}}(x) 
					={}& \min \Big\{  \max \big \{\mu_0(x)  \mu_1(x) +  \rho_u(x) \sqrt{m(x)}, 0 \big \}, \mu_0(x), \mu_1(x) \Big \}.  
	\end{align*} 
These bounds on $\pi_{11}(x)$ are sharp.

\item $\pi_{01}(x) \in [ L_{\pi_{01}}(x),  U_{\pi_{01}}(x)],$
where 
\begin{align*}
L_{\pi_{01}}(x)
			={}& \max \Big \{  (1-\mu_0(x)) \mu_1(x)    -   \rho_u(x) \sqrt{ m(x) },  \mu_1(x) - \mu_0(x), 0 \Big\}, \\ 
U_{\pi_{01}}(x) 
					={}&  \min\Big\{ \max \big \{ (1-\mu_0(x)) \mu_1(x) -   \rho_l(x) \sqrt{m(x)}, 0 \big \}, 1-\mu_0(x), \mu_1(x)\Big\}. 
	\end{align*} 
 These bounds on $\pi_{01}(x)$ are sharp.

\item $\pi_{00}(x) \in [ L_{\pi_{00}}(x),  U_{\pi_{00}}(x)],$
where 
		\begin{align*}
	   L_{\pi_{00}}(x)
					={}& \max\Big\{ (1-\mu_0(x))(1- \mu_1(x)) + \rho_l(x) \sqrt{ m(x)}, 1-\mu_0(x)-\mu_1(x), 0 \Big\}, \\ 
	 U_{\pi_{00}}(x) 
			={}&   \min\Big\{ \max \big \{ (1-\mu_0(x))(1- \mu_1(x)) + \rho_u(x) \sqrt{m(x)}, 0 \big \},  1- \mu_0(x), 1- \mu_1(x) \Big \}.  
	\end{align*} 
These bounds on $\pi_{00}(x)$ are sharp. 

\end{itemize}

(b)  Under Assumptions 1--3, the bounds in Proposition S2(a) simplify to 

\begin{itemize}

\item  $\pi_{11}(x) \in [ L_{\pi_{11}}(x),  U_{\pi_{11}}(x)],$
where 
		\begin{align*}
	   L_{\pi_{11}}(x)
					={}&  \mu_0(x) \mu_1(x) + \rho_l(x) \sqrt{ m(x)}, \\ 
	 U_{\pi_{11}}(x) 
					={}& \mu_0(x)  \mu_1(x) +  \rho_u(x) \sqrt{m(x)}.  
	\end{align*} 
These bounds on $\pi_{11}(x)$ are sharp.

\item $\pi_{01}(x) \in [ L_{\pi_{01}}(x),  U_{\pi_{01}}(x)],$
where 
\begin{align*}
L_{\pi_{01}}(x)
			={}&  (1-\mu_0(x)) \mu_1(x)    -   \rho_u(x) \sqrt{ m(x) }, \\ 
U_{\pi_{01}}(x) 
					={}& (1-\mu_0(x)) \mu_1(x) -   \rho_l(x) \sqrt{m(x)}. 
	\end{align*} 
 These bounds on $\pi_{01}(x)$ are sharp.

\item $\pi_{00}(x) \in [ L_{\pi_{00}}(x),  U_{\pi_{00}}(x)],$
where 
		\begin{align*}
	   L_{\pi_{00}}(x)
					={}& (1-\mu_0(x))(1- \mu_1(x)) + \rho_l(x) \sqrt{ m(x)  }, \\ 
	 U_{\pi_{00}}(x) 
					={}&  (1-\mu_0(x))(1- \mu_1(x)) + \rho_u(x) \sqrt{m(x)}.  
	\end{align*} 
These bounds on $\pi_{00}(x)$ are sharp.

\end{itemize}
\end{proposition}     

The bounds on $\pi_{jk}(x)$ in Proposition \ref{prop-10} have a form similar to those for $\text{FNA}(x)$. In addition, improving the bounds on $\text{FNA}(x)$ directly leads to tighter bounds on the other three principal scores. Specifically, under the ignorability assumption, the joint distribution $\P(Y^0, Y^1 \mid X=x)$  involves four unknown parameters ($\pi_{jk}(x)$ for $j, k= 0, 1$) satisfying three equations: 
	\begin{align} \label{R1}
	   \pi_{10}(x) + \pi_{11}(x)  = \mu_0(x),~  \pi_{01}(x) + \pi_{11}(x)  = \mu_1(x),~
	\sum_{j=0}^1 \sum_{k=0}^1  \pi_{jk}(x) = 1, 
	 \end{align}
where $\text{FNA}(x) = \pi_{10}(x)$, $\mu_a(x)=\mathbb{E}[Y\mid X=x, A=a]$ for $a = 0, 1$. 
Equation \eqref{R1} contains only one free parameter. By taking $\pi_{10}(x)$ as the free parameter, the remaining three parameters can be expressed in terms of it: 
	\begin{align*} \begin{cases}
		\pi_{11}(x) ={} \mu_0(x) -  \text{FNA}(x), \\
		\pi_{01}(x) ={}  \mu_1(x) - \mu_0(x) +  \text{FNA}(x), \\
		\pi_{00}(x) ={}   1 - \mu_1(x) - \text{FNA}(x). 
	\end{cases}\end{align*}
	Therefore, given the marginal distributions of potential outcomes, since $\text{FNA}(x)$ determines all other principal scores, the improved bounds on $\text{FNA}(x)$ can directly lead to tighter bounds on the other three principal scores.

These bounds in Proposition \ref{prop-10} broaden the applicability of the proposed framework. For example, the results of Proposition \ref{prop-10} can be applied to analyze probability of causation. We provide an example below.

\bigskip 
\begin{example}[Attribution Analysis]  
Causal inference involves not only evaluating the effects of causes but also  the causes of  effects~\citep{Dawid2022, pearl2009causality, Pearl-etal2016-primer, Pearl-Mackenzie2018}.   
The probability of sufficiency (PS) and the probability of necessity (PN) are the routinely used quantities for attribution defined as  
   \begin{align*}
   \textup{PS}(A \Rightarrow Y)    ={}& \P( Y^1= 1 \mid A = 0, Y = 0 ), \\
   \textup{PN}(A \Rightarrow Y)  ={}& \P(  Y^0 = 0 \mid A=1, Y=1 ). 
   \end{align*}
Take $\textup{PN}(A \Rightarrow Y)$ as an example. It can be written as 
   \[     \frac{ \P(  Y^0 = 0, Y^1 = 1 \mid A=1 )}{ \P(Y^1=1 \mid A=1)},   \]
where the denominator is an identifiable quantity and the numerator equals to $\bfE[ \pi_{01}(X) \mid A=1] = \bfE[ e(X) \pi_{01}(X) ] /\P(A=1)$ under Assumption \ref{assumption-1}. Thus, we could derive the bounds on $\textup{PN}(A \Rightarrow Y)$ by applying Proposition \ref{prop-10}(b).  
\end{example}

\section{Proofs} 

\subsection{Proof of Lemma \ref{lem-1}} \label{app-B}

\begin{proof}[Proof of Lemma \ref{lem-1}.] By the  
Fr\'{e}chet--Hoeffding inequality, 
\begin{align}
&  \P(Y^0=1, Y^1 = 0 \mid X=x)  \notag \\
\leq{}& \min \big \{  \P(Y^0=1 \mid X=x),    \P(Y^1 = 0\mid X=x) \big \} \label{eq-tmp1}   \\
={}&   \min \big\{   \mu_0(x),   1 - \mu_1(x) \big\}   \notag
\end{align} 
and 		
\begin{align}  
& \P(Y^0=1, Y^1 = 0 \mid X=x) \notag \\
\geq{}&  \max \big \{  \P(Y^0=1 \mid X=x) + \P(Y^1 = 0 \mid X=x) - 1, 0 \big \} \label{eq-tmp2} \\
={}&   \max \big \{ \mu_0(x) - \mu_1(x) , 0 \big\}.  \notag
\end{align} 
Next, we show these bounds are sharp, i.e., they are achievable.  
Note that $$\P(Y^0=1, Y^1 = 1 \mid X=x) = 0 \iff \mu_0(x) = \P(Y^0 = 1 \mid X =x) = \P(Y^0 = 1, Y^1 = 0 \mid X =x),$$ and  $$\P(Y^0=0, Y^1 = 0 \mid X=x) = 0 \iff  1 - \mu_1(x) = \P(Y^1 = 0 \mid X =x) = \P(Y^0 = 1, Y^1 = 0 \mid X =x).$$ 
Thus,  \eqref{eq-tmp1} becomes equality if and only if $\P(Y^0=1, Y^1 = 1 \mid X=x) = 0$ or $\P(Y^0=0, Y^1 = 0 \mid X=x) = 0$. 

Also, since $\mu_0(x) - \mu_1(x) =  \P(Y^0=1 \mid X=x) - \P(Y^1 = 1 \mid X=x) =  \P(Y^0=1, Y^1 = 0 \mid X=x) - \P(Y^0 = 0, Y^1 = 1 \mid X=x)$, then \eqref{eq-tmp2} becomes equality if and only if $\P(Y^0=0, Y^1 = 1 \mid X=x) = 0$ or $\P(Y^0=1, Y^1 = 0 \mid X=x) = 0$.  

\end{proof}


  
  \subsection{Proof of Proposition \ref{prop-1}}
\begin{proof}[Proof of Proposition \ref{prop-1}.]    Under Assumption \ref{assumption-1}, the joint distribution $\P(Y^0, Y^1 \mid X=x)$ involves four unknown parameters 
$$\pi_{jk}(x) = \P(Y^0 =j, Y^1 = k \mid X=x) \text{ for } j, k= 0, 1.$$ They satisfy three equations:  
 \begin{align*} 
 \begin{cases}
 & \pi_{10}(x) + \pi_{11}(x)  = \mu_0(x), \\
 &  \pi_{01}(x) + \pi_{11}(x)  = \mu_1(x), \\
  & \sum_{j=0}^1 \sum_{k=0}^1  \pi_{jk}(x) = 1. 
 \end{cases}
 \end{align*} 
  The equivalence between (a) and (c) follows immediately by noting that $\text{FNA}(x) = \pi_{10}(x)$.

Next, we show the equivalence between (b) and (c). 
  Note that 
\[ \rho(x) = \frac{ \bfE[ Y^0 Y^1 \mid X =x ] - \bfE[ Y^0 \mid X=x ] \cdot  \bfE[ Y^1 \mid X=x ] } {\sqrt{ \V(Y^0 \mid X=x) \cdot  \V(Y^1 \mid X=x)   } }    \]
where under Assumption \ref{assumption-1},  the numerator is equal to 
	 \begin{align*}
	 & \P(Y^0=1, Y^1=1 \mid X=x) - \mu_0(x) \mu_1(x) \\
	  ={}& \P( Y^0 = 1 \mid X=x) -  \P( Y^0 = 1, Y^1= 0 \mid X=x) - \mu_0(x) \mu_1(x)  \\
	 ={}&  \mu_0(x)(1-\mu_1(x)) - \text{FNA}(x), 
       \end{align*}
  and the denominator is an identifiable quantity that equals 
  		\[  \sqrt{ \mu_0(x)(1 - \mu_0(x))  \mu_1(x)(1 - \mu_1(x))   }. \]   
	Therefore,  the identifiability of $\rho(x)$ is  equivalent to the   identifiability of $ \text{FNA}(x)$ under Assumption \ref{assumption-1}.  This finishes the proof.  
		
\end{proof}

\subsection{Proof of Proposition \ref{prop-range-rho}} 
\begin{proof}[Proof of Proposition \ref{prop-range-rho}.]
By the definition of $\rho(x)$, under Assumption \ref{assumption-1}, we have 
\begin{align*} 
\rho(x) ={}& \frac{ \bfE[ Y^0 Y^1 \mid X =x ] - \bfE[ Y^0 \mid X=x ] \cdot  \bfE[ Y^1 \mid X=x ] } {\sqrt{ \V(Y^0 \mid X=x) \cdot  \V(Y^1 \mid  X=x) } } \\
={}&    \frac{\P(Y^0=1, Y^1=1 \mid X=x) - \mu_0(x)\mu_1(x) }{\sqrt{  \mu_0(x)(1-\mu_0(x) ) \mu_1(x)(1-\mu_1(x) ) }}.
\end{align*} 
Applying the Fr\'{e}chet--Hoeffding inequality to $\P(Y^0=1, Y^1=1 \mid X=x)$ yields  
\begin{align*}
\P(Y^0=1, Y^1=1 \mid X=x) \geq{}& \max\{\mu_0(x) + \mu_1(x) -1, 0 \}, \\
\P(Y^0=1, Y^1=1 \mid X=x) \leq{}& \min\{\mu_0(x),  \mu_1(x)\},
\end{align*}  
which imply  
       		\begin{align*}
	\rho(x) \geq  L_{\rho}(x) 
	  = - \frac{ \min\{(1-\mu_0(x))(1-\mu_1(x)), \mu_0(x)\mu_1(x)\} }{\sqrt{  \mu_0(x)(1-\mu_0(x) ) \mu_1(x)(1-\mu_1(x) ) }}	\end{align*}
and 
\[  \rho(x) \leq 
	 U_{\rho}(x) =  \frac{ \min\{\mu_0(x)(1-\mu_1(x)), \mu_1(x)(1-\mu_0(x)) \} }{\sqrt{  \mu_0(x)(1-\mu_0(x) ) \mu_1(x)(1-\mu_1(x))  }}.  
 \]	   
 This finishes the proof. 
  
\end{proof}

\subsection{Proof of Theorem \ref{thm-1}}

\begin{proof}[Proof of Theorem \ref{thm-1}.] We prove Theorem \ref{thm-1}(a) and Theorem \ref{thm-1}(b), separately. 

\medskip \noindent 
\emph{Proof of Theorem \ref{thm-1}(a)}. 
{\bf First,} under Assumption \ref{assumption-2},  
  $\text{Corr}(Y^0, Y^1 \mid X=x) \geq \rho_l(x)$, and therefore, 
\begin{align*}
&  \bfE\Big [ \{Y^0 - \bfE(Y^0 \mid X=x)\} \cdot \{Y^1 - \bfE(Y^1 \mid X=x)\}  \mid  X=x \Big ]  \\ 
\geq{}& \rho_l(x) \cdot \sqrt{ \V(Y^0\mid X=x) \cdot  \V(Y^1\mid X=x)   }   
\end{align*}     
which is equivalent to 
\begin{align*}
{}&   \bfE[ Y^0 Y^1 \mid X =x ] - \bfE[ Y^0 \mid X=x ] \cdot  \bfE[ Y^1 \mid X=x ] \\
\geq{}&  \rho_l(x) \cdot \sqrt{ \V(Y^0\mid X=x) \cdot  \V(Y^1\mid X=x)   }    \\ 
\iff{}&   \P( Y^0 = 1, Y^1  = 1\mid X=x)  - \P(Y^0 = 1\mid X=x ) \cdot   \P(Y^1 = 1\mid X=x )  \\
\geq{}&  \rho_l(x) \cdot \sqrt{ \V(Y^0\mid X=x) \cdot  \V(Y^1\mid X=x)   }   \\
\iff{}&   \P( Y^0 = 1\mid X=x) -  \P( Y^0 = 1, Y^1= 0 \mid X=x)  - \P(Y^0 = 1\mid X=x ) \cdot   \P(Y^1 = 1\mid X=x )    \\
\geq{}&  \rho_l(x) \cdot \sqrt{ \V(Y^0\mid X=x) \cdot  \V(Y^1\mid X=x)   }    \\
\iff{}&  \text{FNA}(x) \leq{}  \P( Y^0 = 1\mid X=x)  - \P(Y^0 = 1\mid X=x ) \cdot   \P(Y^1 = 1\mid X=x )  \\
{}& - \rho_l(x) \cdot \sqrt{ \V(Y^0\mid X=x) \cdot  \V(Y^1\mid X=x)   }   \\
={}& \mu_0(x)\{1 - \mu_1(x) \}  - \rho_l(x) \sqrt{  \mu_0(x)(1-\mu_0(x) ) \mu_1(x)(1-\mu_1(x) )}.
\end{align*}
The inequality becomes equality  if and only if  $\text{Corr}(Y^0, Y^1 \mid X=x) = \rho_l(x)$.  

Similarly, $\text{Corr}(Y^0, Y^1 \mid X=x) \leq \rho_u(x)$ implies that 
\begin{align*}
&   \bfE[ \{Y^0 - \bfE(Y^0 \mid X=x)\} \{Y^1 - \bfE(Y^1 \mid X=x)\}  \mid  X=x ]   \\
\leq{}&  \rho_u(x) \cdot \sqrt{ \V(Y^0\mid X=x) \cdot  \V(Y^1\mid X=x)   },   
\end{align*}     
which is equivalent to 
\begin{align*}
& \P( Y^0 = 1\mid X=x) -  \P( Y^0 = 1, Y^1= 0 \mid X=x)   - \P(Y^0 = 1\mid X=x ) \cdot   \P(Y^1 = 1\mid X=x )  \\
\leq{}& \rho_u(x) \cdot   \sqrt{ \V(Y^0\mid X=x) \cdot  \V(Y^1\mid X=x) }, 
\end{align*}
that is, 
\begin{align*}
&\text{FNA}(x) \geq{}    \P( Y^0 = 1\mid X=x) \cdot \P(Y^1 = 0 \mid X= x)   -  \rho_u(x) \cdot  \sqrt{ \V(Y^0\mid X=x) \cdot  \V(Y^1\mid X=x) } \\
={}&  \mu_0(x)\{1 - \mu_1(x) \} - \rho_u(x) \sqrt{  \mu_0(x)(1-\mu_0(x) ) \mu_1(x)(1-\mu_1(x) )}.
\end{align*}	 
The inequality becomes equality  if and only if  $\text{Corr}(Y^0, Y^1 \mid X=x) = \rho_u(x)$.

{\bf Second,} since $\text{FNA}(x)$ is a probability, we must have  $0 \leq L_{\text{FNA}}(x) \leq U_{\text{FNA}}(x) \leq 1$, which implies 
    \[      \text{FNA}(x)  \leq \min \Big \{  \max \Big \{ \mu_0(x)\{1 - \mu_1(x) \} - \rho_l(x) \sqrt{  \mu_0(x)(1-\mu_0(x) ) \mu_1(x)(1-\mu_1(x) )}, 0 \Big  \},  1 \Big  \}   \]
  and 
       \[      \text{FNA}(x)  \geq \min \Big \{ \max \Big \{ \mu_0(x)\{1 - \mu_1(x) \} - \rho_u(x) \sqrt{  \mu_0(x)(1-\mu_0(x) ) \mu_1(x)(1-\mu_1(x) )}, 0 \Big  \}, 1 \Big \}.   \]
We find the above operators $\min\{\cdot, 1 \}$ are unnecessary, as 
	\begin{align*}
\mu_0(x)\{1 - \mu_1(x) \}  +  \sqrt{  \mu_0(x)(1-\mu_0(x) ) \mu_1(x)(1-\mu_1(x) )} \leq{}& 1
	\end{align*}
always holds. This is because for $a= \mu_0(x), b = 1 - \mu_1(x)$, $0\leq a, b \leq 1$,  
		\begin{align*}
		 &   ab +  \sqrt{  a b (1-a)(1-b) }  
	 \leq{} 1, \\
	 \iff {}&   a b (1-a)(1-b)  \leq (1-ab)^2 \\
	 \iff{}&   a^2 b + ab^2 - 3 a b + 1 \geq 0. 
		\end{align*}
Observe that  $a^2 b + ab^2 - 3 a b + 1 = b\{a^2  - ( 3 - b )a \} + 1$. For  any fixed $b \in [0, 1]$,  treat this expression as a quadratic function of $a$ over the interval $a \in [0, 1]$. Since its vertex occurs at $( 3  - b)/2 \geq 1$, the minimum over $a\in [0, 1]$ is attained at the boundary point  $a = 1$. In this case, the minimum is $b^2 - 2 b + 1$, which is nonnegative. Therefore, $ a^2 b + ab^2 - 3 a b + 1 \geq 0$ for all $a \in [0, 1]$ and $b \in [0, 1]$.  

{\bf Third}, by Lemma \ref{lem-1},  we note that under Assumption \ref{assumption-1},  
                      \[    \text{FNA}(x)  \geq \max\{ \mu_0(x) - \mu_1(x), 0 \}      \]
and 
         \[       \text{FNA}(x)  \leq \min\{ \mu_0(x),  1- \mu_1(x) \}.     \]

 Theorem \ref{thm-1}(a) follows by combining the above bounds on $\text{FNA}(x)$.

         \medskip \noindent 
\emph{Proof of Theorem \ref{thm-1}(b)}. 
         To prove Theorem \ref{thm-1}(b), it suffices to show that under additional Assumption \ref{assumption-3},  
        \begin{align*}
         \begin{cases}
          \mu_0(x)\{1 - \mu_1(x) \} - \rho_l(x) \sqrt{  \mu_0(x)(1-\mu_0(x) ) \mu_1(x)(1-\mu_1(x) )} \geq 0,   \\
               \mu_0(x)\{1 - \mu_1(x) \} - \rho_u(x) \sqrt{  \mu_0(x)(1-\mu_0(x) ) \mu_1(x)(1-\mu_1(x) )} \geq 0, \\
          \mu_0(x)\{1 - \mu_1(x) \} - \rho_l(x) \sqrt{  \mu_0(x)(1-\mu_0(x) ) \mu_1(x)(1-\mu_1(x) )} \leq          \min\{ \mu_0(x),  1- \mu_1(x) \},  \\
          \mu_0(x)\{1 - \mu_1(x) \} - \rho_u(x)  \sqrt{  \mu_0(x)(1-\mu_0(x) ) \mu_1(x)(1-\mu_1(x) )}  \geq   \max\{ \mu_0(x) - \mu_1(x), 0 \}. 
         \end{cases}
        \end{align*}
          The above four inequalities follow immediately by the proof of Proposition \ref{prop-range-rho}.

           This completes the proof. 
%


\end{proof}

\subsection{Proof Corollary \ref{coro5}}

\begin{proof}[Proof of Corollary \ref{coro5}.]  
 We want to determine 
	\begin{equation*}  
	\rho_{u}^*(x) = \min_{\rho_u(x)} \{ \rho_u(x):  L_{\text{FNA}}(x) = 0 \}, 
	\end{equation*} 
where $L_{\text{FNA}}(x)$ is the lower bound of $\text{FNA}(x)$ in Theorem 3(a). For a given $x$,  $L_{\text{FNA}}(x)$ is a monotone function of $\rho_u(x)$, and therefore, $\rho_{u}^*(x)$ has a unique solution: 
\[     \rho_{u}^*(x) =  \sqrt{  \frac{   \mu_0(x) (1 - \mu_1(x) )  }{  (1-\mu_0(x))  \mu_1(x)   }  }, \] 
which equals to $U_{\rho}(x)$ in Proposition  \ref{prop-range-rho} when $\tau(x) > 0$. 

\end{proof}

%


  \subsection{Proof of Theorem \ref{thm-3}}

\begin{proof}[Proof of Theorem \ref{thm-3}.]  We obtain the EIFs of $\beta_0$ and $\gamma$ separately. 

\medskip 
{\bf EIF of $\beta_0$}. 
Let $f(x)$ be the density function of $X$ and $f(y^0, y^1 \mid x)$ be the joint distribution of $(Y^0, Y^1)$ conditional on $X=x$. Then the density of $(Y^0, Y^1, A, X)$ is given by 
\[  f(y^0, y^1, a, x) = f(y^0, y^1  \mid  x)  e(x)^a (1 - e(x))^{1-a} f(x),      \]
and the observed data distribution of $(Y, A, X)$ under Assumption \ref{assumption-1} is  
\[ f(y, a, x) = [ f_1(y  \mid  x) e(x)]^a [ f_0(y \mid x) (1 - e(x))]^{1-a} f(x), \] 
where $f_1(\cdot  \mid x) = \int f(y^0, \cdot  \mid  x)dy^0$ and $f_0(\cdot  \mid  x) = \int f(\cdot, y^1  \mid x)dy^1$ are the marginal distribution function of $Y^1$ and $Y^0$ given $X=x$, respectively.

We first discuss the case where $e(x)$ is unknown. Consider a regular parametric submodel indexed by $\theta$, 
\[  f(y, a, x; \theta)  =  [ f_1(y  \mid x, \theta) e(x, \theta)]^a [ f_0(y \mid x, \theta) (1 - e(x, \theta))]^{1-a} f(x, \theta), \]
which equals $f(y, a, x)$ when $\theta = \theta_0$. Then the score function for the parametric submodel is given by
\begin{align*}
s(y, a, x; \theta) ={}&  a \cdot s_1(y  \mid  x,\theta) + (1 - a) \cdot s_0(y \mid x,\theta) \\
{}&+ \frac{a - e(x,\theta)}{  e(x,\theta) (1- e(x,\theta)) } \dot{e}(x,\theta)    +  s(x, \theta),
\end{align*}	
where $\dot{e}(x, \theta) = \partial e(x, \theta) / \partial \theta$,  $s_1(y  \mid  x, \theta) = \partial \log f_1(y  \mid  x, \theta)/ \partial \theta$, $s_0(y \mid x, \theta) = \partial \log f_0(y \mid x, \theta)/ \partial \theta$,  and  $s(x, \theta) = \partial \log f(x, \theta) / \partial \theta$.  
Thus, the tangent space is 
\begin{align*}
 \mathcal{T} ={}& \Big \{  a \cdot s_1(y \mid x) + (1 - a) \cdot s_0(y \mid x)  +  \alpha(x) \cdot (a - e(x) )   +  s(x): \text{where $s_a(y \mid x) $ satisfies }  \\
  & \quad \text{$\int s_a(y \mid x) f_a(y \mid x)dy = 0$ for $a = 0, 1$, $s(x)$ satisfies $\int s(x) f(x)dx = 0$}, \\
   &\quad \text{and $\alpha(x)$ is an arbitrary  square-integrable measurable function of $x$}\Big\}.        
\end{align*}


Under the parametric submodel indexed by $\theta$, the estimand $\beta_0$ can be written as
\[  	 \beta_0(\theta)  = \int f(x, \theta) \cdot  \int y f_0(y \mid x, \theta) dy \cdot \left ( 1 - \int y f_1(y \mid x, \theta) dy   \right ) dx.    \]
The pathwise derivative of $ \beta_0(\theta)$ at $\theta = \theta_0$ is given as
\begin{align*}
& \frac{\partial \beta_0(\theta)}{ \partial \theta}  \Big |_{\theta = \theta_0}  \\ 
={}&  \int s(x, \theta_0) f(x, \theta_0)   \times \int y f_0(y \mid x, \theta_0) dy  \left ( 1 - \int y f_1(y \mid x, \theta_0) dy  \right ) dx   \\
+{}&  \int f(x, \theta_0)  \times \left\{  \int y s_0(y \mid x,\theta_0) f_0(y \mid x, \theta_0) dy  \left ( 1 - \int y f_1(y \mid x, \theta_0) dy  \right )    \right \} dx  \\
-{}&  	 \int f(x,\theta_0)  \times \left\{  \int y f_0(y \mid x, \theta_0) dy \cdot  \int y s_1(y \mid x, \theta_0) f_1(y \mid x, \theta_0) dy  \}     \right \} dx   \\
={}&  \bfE \left [ s(X) \mu_0(X) (1 - \mu_1(X)) \right ]   +   \bfE \left [  Y^0 s_0(Y^0 \mid  X)     \cdot (1 - \mu_1(X)) \right ]   -   \bfE \left [ \mu_0(X) \cdot  Y^1 s_1(Y^1  \mid  X)  \right ].     
\end{align*}

Next, we verify 
that $\phi(Y,A,X) := \phi_\beta(Y,A,X; \bm{\eta})  - \beta_0$ is the EIF of $\beta_0$. 
We first show that $\phi(Y,A,X) $ is an influence function of $\beta_0$. It suffices to verify that 
\begin{align}
\frac{\partial \beta_0(\theta)}{ \partial \theta}  \Big |_{\theta = \theta_0}  ={}&  \bfE[  \phi(Y, A, X)  \cdot s(Y, A, X;  \theta_0) ] \label{s1}.
\end{align} 
The right-hand side of  (\ref{s1}) can be decomposed as: 
\begin{align*}
\bfE[  \phi(Y, A, X)  \cdot s(Y, A, X;  \theta_0) ]  =  B_1  + B_2 + B_3 + B_4,   
\end{align*}
where 
\begin{align*}
B_1  ={}& \bfE[  \phi(Y,A,X)   \cdot  A \cdot s_1(Y \mid X)  ],   \\
B_2  ={}&  \bfE[  \phi(Y,A,X)   \cdot  (1-A) \cdot s_0(Y \mid X)  ],   \\
B_3  ={}&  \bfE\left [  \phi(Y,A,X)   \cdot   \frac{A - e(X)}{  e(X) (1- e(X)) } \dot{e}(X) \right ],   \\
B_4  ={}&  \bfE[  \phi(Y,A,X)   \cdot   s(X)   ]. 
\end{align*} 
We analyze $B_1, B_2, B_3$, and $B_4$ one by one. Since $\bfE[ s_a(Y \mid X)  \mid  X=x ] = 0$ for $a = 0, 1$, we have 
\begin{align*}
B_1 ={}& \bfE \Big [ \Big \{  - \frac{A (Y- \mu_1(X))}{e(X)} \mu_0(X)  + \mu_0(X)(1-\mu_1(X)) - \beta_0  \Big \}   \cdot A \cdot s_1(Y \mid X) \Big ]  \\
={}&  \bfE \left [  - \frac{A Y}{e(X)} \mu_0(X) s_1(Y \mid X) \right   ]  \\
={}&-   \bfE \left [ \mu_0(X) \cdot  \bfE\{ Y^1 s_1(Y^1  \mid  X)  \mid  X \}  \right ]	\\	
={}& \text{the third term of } \frac{\partial \beta_0(\theta)}{ \partial \theta}  \Big |_{\theta = \theta_0}.  
\end{align*}
Likewise, 
\begin{align*}
B_2  ={}&  \bfE \left [    \frac{(1-A)(Y - \mu_0(X))}{1-e(X)} (1 - \mu_1(X))  \cdot  (1-A) \cdot s_0(Y \mid X) \right ]     \\
={}&\bfE \left [  (1 - \mu_1(X)) \cdot \bfE\{ Y^0 s_0(Y^0 \mid  X)  \mid  X \}    \right ]    \\
={}&  \text{the second term of }  \frac{\partial \beta_0(\theta)}{ \partial \theta}  \Big |_{\theta = \theta_0}.  
\end{align*} 
In addition,
\begin{align*}
B_3  
={}& \bfE \Big [   \frac{(1-A)(Y - \mu_0(X))}{1-e(X)} (1 - \mu_1(X))  \frac{0 -e(X)}{e(X)(1- e(X))} \dot{e}(X) \Big ] \\
{}&- \bfE \Big [  \frac{A (Y- \mu_1(X))}{e(X)} \mu_0(X)   \frac{1- e(X) }{e(X)(1- e(X))} \dot{e}(X)  \Big ]  \\
{}&  + \bfE \Big [ \mu_0(X)(1- \mu_1(X))   \frac{A - e(X)}{  e(X) (1- e(X)) } \dot{e}(X) \Big ] \\
={}& 0, 
\end{align*} 
where the last equation follows from the law of iterated expectations. Also, 
\begin{align*}
B_4  ={}&  \bfE[ \mu_0(X)(1-\mu_1(X)) \cdot   s(X)  ] \\
={}& \text{the first term of } \frac{\partial \beta_0(\theta)}{ \partial \theta}  \Big |_{\theta = \theta_0}.  	\end{align*}  

Combining the results of $B_1$, $B_2$, $B_3$, and $B_4$ leads to  (\ref{s1}).  
In addition, let $s_0(Y \mid X) = \frac{(Y - \mu_0(X))}{1-e(X)} (1 - \mu_1(X))$, $s_1(Y \mid X) = - \frac{ (Y- \mu_1(X))}{e(X)} \mu_0(X)$, and $s(X) =  \mu_0(X)(1-\mu_1(X)) - \beta_0$, then  
                                                                                                                                                       \begin{align*}  \phi(Y,A,X)  ={}&  A \cdot s_1(Y \mid X) + (1- A) s_0(Y \mid X) + s(X),
                                                                                                                                                       \end{align*}
                                                                                                                                                       which implies that $ \phi(Y,A,X)  \in \mathcal{T}$ and thus the EIF of $\beta_0$.

                                               We then discuss the case where the true propensity score $e(x)$ is known. The parametric submodel indexed by $\theta$ is, 
                                                                                                                                                       \[  f(y, a, x; \theta)  =  [ f_1(y\mid x, \theta) e(x) ]^a [ f_0(y\mid x, \theta) (1 - e(x)) ]^{1-a} f(x, \theta), \]
                                                                                                                                                       which equals $f(y, a, x)$ when $\theta = \theta_0$.  The associated score function becomes 
                                                                                                                                                       \begin{align*}
                                                                                                                                                       s(y, a, x; \theta) =  a \cdot s_1(y|x,\theta) + (1 - a) \cdot s_0(y|x,\theta)  +  s(x, \theta),
                                                                                                                                                       \end{align*}	
                                                                                                                                                       Thus, the tangent space is 
                                                                                                                                                       \begin{align*}
                                                                                                                                                        \mathcal{T} ={}&  \Big\{  a \cdot s_1(y\mid x) + (1 - a) \cdot s_0(y\mid x)   +  s(x):   \text{ where $s_a(y\mid x) $ satisfies } \\ & \text{$\int s_a(y\mid x) f_a(y\mid x)dy = 0$ for $a = 0, 1$, and $s(x)$ satisfies $\int s(x) f(x)dx = 0$}  \Big\}.                                                                \end{align*}                                                                                                                                                   
                                                                                                                                                                                                                                              Then by a similar discussion, we can verify that  (\ref{s1}) holds and $ \phi(Y,A,X)  \in \mathcal{T}$. Thus, the EIF of $\beta_0$ remains the same no matter whether the propensity score $e(x)$ is known or not. 

                                                                                                                                                       \bigskip 
                                                                                                                                                       {\bf EIF of $\gamma$}.
                                                                                                                                                       The parametric submodel and the tangent space are the same as in the proof of EIF of $\beta_0$.  
                                                                                                                                                       Under the parametric submodel indexed by $\theta$, the estimand $\gamma$ can be written as
                                                                                                                                                       \begin{align*}
                                                                                                                                                       \gamma(\theta) ={}& \bfE\Big [ \sqrt{  \mu_0(X)(1-\mu_0(X) ) \mu_1(X)(1-\mu_1(X) )    } \Big ) \Big ]  \\
                                                                                                                                                       ={}& \int f(x, \theta) \sqrt{ \int y f_0(y\mid x,\theta)dy  \left \{ 1- \int y f_0(y\mid x,\theta)dy \right \}   }  \\
                                                                                                                                                       {}& \times \sqrt{ \int y f_1(y\mid x,\theta)dy  \left \{ 1- \int y f_1(y\mid x,\theta)dy \right \} } dx. 	
                                                                                                                                                       \end{align*}
                                                                                                                                                    The pathwise derivative of $\gamma(\theta)$ at $\theta = \theta_0$ is 
                                                                                                                                                       \begin{align*}
                                                                                                                                                       & \frac{\partial \gamma(\theta)}{ \partial \theta}   \Big |_{\theta = \theta_0}  \\
                                                                                                                                                       ={}&    \int  s(x) f(x) \sqrt{ m(x) } dx	 \\
                                                                                                                                                       +{}&  	 \int  f(x)   \frac{\mu_1(x)(1-\mu_1(x))}{ \sqrt{m(x)}   }  \frac{(1- 2 \mu_0(x))}{2} \int y s_0(y\mid x) f_0(y\mid x) dydx      \\
                                                                                                                                                       +{}& \int  f(x)   \frac{\mu_0(x)(1-\mu_0(x))}{ \sqrt{ m(x) }   }  \frac{(1- 2 \mu_1(x))}{2} \int y s_1(y\mid x) f_1(y\mid x) dydx,  \\
                                                                                                                                                       ={}&  \bfE[ s(X) \sqrt{m(X)}  ] \\
                                                                                                                                                       {}& + \bfE\left [   \frac{\mu_1(X)(1-\mu_1(X))}{ \sqrt{m(X)}   }  \frac{(1- 2 \mu_0(X))}{2}   \cdot Y^0 s_0(Y^0 \mid X)\right] \\
                                                                                                                                                       {}& + \bfE\left[   \frac{\mu_0(X)(1-\mu_0(X))}{ \sqrt{ m(X) }   }  \frac{(1- 2 \mu_1(X))}{2}  \cdot Y^1 s_1(Y^1\mid X)\right].
                                                                                                                                                       \end{align*}
                                                                                                                                                                                                                                                                                                           Let $\tilde \phi(Y,  A, X)  := \phi_\gamma(Y,A,X; \bm{\eta})  - \gamma$.
Next, we show that                                                                                                                                                         \begin{align}
                                                                                                                                                       \frac{\partial \gamma(\theta)}{ \partial \theta}  \Big |_{\theta = \theta_0}  ={}&  \bfE[ \tilde  \phi(Y, A, X)  \cdot s(Y, A, X;  \theta_0) ]. \label{s3}
                                                                                                                                                       \end{align} 
 The right-hand side of (\ref{s3}) can be decomposed as 
 \[  \bfE[ \tilde  \phi(Y, A, X)  \cdot s(Y, A, X;  \theta_0) ] = B_5 + B_6 + B_7 + B_8, \]
 where                                                                                                                                                   \begin{align*}
     B_5  ={}& \bfE[ \tilde  \phi(Y, A, X)   \cdot  A \cdot s_1(Y\mid X)  ],   \\
  B_6  ={}&  \bfE[  \tilde  \phi(Y, A, X)   \cdot  (1-A) \cdot s_0(Y\mid X)  ],   \\
  B_7  ={}&  \bfE \left [  \tilde  \phi(Y, A, X)    \cdot   \frac{A - e(X)}{  e(X) (1- e(X)) } \dot{e}(X) \right ],   \\
   B_8  ={}&  \bfE[  \tilde  \phi(Y, A, X)    \cdot   s(X)   ]. 
 \end{align*} 
 We can simplify $B_5$ as 
  \begin{align*}
 B_5 ={}& \bfE\left [  \frac{ 1 - 2\mu_1(X) }{2}  \frac{ \mu_0(X) (1-\mu_0(X)) }{ \sqrt{m(X)} }   \frac{A(Y - \mu_1(X))}{e(X)} \cdot s_1(Y\mid X) \right ] \\
      ={}&  \bfE \left [ \frac{ 1 - 2\mu_1(X) }{2}  \frac{ \mu_0(X) (1-\mu_0(X)) }{ \sqrt{m(X)} }   (Y^1 - \mu_1(X)) s_1(Y^1\mid X) \right   ]  \\
             ={}& \bfE \left [ \frac{ 1 - 2\mu_1(X) }{2}  \frac{ \mu_0(X) (1-\mu_0(X)) }{ \sqrt{m(X)} }   Y^1  s_1(Y^1\mid X) \right   ]  \\
                ={}& \text{the third term of } \frac{\partial \gamma(\theta)}{ \partial \theta}  \Big |_{\theta = \theta_0}. 
                  \end{align*}           
 Similarly,                                                                                                                              \begin{align*}
B_6 ={}& \bfE\left [   \frac{ 1 - 2\mu_0(X) }{2} \frac{ \mu_1(X) (1-\mu_1(X)) }{  \sqrt{m(X)} }   \frac{ (1-A)(Y - \mu_0(X))}{1 - e(X)}  \cdot s_0(Y\mid X) \right ] \\
                                                                                                                                                       ={}&  \bfE \left [   \frac{ 1 - 2\mu_0(X) }{2} \frac{ \mu_1(X) (1-\mu_1(X)) }{  \sqrt{m(X)} }  (Y^0 - \mu_0(X)) s_0(Y^0\mid X) \right   ]  \\
                                                                                                                                                       ={}& \bfE \left [ \frac{ 1 - 2\mu_1(X) }{2}     \frac{ 1 - 2\mu_0(X) }{2} \frac{ \mu_1(X) (1-\mu_1(X)) }{  \sqrt{m(X)} } Y^0 s_0(Y^0\mid X)  \right   ]  \\
                                                                                                                                                       ={}& \text{the second term of } \frac{\partial \gamma(\theta)}{ \partial \theta}  \Big |_{\theta = \theta_0}. 
                                                                                                                                                       \end{align*}
                                                                                                                                                       Similar to the analysis of $B_3$, we can verify that 
  $$B_7 = 0.$$ 
     Finally, we have 
\begin{align*}
                 B_8 ={}&  \bfE[ \sqrt{m(X)} \cdot   s(X)  ] \\
   ={}& \text{the first term of } \frac{\partial \gamma(\theta)}{ \partial \theta}  \Big |_{\theta = \theta_0}.  	
                                                                                                                                         \end{align*}
    Combining the results of $B_5$, $B_6$, $B_8$, and $B_8$ leads to  (\ref{s3}).                                                                                                                                                   Also, we can verify that $\tilde  \phi(Y,  A, X) \in \mathcal{T}$, and thus
     $\tilde \phi(Y, A, X)$ is the EIF of $\gamma$.   
                                                     
                                                                                                                                                       In addition, when the true propensity score $e(X)$ is known, we can similarly show                                                                                                                           that $\tilde 
                                                                                                                                                       \phi(Y,  A, X)$ is the EIF of $\gamma$.  
                                                                                                                                                       This completes the proof. 
                                                                                            
                                                                                                                                                       \end{proof}

   \subsection{Proof of Theorem \ref{thm-4}}
                                                                                                                                                       
    \begin{proof}[Proof of Theorem \ref{thm-4}.]  Recall that for $\rho \leq 0$,  we have 
    \begin{align*}  \hat \beta_{\rho} =  \frac{1}{n}  \sum_{i=1}^n   \varphi(Y_i, A_i,X_i; \boldsymbol{\hat  \eta}, \rho), 
\end{align*}
where $\varphi(Y, A, X; \bm{\eta}, \rho) =  \phi_\beta(Y,A,X; \bm{\eta})  -  \rho \cdot  \phi_\gamma(Y,A,X; \bm{\eta}),$  with $ \phi_\beta(Y,A,X; \boldsymbol{\eta})$ and $\phi_\gamma(Y,A,X; \bm{\eta})$ defined in Theorem \ref{thm-3}.
              Therefore, $n^{1/2} (\hat \beta_\rho - \beta_\rho) = n^{1/2} (\hat \beta_0 - \beta_0 ) - \rho \cdot n^{1/2} ( \hat \gamma - \gamma),$
    where $\hat \beta_{0} =  n^{-1}  \sum_{i=1}^n   \phi_\beta(Y_i, A_i,X_i; \boldsymbol{\hat  \eta})$ and  $\hat \gamma =   n^{-1}   \sum_{i=1}^n   \phi_\gamma(Y_i, A_i,X_i; \boldsymbol{\hat  \eta})$.

 We first analyze $n^{1/2} (\hat \beta_0 - \beta_0)$ and then the term $n^{1/2} ( \hat \gamma - \gamma)$ can be addressed similarly.
 We decompose $n^{1/2} (\hat \beta_0 - \beta_0)$ as follows: 
                                                                    \begin{align*} n^{1/2} (\hat \beta_0 - \beta_0) 
                                                                                                                                                       ={}&   H_{1n} + H_{2n} + H_{3n}, 
                                                                                                                                                       \end{align*} 
                                                                                                                                                       where 
                                                                                                                                                       \begin{align*}
                                                                                                                                                       H_{1n} ={}&  n^{-1/2}  \sum_{i=1}^n \Big \{ \phi_\beta(Y_i,A_i,X_i; \boldsymbol{\eta})  - \bfE[   \phi_\beta(Y_i,A_i,X_i; \boldsymbol{\eta})]    \Big \}, \\
                                                                                                                                                       H_{2n} ={}&  n^{-1/2} \sum_{i=1}^n \Big \{ \phi_\beta(Y_i,A_i,X_i; \boldsymbol{\hat \eta})  - \phi_\beta(Y_i,A_i,X_i; \boldsymbol{\eta}) - \bfE \left [ \phi_\beta(Y,A,X; \boldsymbol{\hat \eta})  -\phi_\beta(Y,A,X; \boldsymbol{\eta}) \right ]  \Big \},   \\
                                                                                                                                                       H_{3n} ={}&   n^{1/2} \bfE \left [ \phi_\beta(Y,A,X; \boldsymbol{\hat \eta})  -\phi_\beta(Y,A,X; \boldsymbol{\eta}) \right ]. 
                                                                                                                                                       \end{align*}
                                                                                                                                                       By the central limit theorem for  independent and identically distributed random variables,   \[  H_{1n}     \xrightarrow{d} \mathcal{N}(0,  \sigma^2),  \]
                                                                                                                                                       where $\sigma^2 = \mathbb{V}[ \phi_{\beta}(Y, A, X; \boldsymbol{\eta}) ]$. Then, we show that $H_{2n} = o_{\P}(1)$ and $H_{3n} = o_{\P}(1)$. 
                                                                                                                                                       
                                                                                                                                                         For $H_{2n}$, due to sample splitting, applying Chebyshev's inequality gives  
    \[   H_{2n}  = O_{\P}( || \phi_\beta(Y,A,X; \boldsymbol{\hat \eta})  -\phi_\beta(Y,A,X; \boldsymbol{\eta})    ||_2 ).  \]
Because $\mu_a(X), 1/(1-e(X)), 1/e(X)$ are all bounded and $\phi_\beta(Y,A,X; \boldsymbol{\eta})$ is a Lipschitz function of $\boldsymbol{\eta}$,  by Condition \ref{con-1}, we have that 
    \[ || \phi_\beta(Y,A,X; \boldsymbol{\hat \eta})  -\phi_\beta(Y,A,X; \boldsymbol{\eta})    ||_2 = o_{\P}(1), \]
  which implies that $H_{2n} = o_{\P}(1).$ 

Next, we focus on $H_{3n}$. Define the Gateaux derivative of the generic function $g$ in the direction $[\hat e - e,  \hat \mu_0 - \mu_0, \hat \mu_1 - \mu_1]$ by 
 $\partial_{[\hat e - e,  \hat \mu_0 - \mu_0, \hat \mu_1 - \mu_1]}g$.  
 By a Taylor expansion, we have 
   \begin{align*}
H_{3n} ={}&  n^{1/2} \cdot \bfE \left [ \phi_\beta(Y,A,X; \boldsymbol{\hat \eta})  - \phi_\beta(Y,A,X; \boldsymbol{\eta}) \right ]   \\
     ={}&n^{1/2} \cdot   \partial_{[\hat e - e,  \hat \mu_0 - \mu_0, \hat \mu_1 - \mu_1]} \bfE[  \phi_\beta(Y,A,X; \boldsymbol{\eta})   ] \\
      &+ n^{1/2} \cdot  \frac 1 2  \partial^2_{[\hat e - e,  \hat \mu_0 - \mu_0, \hat \mu_1 - \mu_1]} \bfE[  \phi_\beta(Y,A,X; \boldsymbol{\eta})   ] +  R( (\boldsymbol{\hat \eta} - \boldsymbol{\eta}) ),
   \end{align*}
where  $R( (\boldsymbol{\hat \eta} - \boldsymbol{\eta}) )$ represents the remainder term. 
The first-order term equals zero:  
	\begin{align*}
	& n^{1/2} \cdot  \partial_{[\hat e - e,  \hat \mu_0 - \mu_0, \hat \mu_1 - \mu_1]} \bfE[  \phi_\beta(Y,A,X; \boldsymbol{\eta})   ]  \\
	={}&  n^{1/2} \cdot  \bfE \biggl [   \left \{  \frac{  (1-A) \{ Y - \mu_0(X)\} }{ (1-e(X))^2 }  - \frac{  A\{ Y - \mu_1(X)\} }{ e(X)^2 }   \right \}  \times \{\hat e(X) - e(X) \}             \\
	+{}&    \left \{   - \frac{ (1 - A) (1-\mu_1(X))  }{  1 - e(X) } - \frac{A(Y -\mu_1(X)) }{e(X)}  + (1-\mu_1(X) )  \right \}  \times \{  \hat \mu_0(X) - \mu_0(X) \}     \\
	+{}&     \left \{ -\frac{ (1-A)(Y - \mu_0(X)) }{1 - e(X)}  + \frac{A\mu_0(X)}{e(X)}  - \mu_0(X) \right \}  \times \{  \hat \mu_1(X) - \mu_1(X) \}     \biggr ]   \\
	={}& 0,
	\end{align*}
where the last equation follows from the sample splitting, $\bfE[ A(Y -\mu_1(X)) g(X) | X] = 0$, $\bfE[ (1- A)(Y -\mu_0(X)) g(X) | X] = 0$, $\bfE[ A g(X) / e(X) | X] = g(X), \bfE[ (1-A) g(X)/(1-e(X))|X] = g(X)$, where $g$ is an arbitrary integrable function of $X$.   

 For the second-order term, some calculations yield that 
	\begin{align*}
	 & \frac 1 2  n^{1/2} \cdot   \partial^2_{[\hat e - e,  \hat \mu_0 - \mu_0, \hat \mu_1 - \mu_1]} \bfE[  \phi_\beta(Y,A,X; \boldsymbol{\eta})   ]  \\
	 ={}& \frac 1 2  n^{1/2} \cdot  \bfE \biggl [   \left \{  \frac{ 2 (1-A) \{ Y - \mu_0(X)\} }{ (1-e(X))^3 }  + \frac{  2 A\{ Y - \mu_1(X)\} }{ e(X)^2 }   \right \}  \times \{\hat e(X) - e(X) \}^2               \\
	 +{}&   \left \{  \frac{  -(1-A) \{ \hat \mu_0(X) - \mu_0(X)\} }{ (1-e(X))^2 }  + \frac{ A\{ \hat \mu_1(X) - \mu_1(X)\} }{ e(X)^2 }   \right \}  \times \{\hat e(X) - e(X) \} \\
	+{}&   \left \{  \left ( - \frac{ (1 - A) (1-\mu_1(X))  }{  (1 - e(X))^2 } + \frac{A(Y -\mu_1(X)) }{e(X)^2}  \right) \{\hat e(X) - e(X)  \}  \right \}  \times \{  \hat \mu_0(X) - \mu_0(X) \}     \\
	+{}&   \left \{   \left (  \frac{ (1 - A)  }{  1 - e(X) } +  \frac{A}{e(X)}   -1 \right )\cdot \{ \hat \mu_1(X) -\mu_1(X) \}   \right \}  \times \{  \hat \mu_0(X) - \mu_0(X) \}     \\ 
	+{}&     \left \{ \left(  -\frac{ (1-A)(Y - \mu_0(X)) }{ (1 - e(X))^2 }  + \frac{A\mu_0(X)}{e(X)^2}  \right ) \{\hat e(X) - e(X) \} \right \}  \times \{  \hat \mu_1(X) - \mu_1(X) \}     \\	
		+{}&    \left \{ \left ( \frac{ (1-A) }{1 - e(X)}  + \frac{A}{e(X)}  - 1 \right ) \{\hat \mu_0(X) - \mu_0(X)  \} \right \}  \times \{  \hat \mu_1(X) - \mu_1(X) \}     \biggr ]   \\	
	={}&  n^{1/2} \cdot  O_{\P} \left (  || \hat e(X) - e(X)  ||_2 \cdot    || \hat \mu_1(X) - \mu_1(X) ||_2     \right )  \\
  {}& n^{1/2} \cdot  O_{\P} \left (  || \hat e(X) - e(X)  ||_2 \cdot   || \hat \mu_0(X) - \mu_0(X) ||_2     \right )  \\ 
 {}& + n^{1/2} \cdot  O_{\P} \left (   || \hat \mu_1(X) - \mu_1(X) ||_2 \cdot  || \hat \mu_0(X) - \mu_0(X) ||_2  \right ) \\
	={}& o_{\P}(1),
	\end{align*}
where the last equality follows from Condition \ref{con-1}. 
The remainder term is dominated by the second-order term. 
Thus, we have that 
  $$n^{1/2} (\hat \beta_0 - \beta_0) = n^{-1/2}  \sum_{i=1}^n \Big \{ \phi_\beta(Y_i,A_i,X_i; \boldsymbol{\eta})  - \bfE[   \phi_\beta(Y_i,A_i,X_i; \boldsymbol{\eta})]    \Big \} + o_{\P}(1).$$
Similarly, 
    $$ n^{1/2} ( \hat \gamma - \gamma) = n^{-1/2}  \sum_{i=1}^n \Big \{ \phi_\gamma(Y_i,A_i,X_i; \boldsymbol{\eta})  - \bfE[   \phi_\gamma(Y_i,A_i,X_i; \boldsymbol{\eta})]    \Big \} + o_{\P}(1). $$
This completes the proof. 

\end{proof}

 
\subsection{Proof of Theorem \ref{thm-5}} 
Let $\lesssim$ represent smaller than up to a constant.  
    The following Lemmas \ref{lem-s1} and \ref{lem-s2} will be used in the proof of Theorem \ref{thm-5}, where Lemma \ref{lem-s1} can be found 
 in the proof of Theorem 2 in \cite{Bonvini-Kennedy2022}. 
 We present it here for completeness. 

\begin{lemma} \label{lem-s1} Let $\hat f$ and $f$ take any real values. Then
		\[     \big |  \mathbb{I}(\hat  f > 0) -  \mathbb{I}(f > 0 ) \big | \leq  \mathbb{I}( |f| \leq |\hat f - f| ).        \]
\end{lemma}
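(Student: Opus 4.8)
The plan is to prove the inequality pointwise in $(\hat f, f)$ by a short case analysis, since both sides are $\{0,1\}$-valued and the claim reduces to the implication: whenever the left-hand side equals $1$, the right-hand side equals $1$. First I would observe that $\big|\mathbb{I}(\hat f > 0) - \mathbb{I}(f > 0)\big|$ equals $1$ exactly when the two indicators disagree, i.e.\ when exactly one of $\hat f$ and $f$ is strictly positive, and equals $0$ otherwise. In the agreement case there is nothing to prove, because the right-hand side $\mathbb{I}(|f| \le |\hat f - f|)$ is always nonnegative, so $0 \le \mathbb{I}(\cdot)$ holds trivially.

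The substantive case is the disagreement case, which splits into $f > 0 \ge \hat f$ and $\hat f > 0 \ge f$. In the first subcase $f$ and $\hat f$ lie on opposite sides of the threshold $0$, so $0$ lies between them; hence the distance from $f$ to $0$ is at most the distance from $f$ to $\hat f$, giving $|f| = |f - 0| \le |f - \hat f| = |\hat f - f|$. Concretely, since $\hat f \le 0$ we have $f - \hat f = f + (-\hat f) \ge f = |f| \ge 0$, so $|\hat f - f| \ge |f|$. The second subcase is symmetric: since $f \le 0$ we have $\hat f - f = \hat f + (-f) \ge -f = |f|$. In either subcase $|f| \le |\hat f - f|$, so the indicator on the right equals $1$ and the two sides agree.

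The only point requiring care — and the closest thing to an obstacle — is the bookkeeping with the asymmetric strict/weak inequalities built into the indicator $\mathbb{I}(\cdot > 0)$: one must track which of $f$ and $\hat f$ carries the strict inequality so that the ``between'' argument and the resulting bound on $|f|$ remain valid at the boundary values $f = 0$ or $\hat f = 0$. Once the disagreement case is split as above, each subcase is a one-line computation, so no genuine difficulty arises; combining the agreement and disagreement cases then yields the stated bound for all real $\hat f$ and $f$.
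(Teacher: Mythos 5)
Your proof is correct and follows essentially the same route as the paper's: both reduce the claim to showing that when the two indicators disagree, $\hat f$ and $f$ straddle $0$, whence $|f| \leq |\hat f| + |f| = |\hat f - f|$. Your extra care with the boundary cases $f = 0$ or $\hat f = 0$ is a minor refinement of the paper's "opposite sign" phrasing, not a different argument.
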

\begin{proof}[Proof of Lemma \ref{lem-s1}.]   Note that 
	\[  \big |  \mathbb{I}(\hat  f > 0) -  \mathbb{I}(f > 0 ) \big | = \mathbb{I}(\hat f, f \text{ have opposite signs}).     \]
 If $\hat f$ and $f$ have the opposite signs, then 
	\[   |\hat f|  + |f|  = |\hat f - f |,     \]
which implies that $|f| \leq |\hat f - f| $. Thus, whenever $   \big |  \mathbb{I}(\hat  f > 0) -  \mathbb{I}(f > 0 ) \big | = 1$, it must have $\mathbb{I}( |f| \leq |\hat f - f| ) = 1$. This completes the proof. 

\end{proof}



\medskip
\begin{lemma} \label{lem-s2}  Under Assumptions 1, we have     
	\begin{align*} 
	 & \bfE[  \phi_\beta(Y,A,X; \boldsymbol{\hat \eta}) - \phi_\beta(Y,A,X; \boldsymbol{ \eta}) \mid X=x ]  \\
	 \lesssim{}&   (\hat e(x) - e(x))\Big \{  (\hat \mu_1(x) - \mu_1(x)) +  (\hat \mu_0(x) - \mu_0(x)) \Big \}     + (\hat \mu_1(x) - \mu_1(x))  (\hat \mu_0(x) - \mu_0(x))  
	 \end{align*}
and 
\begin{align*} 
	 & \bfE[  \phi_\gamma(Y,A,X; \boldsymbol{\hat \eta}) -  \phi_\gamma(Y,A,X; \boldsymbol{ \eta}) \mid X=x ]  \\
	 \lesssim{}&   (\hat e(x) - e(x))\Big \{  (\hat \mu_1(x) - \mu_1(x)) +  (\hat \mu_0(x) - \mu_0(x)) \Big \}   \\
		 {}&  + (\hat \mu_1(x) - \mu_1(x))  (\hat \mu_0(x) - \mu_0(x))  + (\hat \mu_1(x) - \mu_1(x))^2 +   (\hat \mu_0(x) - \mu_0(x))^2.  
	 \end{align*}
\end{lemma}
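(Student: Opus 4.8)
The plan is to establish both bounds by a second-order (von Mises) remainder argument: evaluate the conditional mean of each influence function at the estimated nuisances, subtract its value at the truth, and check that the terms linear in the nuisance errors cancel — this is exactly the Neyman orthogonality built into an efficient influence function — so that only products and squares of errors survive. I write $\delta_e = \hat e(x)-e(x)$, $\delta_0 = \hat\mu_0(x)-\mu_0(x)$, $\delta_1 = \hat\mu_1(x)-\mu_1(x)$, and treat $\boldsymbol{\hat\eta}$ as a fixed pair of functions (legitimate under cross-fitting, since it is trained on an independent fold), so that conditioning on $X=x$ integrates out only $(A,Y)$. The single identity driving everything is that Assumption 1, i.e. $A \indep (Y^0,Y^1)\mid X$ together with $Y = AY^1+(1-A)Y^0$, yields
\[ \bfE[A(Y-\hat\mu_1(x))\mid X=x] = -e(x)\delta_1, \qquad \bfE[(1-A)(Y-\hat\mu_0(x))\mid X=x] = -(1-e(x))\delta_0. \]
I will also use overlap (so $1/e, 1/\hat e, 1/(1-e), 1/(1-\hat e)$ are bounded) and the regularity that $\mu_a, \hat\mu_a$ are bounded away from $0$ and $1$.

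For $\phi_\beta$, applying the two identities and the unbiasedness $\bfE[\phi_\beta(\cdot;\boldsymbol{\eta})\mid X=x]=\mu_0(x)(1-\mu_1(x))$ gives
\[ \bfE[\phi_\beta(\cdot;\boldsymbol{\hat\eta})-\phi_\beta(\cdot;\boldsymbol{\eta})\mid X=x] = -\frac{(1-\hat\mu_1(x))(1-e(x))}{1-\hat e(x)}\,\delta_0 + \frac{\hat\mu_0(x)\,e(x)}{\hat e(x)}\,\delta_1 + \hat\mu_0(x)(1-\hat\mu_1(x)) - \mu_0(x)(1-\mu_1(x)). \]
The next step is to expand $\tfrac{1-e}{1-\hat e} = 1+\tfrac{\delta_e}{1-\hat e}$ and $\tfrac{e}{\hat e}=1-\tfrac{\delta_e}{\hat e}$, substitute $\hat\mu_a=\mu_a+\delta_a$ throughout, and collect terms by order. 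The coefficients of $\delta_0$ alone and of $\delta_1$ alone cancel against the expansion of $\hat\mu_0(1-\hat\mu_1)-\mu_0(1-\mu_1)$, leaving exactly $\delta_0\delta_1$ plus bounded multiples of $\delta_e\delta_0$ and $\delta_e\delta_1$, which is the claimed bound.

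For $\phi_\gamma$ the same scheme applies but the nonlinearity of $\gamma$ produces the extra squared terms, and this is the main obstacle. The key observation is that the two coefficients appearing in $\phi_\gamma$ are precisely the partial derivatives of $\psi(\mu_0,\mu_1):=\sqrt{\mu_0(1-\mu_0)\mu_1(1-\mu_1)}$, namely $\partial\psi/\partial\mu_1 = \tfrac{1-2\mu_1}{2}\sqrt{\mu_0(1-\mu_0)/[\mu_1(1-\mu_1)]}$ and the symmetric expression for $\partial\psi/\partial\mu_0$. Using the two identities as before, the conditional-mean difference becomes
\[ \bfE[\phi_\gamma(\cdot;\boldsymbol{\hat\eta})-\phi_\gamma(\cdot;\boldsymbol{\eta})\mid X=x] = -\widehat{\partial_1\psi}\,\frac{e(x)}{\hat e(x)}\,\delta_1 - \widehat{\partial_0\psi}\,\frac{1-e(x)}{1-\hat e(x)}\,\delta_0 + \big(\sqrt{\hat u}-\sqrt{u}\big), \]
where $u = \mu_0(1-\mu_0)\mu_1(1-\mu_1)$, $\hat u$ is its plug-in, and $\widehat{\partial_a\psi}$ denotes the derivatives evaluated at $\boldsymbol{\hat\eta}$. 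I then apply a second-order Taylor expansion $\sqrt{\hat u}-\sqrt{u} = \partial_0\psi\,\delta_0 + \partial_1\psi\,\delta_1 + O(\delta_0^2+\delta_1^2+\delta_0\delta_1)$ (derivatives at the truth) and expand the propensity ratios as in the $\beta$ case. Grouping the $\delta_1$ terms yields $\delta_1\big(\partial_1\psi-\widehat{\partial_1\psi}+\widehat{\partial_1\psi}\,\delta_e/\hat e\big)$; since $\partial_1\psi-\widehat{\partial_1\psi}=O(\delta_0+\delta_1)$ by the Lipschitz continuity of $\partial_1\psi$ (here boundedness away from $0$ and $1$ is essential, as it keeps $\psi$ twice continuously differentiable with bounded Hessian), this is $O(\delta_0\delta_1+\delta_1^2+\delta_e\delta_1)$, and symmetrically for $\delta_0$. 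Summing the contributions gives the stated bound with the additional $\delta_0^2+\delta_1^2$ terms. The delicate point throughout is justifying that all derivatives of $\psi$ remain bounded, which is exactly where the regularity on the range of the outcome regressions is used.
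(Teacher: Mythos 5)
Your proposal is correct and rests on the same idea as the paper's (very terse) argument: the paper proves Lemma S2 by pointing to the $H_{3n}$ step in the proof of Theorem 8, i.e.\ a second-order von Mises expansion in which the first-order Gateaux derivative vanishes by the Neyman orthogonality of the efficient influence functions and only products/squares of nuisance errors survive. Your version simply carries this out explicitly at the conditional level via the identities $\bfE[A(Y-\hat\mu_1(x))\mid X=x]=-e(x)\delta_1$ and $\bfE[(1-A)(Y-\hat\mu_0(x))\mid X=x]=-(1-e(x))\delta_0$, and your bookkeeping (exact cancellation of the linear terms for $\phi_\beta$, and the Lipschitz/Hessian bounds on $\psi$ producing the extra $\delta_0^2+\delta_1^2$ terms for $\phi_\gamma$, under overlap and $\mu_a,\hat\mu_a$ bounded away from $0$ and $1$) checks out.
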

\begin{proof}[Proof of Lemma \ref{lem-s2}.]  
This follows by a similar argument of $H_{3n}$ in the proof of Theorem \ref{thm-4}.  

\end{proof}

\medskip 
Next, we  prove Theorem \ref{thm-5}. 

\begin{proof}[Proof of Theorem \ref{thm-5}.]  
  Similar to the decomposition of $n^{1/2} (\hat \beta_{0} - \beta_{0})$, we can decompose $n^{1/2} (\hat \beta_{\rho} - \beta_{\rho})$ for $\rho > 0$  as
	\begin{align*} n^{1/2} (\hat \beta_{\rho} - \beta_{\rho}) 
	={}&   H_{4n} + H_{5n} + H_{6n}, 
	\end{align*} 
where 
	\begin{align*}
		H_{4n} ={}&  n^{-1/2}  \sum_{i=1}^n \Big \{ \varphi(Y_i,A_i,X_i; \boldsymbol{\eta})  - \bfE[   \varphi(Y_i,A_i,X_i; \boldsymbol{\eta})]    \Big \}, \\
		H_{5n} ={}&  n^{-1/2} \sum_{i=1}^n \Big \{ \varphi(Y_i,A_i,X_i; \boldsymbol{\hat \eta})  - \varphi(Y_i,A_i,X_i; \boldsymbol{\eta})   - \bfE \left [ \varphi(Y_i,A_i,X_i; \boldsymbol{\hat \eta})  -\varphi(Y_i,A_i,X_i; \boldsymbol{\eta}) \right ]  \Big \},   \\
	H_{6n} ={}&   n^{1/2} \bfE \left [ \varphi(Y,A,X; \boldsymbol{\hat \eta})  -\varphi(Y,A,X; \boldsymbol{\eta}) \right ]. 
	\end{align*}
  By the central limit theorem for independent and identically distributed random variables,
   \[  H_{4n}     \xrightarrow{d} N(0,  \sigma_{\rho}^2),  \]
  where $\sigma_{\rho}^2 = \mathbb{V}[ \varphi(Y, A, X; \boldsymbol{\eta}) ]$.  Then, it suffices to show that $H_{5n} = o_{\P}(1)$ and $H_{6n} = o_{\P}(1)$.

By a similar argument of $H_{2n}$ in the proof of Theorem \ref{thm-4},  we have  $H_{5n} = o_{\P}(1).$  
Next, we discuss $H_{6n}$. Different from $H_{3n}$, $H_{6n}$ is not a continuous function of $\boldsymbol{\eta}$ and consequently, we cannot resort to the Taylor expansion in terms of the Gateaux derivative. To proceed, we further decompose $H_{6n}$ as 
	\[       H_{6n} = H_{6n,1} + H_{6n,2},      \]
where 
	\begin{align*}
		H_{6n, 1} ={}&  n^{1/2} \bfE \Big [  \mathbb{I}\{g(\boldsymbol{\hat \eta}, \rho) \geq 0 \} \cdot \Big \{ \Big ( \phi_\beta(Y,A,X; \boldsymbol{\hat \eta})  -  \rho \cdot  \phi_\gamma(Y,A,X; \boldsymbol{\hat \eta}) \Big ) \\
		 {}& - \Big ( \phi_\beta(Y,A,X; \boldsymbol{\eta})  -  \rho \cdot  \phi_\gamma(Y,A,X; \boldsymbol{\eta}) \Big )   \Big \}  \Big ], \\  
		H_{6n, 2} ={}&  n^{1/2} \bfE \Big [  \Big \{ \mathbb{I}(g(\boldsymbol{\hat \eta}, \rho) \geq 0) -  \mathbb{I}(g(\boldsymbol{ \eta}, \rho) \geq 0)  \Big \} \\
  {}& \times \Big \{ \phi_\beta(Y,A,X; \boldsymbol{\eta})  -  \rho \cdot  \phi_\gamma(Y,A,X; \boldsymbol{\eta})  \Big \} \Big ]. 
	\end{align*}

Due to sample splitting, we can take $\boldsymbol{\hat \eta}$ as fixed function of $X$ without loss of generality, then by Lemma \ref{lem-s2} and the Cauchy–Schwarz inequality,    
	 \begin{align*}
	 	H_{6n, 1} &={} n^{1/2}  \bfE \Big [    \mathbb{I}\{g(\boldsymbol{\hat \eta}, \rho) \geq 0 \} \\
   {}& \times \bfE \Big \{ \Big ( \phi_\beta(Y,A,X; \boldsymbol{\hat \eta})  -  \rho \cdot  \phi_\gamma(Y,A,X; \boldsymbol{\hat \eta}) \Big )  - \Big ( \phi_\beta(Y,A,X; \boldsymbol{\eta})  -  \rho \cdot  \phi_\gamma(Y,A,X; \boldsymbol{\eta}) \Big ) \Big | X  \Big \}     \Big ] \\
		\lesssim{}&     n^{1/2}  \bfE \Big [    \mathbb{I}\{g(\boldsymbol{\hat \eta}, \rho) \geq 0 \} \cdot  \Big \{ (\hat e(X) - e(X)) (\hat \mu_1(X) - \mu_1(X)) \\
  {}& + (\hat e(X) - e(X)) (\hat \mu_0(X) - \mu_0(X))    + (\hat \mu_1(X) - \mu_1(X))  (\hat \mu_0(X) - \mu_0(X)) \\
   {}&+ (\hat \mu_1(X) - \mu_1(X))^2 +   (\hat \mu_0(X) - \mu_0(X))^2  \Big \}  \Big ]  \\
		 \leq{}& n^{1/2}  \bfE^{1/2} \Big [ \mathbb{I}\{g(\boldsymbol{\hat \eta}, \rho) \geq 0 \} \Big ] \\
   {}& \times \bfE^{1/2} \Big [ \Big \{ (\hat e(X) - e(X)) (\hat \mu_1(X) - \mu_1(X))  \\
   {}&+ (\hat e(X) - e(X)) (\hat \mu_0(X) - \mu_0(X))     + (\hat \mu_1(X) - \mu_1(X))  (\hat \mu_0(X) - \mu_0(X)) \\
   {}&+ (\hat \mu_1(X) - \mu_1(X))^2 +   (\hat \mu_0(X) - \mu_0(X))^2  \Big \}^2  \Big ]  \\
		 	 \lesssim{}&  n^{1/2} O_{\P} \Big (  || \hat e(X) - e(X)  ||_2  \times \{ || \hat \mu_1(X) - \mu_1(X) ||_2  +  || \hat \mu_0(X) - \mu_0(X) ||_2  \}    \Big )  \\ 
{}& + n^{1/2} \cdot  O_{\P} \Big (   || \hat \mu_1(X) - \mu_1(X) ||_2  || \hat \mu_0(X) - \mu_0(X) ||_2  \Big ) \\
{}& + n^{1/2} \cdot  O_{\P} \left (   || \hat \mu_1(X) - \mu_1(X) ||_2^2  +  || \hat \mu_0(X) - \mu_0(X) ||_2^2  \right ),
	 \end{align*}
which leads to $H_{6n,1} =o_{\P}(1)$ by Conditions \ref{con-1} and \ref{con-3}.

For $H_{6n, 2}$,  we have   	
	\begin{align*}
		 H_{6n, 2} 
		={}& n^{1/2} \bfE \Big [  \Big \{ \mathbb{I}(g(\boldsymbol{\hat \eta}, \rho) \geq 0) -  \mathbb{I}(g(\boldsymbol{ \eta}, \rho) \geq 0)  \Big \} \cdot g(\boldsymbol{\eta}, \rho) \Big ] \\
     {}& + n^{1/2} \bfE \Big [  \Big \{ \mathbb{I}(g(\boldsymbol{\hat \eta}, \rho) \geq 0) -  \mathbb{I}(g(\boldsymbol{ \eta}, \rho) \geq 0)  \Big \}  \times \Big \{ \phi_\beta(Y,A,X; \boldsymbol{\eta})  -  \rho \cdot  \phi_\gamma(Y,A,X; \boldsymbol{\eta}) - g(\boldsymbol{\eta}, \rho)  \Big \} \Big ] \\
  ={}& n^{1/2} \bfE \Big [  \Big \{ \mathbb{I}(g(\boldsymbol{\hat \eta}, \rho) \geq 0) -  \mathbb{I}(g(\boldsymbol{ \eta}, \rho) \geq 0)  \Big \} \cdot g(\boldsymbol{\eta}, \rho) \Big ] \\
		\lesssim{}& n^{1/2} \bfE \Big [  \Big | \mathbb{I}(g(\boldsymbol{\hat \eta}, \rho) \geq 0) -  \mathbb{I}(g(\boldsymbol{ \eta}, \rho) \geq 0)  \Big |  \cdot \Big | g(\boldsymbol{\eta}, \rho) \Big|  \Big ]   \\
  	\lesssim{}& n^{1/2} \cdot \bfE \left [ \mathbb{I}\Big ( \big | g(\boldsymbol{\eta}, \rho) \big | \leq  \big | g(\boldsymbol{\hat \eta}, \rho) -  g(\boldsymbol{ \eta}, \rho)  \Big )  \big | \cdot  \Big | g(\boldsymbol{\eta}, \rho) \Big|     \right ]  \\
		\lesssim{}& n^{1/2} \cdot || g(\boldsymbol{\hat \eta}, \rho) -  g(\boldsymbol{ \eta}, \rho) ||_{\infty} \cdot \bfE \left [ \mathbb{I}\Big ( \big | g(\boldsymbol{\eta}, \rho) \big | \leq  \big | g(\boldsymbol{\hat \eta}, \rho) -  g(\boldsymbol{ \eta}, \rho)  \big |   \Big )   \right ]  \\
		\lesssim{}& n^{1/2} || g(\boldsymbol{\hat \eta}, \rho) -  g(\boldsymbol{ \eta}, \rho) ||_{\infty}^{1 + \alpha} \\
		={}&  o_{\P}(1). 
	\end{align*} 
where the second equality follows from  $\bfE[ \phi_\beta(Y,A,X; \boldsymbol{\eta})  -  \rho \cdot  \phi_\gamma(Y,A,X; \boldsymbol{\eta})  \mid X ] = g(\boldsymbol{\eta}, \rho)$, the second 
inequality follows from Lemma \ref{lem-s1}, and the fourth inequality holds by Condition \ref{con-2} (the margin condition), and the last equality holds by Condition \ref{con-3}(b).  

\end{proof}


\subsection{Proof of Proposition \ref{prop-10}}
\begin{proof}[Proposition \ref{prop-10}]
We first prove Proposition \ref{prop-10}(a), and then  prove Proposition \ref{prop-10}(b). 

\medskip
\emph{Proposition \ref{prop-10}(a)}.  
By the definition of $\rho(x)$, under Assumption \ref{assumption-1}, we have 
    \begin{align*}
        \rho(x) ={}& \frac{ \E[ Y^0 Y^1 \mid X =x ] - \E[ Y^0 \mid X=x ] \cdot  \E[ Y^1 \mid X=x ] } {\sqrt{ \V(Y^0 \mid X=x) \cdot  \V(Y^1 \mid X=x) } } \\
         ={}&    \frac{\P(Y^0=1, Y^1=1 \mid X=x) - \mu_0(x)\mu_1(x) }{\sqrt{  \mu_0(x)(1-\mu_0(x) ) \mu_1(x)(1-\mu_1(x)  }}. 
    \end{align*} 
 Then Assumption \ref{assumption-2} implies that 
  \begin{equation}    \label{eq-s4}
      \mu_0(x)  \mu_1(x) +  \rho_l(x) \sqrt{m(x)}  \leq \pi_{11}(x) = \P(Y^0=1, Y^1=1 \mid X=x)  \leq  \mu_0(x)  \mu_1(x) +  \rho_u(x) \sqrt{m(x)}.    
     \end{equation} 
By a similar proof of Theorem \ref{thm-1}, 
     \[       \mu_0(x)  \mu_1(x) +  \rho_u(x) \sqrt{m(x)} \leq 1,  \]
and we must have $ \mu_0(x)  \mu_1(x) +  \rho_u(x) \sqrt{m(x)} $ is larger than zero as $\pi_{11}(x)$ is a probability.  
Furthermore, by the Fr\'{e}chet--Hoeffding inequality, 
       \begin{align*}
           \pi_{11}(x) =  \P(Y^0=1, Y^1=1\mid X=x) \geq{}& \max\{\mu_0(x) + \mu_1(x) -1, 0 \} \\
           \pi_{11}(x)  =   \P(Y^0=1, Y^1=1\mid X=x) \leq{}& \min\{\mu_0(x),  \mu_1(x)\},
       \end{align*}  
This leads to the sharp bounds on $\pi_{11}(x)$. 
The sharp bounds on $\pi_{01}(x)$ and $ \pi_{00}(x)$ follows immediately by noting that 
 \begin{align*}
     \pi_{01}(x) ={}& \P(Y^1 =1\mid X=x) - \P(Y^0=1, Y^1=1 \mid X=x) \\ 
     ={}& \mu_1(x) - \P(Y^0=1, Y^1=1\mid X=x),\\
     \pi_{00}(x) ={}& \P(Y^0=0\mid X=x) - \P(Y^0=0, Y^1 = 1\mid X=x) \\
         ={}& (1-\mu_0(x)) - \pi_{01}(x) \\
         ={}& (1-\mu_0(x)) - \mu_1(x) + \P(Y^0=1, Y^1=1 \mid X=x). \end{align*}

  \medskip
\emph{Proposition \ref{prop-10}(b)}.  
Under additional Assumption \ref{assumption-3}, we can verify that 
         \[       \mu_0(x)  \mu_1(x) +  \rho_l(x) \sqrt{m(x)} \geq 0,    \]
           \[       \mu_0(x)  \mu_1(x) +  \rho_u(x) \sqrt{m(x)} \leq  \min\{\mu_0(x),  \mu_1(x)\},    \]
and 
  \[         \mu_0(x)  \mu_1(x) +  \rho_l(x) \sqrt{m(x)}  \geq \mu_0(x) + \mu_1(x) -1.  \]
This implies the bounds on $\pi_{11}(x)$. Similarly, the bounds on $\pi_{01}(x)$ and $\pi_{00}(x)$ can be obtained. 

\end{proof}

\end{document}